\tikzset{>=latex}
\newcommand{\plotSimplex}{
    \draw [draw=none, fill=verylightgray] (0,0)--(0.5,0.866)--(1,0)--(0,0);
    
    \draw[lightgray!80, thin] (0.75,0.433)--(0,0);  
    \draw[lightgray!80, thin] (0.25,0.433)--(1,0);  
    \draw[lightgray!80, thin] (0.5,0)--(0.5,0.866);  

    \draw [thick,black!50] (0,0)--(0.5,0.866)--(1,0)--(0,0);  
        
    \node[anchor=north] at (0,0) {$(1,0,0)$};
    \node[anchor=north] at (1,0) {$(0,1,0)$};
    \node[anchor=south] at (0.5,0.866) {$(0,0,1)$};
}
\def\ax{0}
\def\ay{0}
\def\bx{0.5}
\def\by{0.866}
\def\cx{1}
\def\cy{0}
\newcommand{\plotSimplexWithNonExtremeBox}{
    \draw [draw=none, fill=verylightgray] (0,0)--(0.5,0.866)--(1,0)--(0,0);

    \draw [draw=none, fill=black!14] (\ax+1/2*\bx-1/2*\ax,\ay+1/2*\by-1/2*\ay)--(\ax+1/2*\bx-1/2*\ax,\ay+1/2*\by-1/2*\ay)--(\bx+1/2*\cx-1/2*\bx,\by+1/2*\cy-1/2*\by)--(\bx+1/2*\cx-1/2*\bx,\by+1/2*\cy-1/2*\by)--(\cx+1/2*\ax-1/2*\cx,\cy+1/2*\ay-1/2*\cy)--(\cx+1/2*\ax-1/2*\cx,\cy+1/2*\ay-1/2*\cy);

    \draw[lightgray!80, thin] (0.75,0.433)--(0,0);  
    \draw[lightgray!80, thin] (0.25,0.433)--(1,0);  
    \draw[lightgray!80, thin] (0.5,0)--(0.5,0.866);  

    \draw [thick,black!50] (0,0)--(0.5,0.866)--(1,0)--(0,0);  
        
    \node[anchor=north] at (0,0) {$(1,0,0)$};
    \node[anchor=north] at (1,0) {$(0,1,0)$};
    \node[anchor=south] at (0.5,0.866) {$(0,0,1)$};
}
\tikzstyle{aggregate}=[draw, thick, minimum size=8,inner sep=0,regular polygon,regular polygon sides=3]
\tikzstyle{voter}=[draw, thick, minimum size=7,inner sep=0,circle]
\tikzstyle{label}=[text opacity=1]
\definecolor{color1}{HTML}{d97904}
\colorlet{color2}{blue!50!black}
\definecolor{verylightgray}{HTML}{f1f1f1}
\DeclareMathOperator*{\argmax}{arg\,max}
\DeclareMathOperator*{\argmin}{arg\,min}
\DeclareMathOperator*{\med}{med}
\newcommand{\F}{\mathcal{F}}
\newcommand{\V}{\mathcal{V}}
\newcommand{\G}{\mathcal{G}}
\newcommand{\A}{\mathcal{A}}
\newcommand{\Amean}{\mu}
\newcommand{\N}{\mathbb{N}}
\newcommand{\multivoter}[2]{{#2 \times #1}}
\newcommand{\p}{p}   
\newcommand{\symvote}[1]{{\textstyle p(#1)}}
\newcommand{\ellone}[2]{\lVert#1 - #2\rVert_1}
\newcommand{\ellinf}[2]{\lVert#1 - #2\rVert_\infty}
\newcommand{\I}{\mathcal{I}}
\newcommand{\eps}{\varepsilon}
    \newtheorem*{rep@theorem}{\rep@title}
        \newcommand{\newreptheorem}[1]{%
        \newenvironment{reptheorem}[1]{%
            \def\rep@title{\Cref{##1}}%
            \begin{rep@theorem}%
        }{\end{rep@theorem}}%
     }
\renewcommand\paragraph{\@startsection{paragraph}{4}{\z@}%
                                    {1.25ex \@plus1ex \@minus.2ex}%
                                    {-1em}%
                                    {\normalfont\normalsize\bfseries}}
\newtheorem{theorem}{Theorem}
\newtheorem{maintheorem}{Theorem}
\newtheorem{lemma}{Lemma}
\newtheorem{corollary}{Corollary}
\newtheorem{proposition}{Proposition}
\newtheorem{observation}{Observation}
\declaretheoremstyle[%
  headfont=\normalfont\itshape,%
  qed=\qedsymbol%
]{proofsketchstyle} 
\declaretheorem[name={Proof sketch},style=proofsketchstyle,unnumbered]{proof_sketch}
\declaretheoremstyle[%
  headfont=\normalfont\itshape,%
  qed=$\blacksquare$,%
  postheadspace = -2.0em, 
]{proofofclaimstyle} 
\declaretheorem[name={Proof of claim},style=proofofclaimstyle,unnumbered,postheadhook=\begin{adjustwidth}{2.5em}{0pt},prefoothook=\end{adjustwidth}]{proof_of_claim}
\Crefname{lemmaenumi}{Lemma}{Lemmas}
    \setlist[enumerate,1]{
        label={\textit{(\roman*)}},
        ref={\thelemma(\roman*)}
    }%
\newcommand{\appendColon}[1]{\textbf{#1:}}
\newlist{CaseTree}{enumerate}{9}
\setlist[CaseTree,1]{label=Case \arabic*, leftmargin = 19px, itemindent=8mm, font=\appendColon}
\setlist[CaseTree,2]{label*=.\arabic*,leftmargin = *, itemindent=10mm, font=\appendColon}
\setlist[CaseTree,3]{label*=.\arabic*,leftmargin = *, itemindent=12mm, font=\appendColon}
\setlist[CaseTree,4]{label*=.\arabic*,leftmargin = *, itemindent=14mm, font=\appendColon}
\setlist[CaseTree,5]{label*=.\arabic*,leftmargin = *, itemindent=16mm, font=\appendColon}
\setlist[CaseTree,6]{label*=.\arabic*,leftmargin = *, itemindent=18mm, font=\appendColon}
\setlist[CaseTree,7]{label*=.\arabic*,leftmargin = *, itemindent=20mm, font=\appendColon}
\setlist[CaseTree,8]{label*=.\arabic*,leftmargin = *, itemindent=22mm, font=\appendColon}
\setlist[CaseTree,9]{label*=.\arabic*,leftmargin = *, itemindent=24mm, font=\appendColon}
\title{Truthful Budget Aggregation:\\Beyond Moving-Phantom Mechanisms }
\date{}
\author[1]{Mark de Berg}
\author[2]{Rupert Freeman}
\author[1]{Ulrike Schmidt-Kraepelin}
\author[1]{Markus Utke}
\affil[1]{TU Eindhoven, The Netherlands}
\affil[2]{Darden School of Business, University of Virginia, VA, USA}
\begin{document}

\maketitle
\vspace*{-1.5cm}

\begin{abstract}
    We study a budget-aggregation setting in which a number of voters report their ideal distribution of a budget over a set of alternatives, and a mechanism aggregates these reports into an allocation. Ideally, such mechanisms are truthful, i.e., voters should not be incentivized to misreport their preferences. 
    For the case of two alternatives, the set of mechanisms that are truthful and additionally meet a range of basic desiderata (anonymity, neutrality, and continuity) exactly coincides with the so-called moving-phantom mechanisms, but whether this space is richer for more alternatives was repeatedly stated as an open question. We answer this question in the affirmative by presenting a class of truthful mechanisms that are not moving-phantoms but satisfy the three properties.
    Since moving-phantom mechanisms can only provide limited fairness guarantees (measured as the worst-case distance to a fair share solution), one motivation for broadening the class of truthful mechanisms is the hope for improved fairness guarantees. We dispel this hope by showing that lower bounds holding for the class of moving-phantom mechanisms extend to all truthful, anonymous, neutral, and continuous mechanisms. 
\end{abstract}

\section{Introduction}
Consider a scenario where a perfectly divisible resource, such as money or time, needs to be distributed among various alternatives while taking the preferences of a group of voters into account. This task, known as \emph{portioning}, lies at the heart of participatory budgeting, a voting model gaining increasing attention thanks to its pivotal role in civic participation initiatives \citep{aziz2021participatory,cabannes2004participatory}. We study a variant called \emph{budget aggregation} \citep{lindner2008midpoint, goel2019knapsack, freeman2021truthful}, where voters express their favorite allocation over alternatives (also called \emph{projects}), and their dissatisfaction with an outcome is measured by its $\ell_1$-distance from their ideal distribution. 
Unlike many traditional voting scenarios, budget aggregation with $\ell_1$-utilities opens the door to mechanisms that incentivize truthfulness among voters. In fact, \cite{freeman2021truthful} present a whole class of \emph{truthful} mechanisms called \emph{moving-phantom mechanisms}. Loosely speaking, moving-phantom mechanisms are an extension of (the neutral subclass of) \emph{generalized median rules} \citep{moulin1980strategy}, characterized as the only truthful mechanisms in the two-alternative setting meeting the criteria of anonymity and continuity \citep{moulin1980strategy,masso2011strategy}, to elections with more than two alternatives. However, the question of whether moving-phantom mechanisms are the \emph{only} truthful, continuous, anonymous, and neutral\footnote{
Informally, a mechanism is \textit{truthful} if a voter can never decrease the $\ell_1$-distance between its favorite allocation and the mechanism's outcome by reporting an allocation that is not its favorite.
A mechanism is \textit{anonymous} (\textit{neutral}, respectively) if the outcome does not depend on the identity of the voters (alternatives, respectively), and it is \textit{continuous} if it is continuous according to the standard definition, when interpreted as a function.} mechanisms in the general case has remained open and was repeatedly mentioned in recent literature \citep{freeman2021truthful, caragiannis2022truthful, freeman2024project,brandt2024optimal}.
We resolve this question.

\begin{reptheorem}{thm:existence_truthful_non_phantoms}[informal]
    There exists a budget-aggregation mechanism that is truthful, anonymous, neutral, and continuous but not a moving-phantom mechanism.
\end{reptheorem}

To prove \Cref{thm:existence_truthful_non_phantoms}, we define the class of \emph{cutoff-phantom} mechanisms by combining moving-phantom mechanisms with a cutoff function that redistributes budget away from any project that the moving-phantom mechanism assigns more than a certain threshold share of the budget. Cutoff-phantoms are well defined for any moving-phantom mechanism and any threshold at least $\sfrac{1}{2}$, but in general both components need to be chosen carefully to preserve truthfulness. We identify one novel moving-phantom mechanism, \textsc{GreedyMax}, for which all of the corresponding cutoff-phantoms (one per choice of threshold) are truthful.

While cutoff-phantoms significantly expand the class of known (anonymous, neutral, continuous) truthful mechanisms, they fail to satisfy unanimity, which prescribes that, whenever the voters all agree on their most preferred distribution, the mechanism should output that distribution. What happens when we add unanimity to our list of properties? In Section~\ref{sec:truthful_unanimous_non_phantoms}, we provide a partial answer to this question, by giving a mechanism that is not a moving-phantom mechanism but is unanimous and truthful (as well as anonymous, neutral, and continuous) for instances with two voters and three alternatives. While primarily a proof of concept, this result suggests that the class of truthful budget-aggregation mechanisms satisfying other desirable properties might not allow for a concise description.

One motivation to search for alternative truthful mechanisms stems from the desire for mechanisms that are not only truthful but also fair. To this end, one might consider the \textsc{mean} mechanism---the mechanism that averages the voters' reports for every alternative---as a benchmark. This mechanism appears to be particularly appealing since it is equivalent to assigning each of the $n$~voters their equal \textit{fair share} of the budget and letting them allocate this budget according to their ideal distribution. However, doing so might not be in the voter's best interest: it is well-known and intuitive that mean aggregation incentivizes voters to extremize their reported preference in order to bring the mean closer to their true preference. In other words, the mean mechanism violates truthfulness.
A natural question, then, is how much fairness needs to be compromised in order to restore truthfulness.

To quantify (violations of) fairness, \cite{caragiannis2022truthful} proposed measuring the worst-case deviation, in terms of $\ell_1$-distance, of a mechanism's outcome from the mean. They introduced the \textsc{PiecewiseUniform} mechanism, ensuring an $\ell_1$-distance of $\sfrac{2}{3} + \varepsilon$ for some constant~$\varepsilon < 10^{-5}$, for the case of three alternatives.
\cite{freeman2024project} presented the \textsc{Ladder} mechanism, establishing an upper bound of $\sfrac{2}{3}$ for three alternatives and non-trivial bounds for up to six alternatives. They demonstrate that the ladder mechanism results in a worst-case $\ell_{\infty}$-distance from the mean of $\frac{m-1}{2m}$, where $m$ is the number of alternatives. While a lower bound provided by \cite{caragiannis2022truthful} indicates the tightness of these results within the class of moving-phantom mechanisms, the best lower bounds for $\ell_1$-approximation (and $\ell_{\infty}$-approximation, respectively) for three alternatives within the class of all truthful mechanisms stand at $\sfrac{1}{2}$ (and $\sfrac{1}{4}$, respectively).

In this paper, we show lower bounds that match the best known lower bounds for the class of moving-phantom mechanisms. These bounds are known to be tight for $\ell_\infty$-approximation and for $\ell_1$-approximation with $m=3$, but a gap remains for $\ell_1$ with $m>3$.

\begin{reptheorem}{thm:truthful_uniform}[informal] Let $m$ be the number of alternatives. There exists a budget-aggregation instance for which every truthful, anonymous, neutral, and continuous mechanism returns an outcome with $\ell_{\infty}$-distance of $\frac{m-1}{2m}$ and $\ell_1$-distance of $\frac{m-1}{m}$ from the mean.
\end{reptheorem}

The remainder of the paper is organized as follows. After discussing related work and introducing the problem formally,  \Cref{sec:existence_truthful_non_phantom} is devoted to proving \Cref{thm:existence_truthful_non_phantoms}. We first introduce the \textsc{GreedyMax} moving-phantom mechanism and show that it can be combined with any constant threshold greater than $\sfrac{1}{2}$ to yield a truthful mechanism that is not a moving-phantom. We then define the class of \emph{slow} moving-phantom mechanisms,
and show that every mechanism in this class can be combined with \emph{some} threshold function to yield a truthful, non-moving-phantom mechanism. We construct the unanimous and truthful non-phantom mechanism for $n=2$ and $m=3$ in \Cref{sec:truthful_unanimous_non_phantoms}, and prove \Cref{thm:truthful_uniform} in \Cref{sec:lowerBound}. 

\paragraph{Related Work.}
Our work contributes to a growing literature on budget aggregation. \citet{lindner2008midpoint} and \citet{goel2019knapsack} study the rule that maximizes utilitarian welfare, the neutral version of which turns out to be the unique Pareto-efficient moving-phantom mechanism~\citep{freeman2021truthful}. \citet{caragiannis2022truthful} introduce the paradigm of mean approximation for the budget aggregation problem, which is built upon by~\citet{freeman2024project}. \citet{brandt2024optimal} show that no mechanism can be truthful, Pareto-efficient, and proportional,\footnote{Proportionality~\citep{freeman2021truthful} says that, on instances where every voter prefers to spend the entire budget on a single alternative, the mechanism should output the mean of the votes.} generalizing a result of \citet{freeman2021truthful} that held only for moving-phantom mechanisms. \citet{elkind2023settling} axiomatically study several budget-aggregation mechanisms, and find that the mean performs well relative to the other rules they consider. \citet{goyal2023low} work in a similar setting to ours  (except that every alternative's funding is capped by its predefined cost) and study mechanisms with low sample complexity in terms of their distortion.

For the special case of two alternatives, it is known that truthful and anonymous budget-aggregation mechanisms are characterized by generalized median rules~\citep{moulin1980strategy,masso2011strategy}. Generalized median rules are parameterized by $n+1$ ``phantom'' votes, with the output being the median of these phantom votes and the $n$ submitted votes. Several papers have used generalized median mechanisms to truthfully approximate the mean in the two-alternative setting~\citep{renault2005protecting,renault2011assessing,caragiannis2016truthful,jennings2023new}, with the optimal approximation stated explicitly by~\cite{caragiannis2022truthful}. However, for higher numbers of alternatives,~\cite{caragiannis2022truthful} obtained a lower bound for general truthful mechanisms that diverged from their lower bound for moving-phantom mechanisms. Our \Cref{thm:truthful_uniform} closes this gap.

Beyond budget aggregation, portioning has been studied with other input models including ordinal preferences~\citep{airiau2023portioning}, dichotomous preferences~\citep{bogomolnaia2005collective,brandl2021distribution,michorzewski2020price}, or more general cardinal utility functions over alternatives~\citep{fain2016core,wagner2023strategy}. We refer the reader to the survey of~\citet{aziz2021participatory} for additional discussion of the participatory budgeting literature.

Finally, we remark that a weaker version of our \Cref{thm:existence_truthful_non_phantoms} was recently independently obtained by \citet[Appendix D]{brandt2024optimal}. Specifically, the authors show that for the case of a single voter, there exists a truthful, anonymous, neutral, and continuous mechanism that is not a moving-phantom mechanism. Such a mechanism can then be extended to a mechanism satisfying the four properties by concatenating it with any moving-phantom mechanism for the cases of more voters. However, this seems unsatisfactory, since such a mechanism is only not a moving-phantom mechanism for the case of one voter. Our result is significantly stronger since we design mechanisms that are not moving-phantom mechanisms for any number of voters and alternatives.

\section{Preliminaries} \label{sec:prelim}

For $n \in \N$, we let $[n] = \{k \in \N \mid 1 \le k \le n\}$. For $k \in \N$, we write $\Delta^{(k)} = \{ v \in [0,1]^{k+1} \mid \lVert v \rVert_1 = 1 \}$ for the standard $k$-simplex. 
In the setting of budget aggregation we have $n$ \textit{voters} deciding how to distribute a budget of~1 over $m \ge 2$ \textit{alternatives}.
Each voter $i \in [n]$ reports an allocation $\p_i \in \Delta^{(m-1)}$ as their \textit{vote} and all votes together form the \textit{profile} $P = (\p_1, \dots, \p_n)$. We denote by $\mathcal{P}_{n,m} = (\Delta^{(m-1)})^n$, the set of all profiles with $n$ voters and $m$ alternatives.

\paragraph{Budget-Aggregation Mechanisms.} A \emph{budget-aggregation mechanism} $\A$ (sometimes shortened to \emph{mechanism}) is a family of functions $\A_{n,m} : \mathcal{P}_{n,m} \rightarrow \Delta^{(m-1)}$, one for every pair $n,m \in N, m \ge 2$, that map each profile $P \in \mathcal{P}_{n,m}$ to an \emph{aggregate} $a = \A_{n,m}(P) \in \Delta^{(m-1)}$. Since $n$ and $m$ are always clear from context, we slightly abuse notation and write $\mathcal{A}$ instead of $\mathcal{A}_{n,m}$.

A natural budget-aggregation mechanism is the \textsc{Mean} mechanism $\Amean(p_1, \dots, p_n) = \frac{1}{n} \sum_{i \in [n]} \p_i.$ 
As discussed earlier, the \textsc{Mean} mechanism is appealing from a fairness standpoint but violates \emph{truthfulness}, which we formalize below.\footnote{Consider two voters with ideal allocations $(1,0)$ and $(0.5,0.5)$. The \textsc{mean} mechanism would allocate $(0.75,0.25)$, but the second voter can enforce their ideal allocation by reporting $(0,1)$. \label{foot:not-truthful}}  For measuring the satisfaction of voters with a certain aggregate, we assume that they evaluate the aggregate with respect to the $\ell_1$-distance to their ideal budget allocation, i.e., the disutility of voter~$i$ for aggregate $a$ is given by $\ellone{\p_i}{a}$.

\paragraph{Truthfulness.} A budget-aggregation mechanism $\A$ is \emph{truthful} if for any $n, m \in \N$ with $m \ge 2$ and any profile $P = (\p_1, \dots, \p_n) \in \mathcal{P}_{n,m}$, voter $i \in [n]$, and misreport $\p_i^\star \in \Delta^{(m-1)}$, the following holds for profile $P^\star = (\p_1, \dots, \p_{i-1}, \p_i^\star, \p_{i+1}, \dots, \p_n)$: $\ellone{\p_i}{\A(P)} \le \ellone{\p_i}{\A(P^\star)}.$

In contrast to many other voting settings, budget aggregation under $\ell_1$-disutilities permits truthful mechanisms that are not dictatorships. In fact, \cite{freeman2021truthful} introduce a whole class of truthful mechanisms, which we define next.

\paragraph{Moving-Phantom Mechanisms.}
A moving-phantom mechanism is defined via so-called phantom systems, one for each number of voters. For $n \in \N$,
a \textit{phantom system} $\F_n = \{f_k \mid k=0,\dots,n\}$ is a set of $(n+1)$ non-decreasing, continuous functions $f_k: [0,1] \rightarrow [0,1]$ with $f_k(0) = 0$, $f_k(1) = 1$ and $f_0(t) \ge \dots \ge f_n(t)$ for all $t \in [0,1]$.
Then, a family of phantom systems $\F = \{\F_n \mid n \in \N\}$ defines the moving-phantom mechanism $\A^\F$: 
For a given profile $P = (\p_1, \dots, \p_n)$ with $p_i = (p_{i,1}, \dots, p_{i,m})$ for any $i \in [n]$, 
let $t^\star \in [0,1]$ be a value for which
$$
    \sum_{j \in [m]} \med(f_0(t^\star), \dots, f_n(t^\star), \p_{1,j}, \dots, \p_{n,j}) = 1, 
$$
where $\med(\cdot)$ denotes the median.\footnote{Here, the median is well-defined since the input always consists of an odd number of values.} The moving-phantom mechanism $\A^\F$ returns the allocation $a=(a_1,\ldots,a_m)$ with
$$
    a_j = \med(f_0(t^\star), \dots, f_n(t^\star), \p_{1,j}, \dots, \p_{n,j})
$$
for each $j \in [m]$. While $t^*$ is not always unique, the induced allocation is and we refer to $t^*$ as a \emph{time of normalization}. \cite{freeman2021truthful} show that any moving-phantom mechanism is truthful. We also refer to moving-phantom mechanisms as \textit{phantom mechanisms}.

\paragraph{Anonymity, Neutrality, and Continuity.} If we would ask whether moving-phantom mechanisms are the only truthful mechanisms, the answer would clearly be no. In particular, there exist truthful mechanisms that treat voters or alternatives highly unequally, or that drastically change their output in response to small changes in the input. Informally, the overarching goal of \Cref{thm:existence_truthful_non_phantoms} is to understand whether moving-phantom mechanisms are the only ``reasonable'' truthful budget-aggregation mechanisms. To make this statement formal, we introduce the properties of anonymity, neutrality, and continuity, which we deem as non-negotiable.

A budget-aggregation mechanism $\A$ is \emph{anonymous} if, for any profile $(\p_1, \dots, \p_n)$ and any permutation $\sigma: [n] \rightarrow [n]$, the following holds:
    $
        \A(\p_1, \dots, \p_n) = \A(\p_{\sigma(1)}, \dots, \p_{\sigma(n)}).
    $

A budget-aggregation mechanism $\A$ is \emph{neutral} if, for any profile $P = (\p_1, \dots, \p_n)$ with $\p_i = (\p_{i,1}, \dots, \p_{i,m})$ for $i \in [n]$, and any permutation $\sigma: [m] \rightarrow [m]$, the following holds. Let $P^\sigma = (\p_1^\sigma, \dots, \p_n^\sigma)$, where $\p_i^\sigma = (\p_{i,\sigma(1)}, \dots, \p_{i,\sigma(m)})$ for $i \in [n]$, and let $\A(P) = (a_1, \dots, a_m)$. Then
    $
        \A(P^\sigma) = (a_{\sigma(1)}, \dots, a_{\sigma(m)}).
    $

We say that a mechanism $\A$ is \emph{continuous} if, for any $n,m \in \N$, the function $\A_{n,m}$ is continuous with respect to the standard definition of a continuous function. 

\cite{freeman2021truthful} show that the class of moving-phantom mechanisms not only satisfies truthfulness, but also meets all of the other criteria introduced above. Moreover any mechanism satisfying the four properties coincides with a moving-phantom mechanism for all two-alternative profiles.

\begin{theorem}[\cite{freeman2021truthful}] \label{thm:phantoms_anonymous_neutral_continuous}\label{thm:characterization_m2}
    Any moving-phantom mechanism is truthful, anonymous, neutral, and continuous. For any mechanism $\A$ that is truthful, anonymous, neutral, and continuous, there exists a moving-phantom mechanism $\A^\F$ such that $\A(P) = \A^{\F}(P)$ for all $n \in \N$ and $P \in \mathcal{P}_{n,2}$.
\end{theorem}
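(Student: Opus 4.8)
The plan is to establish the two halves separately. For the first half (every moving‑phantom mechanism has all four properties), truthfulness is already recorded above, so only anonymity, neutrality, and continuity need argument. Anonymity is immediate: both the normalization equation and the output formula $a_j=\med(f_0(t^\star),\dots,f_n(t^\star),p_{1,j},\dots,p_{n,j})$ depend on the voters only through the symmetric function $\med$, hence are unchanged under permuting the $p_{i,j}$. Neutrality follows by reindexing: if $P^\sigma$ has $i$-th vote $(p_{i,\sigma(1)},\dots,p_{i,\sigma(m)})$, then $\sum_{k\in[m]}\med(f_0(t),\dots,f_n(t),p_{1,\sigma(k)},\dots,p_{n,\sigma(k)})=\sum_{j\in[m]}\med(f_0(t),\dots,f_n(t),p_{1,j},\dots,p_{n,j})$ for every $t$, so $P$ and $P^\sigma$ have the same times of normalization, and the $k$-th output coordinate of $\A^\F(P^\sigma)$ equals $\med(f_0(t^\star),\dots,p_{1,\sigma(k)},\dots)=a_{\sigma(k)}$, as required.

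Continuity is the only substantive point in the first half. Fix $n,m$ and set $G(t,P)=\sum_{j\in[m]}\med(f_0(t),\dots,f_n(t),p_{1,j},\dots,p_{n,j})$; this is jointly continuous in $(t,P)$ and non-decreasing in $t$ because every $f_k$ is. Since $G(0,P)=0$ (the $n+1$ zero phantoms form the lower half of the $2n+1$ values) and $G(1,P)=m>1$, the set $\{t:G(t,P)=1\}$ is a nonempty closed interval on which $G(\cdot,P)\equiv1$; each coordinate median is non-decreasing in $t$ and they sum to a constant there, so each is constant, which re-proves that the induced allocation is well-defined. For continuity of $P\mapsto\A^\F(P)$, take $P^{(\ell)}\to P$ with times of normalization $t_\ell$; by compactness pass to a subsequence with $t_\ell\to t$, whence $G(t,P)=1$ by continuity, so $t$ is a time of normalization for $P$, and continuity of the coordinate medians gives $\A^\F(P^{(\ell)})\to\A^\F(P)$ along this subsequence. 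Since every subsequence admits such a sub-subsequence with the same limit, $\A^\F(P^{(\ell)})\to\A^\F(P)$.

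For the second half, fix $n$ and restrict to $m=2$, identifying a vote $(p_{i,1},1-p_{i,1})$ with its first coordinate in $[0,1]$; since $\ellone{(p,1-p)}{(a,1-a)}=2|p-a|$, voters have (symmetric) single-peaked preferences with peak $p_{i,1}$, and truthfulness of $\A$ becomes strategy-proofness on this domain. Invoking the characterization of strategy-proof, anonymous, continuous rules on $[0,1]^n$ \citep{moulin1980strategy,masso2011strategy}, there are phantoms $\alpha_0\le\cdots\le\alpha_n\in[0,1]$ (depending on $n$) with $\A(P)_1=\med(p_{1,1},\dots,p_{n,1},\alpha_0,\dots,\alpha_n)$ and $\A(P)_2=1-\A(P)_1$. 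Applying neutrality to the transposition of the two alternatives, $\A(P^\sigma)_1=1-\A(P)_1$ while $P^\sigma$ has peaks $1-p_{i,1}$; since the median commutes with $x\mapsto 1-x$, this forces the generalized median rules with phantom multisets $\{\alpha_k\}_k$ and $\{1-\alpha_k\}_k$ to coincide on $[0,1]^n$, hence (by uniqueness of the phantom multiset of a generalized median rule) $\alpha_k=1-\alpha_{n-k}$ for all $k$.

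It remains to exhibit a family of phantom systems realizing this. For each $n$, put $\beta_k=\alpha_{n-k}$ (so $\beta_0\ge\cdots\ge\beta_n$, still symmetric about $\tfrac12$) and let $f_k$ be the piecewise-linear map through $(0,0)$, $(\tfrac12,\beta_k)$, and $(1,1)$; these are non-decreasing, continuous, satisfy $f_k(0)=0$, $f_k(1)=1$, and $f_0\ge\cdots\ge f_n$ on all of $[0,1]$, so $\F_n=\{f_0,\dots,f_n\}$ is a phantom system. For any $P\in\mathcal{P}_{n,2}$, symmetry of $\{\beta_k\}_k$ about $\tfrac12$ together with $p_{i,2}=1-p_{i,1}$ gives $\med(\beta_0,\dots,\beta_n,p_{1,2},\dots,p_{n,2})=1-\med(\beta_0,\dots,\beta_n,p_{1,1},\dots,p_{n,1})$, so the two coordinate medians at $t=\tfrac12$ sum to $1$; hence $t^\star=\tfrac12$ is a time of normalization and $\A^\F(P)_1=\med(\beta_0,\dots,\beta_n,p_{1,1},\dots,p_{n,1})=\med(\alpha_0,\dots,\alpha_n,p_{1,1},\dots,p_{n,1})=\A(P)_1$, with $\A^\F(P)_2=1-\A^\F(P)_1=\A(P)_2$. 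The main obstacles are the continuity/well-definedness argument in the first half (coping with non-uniqueness of $t^\star$) and, in the second half, both the verification that $t=\tfrac12$ is always a time of normalization of the constructed system---equivalently, that the piecewise-linear construction correctly encodes neutrality---and the care needed to apply the cited strategy-proofness characterization to the specific ($\ell_1$-induced) preference domain.
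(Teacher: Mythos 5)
This theorem is imported from \citet{freeman2021truthful} and is not proven in the present paper, so there is no in-paper proof to compare against; what follows is an assessment of your reconstruction on its own terms. Your argument is sound and follows what is essentially the original route. The first half is verified directly: anonymity and neutrality are immediate from the symmetry of the median and reindexing, and your continuity argument correctly handles the non-uniqueness of $t^\star$ by observing that each coordinate median is non-decreasing in $t$ and the sum is constant on the normalization interval (so the allocation is well-defined), then extracting convergent subsequences of normalization times. For the second half, the reduction of $m=2$ to a one-dimensional strategy-proofness problem, the extraction of the symmetry constraint $\alpha_k = 1-\alpha_{n-k}$ from neutrality via uniqueness of the phantom multiset, and the explicit piecewise-linear phantom system through $(\tfrac12,\beta_k)$ with the verification that $t^\star=\tfrac12$ always normalizes are all correct.

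The one place that deserves more care than you give it---and which you do flag---is the application of the Moulin/Mass\`o--Moreno de Barreda characterization. The $\ell_1$-induced preferences on two alternatives are \emph{symmetric} single-peaked, a strict subdomain of the single-peaked domain Moulin analyzed. On the smaller domain, strategy-proofness is a weaker constraint, and Mass\`o and Moreno de Barreda show that strategy-proofness alone there admits rules beyond generalized medians (``disturbed minmax'' rules). To conclude that $\A$ restricted to $m=2$ is a generalized median rule with $n+1$ phantoms, you must argue that adding continuity and anonymity eliminates the disturbances and collapses the class back to generalized medians. This is precisely the non-trivial bridge, and while the overall architecture of your proof is correct, it is not complete without that step being spelled out.
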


In \Cref{sec:existence_truthful_non_phantom}, we show that the second part of \Cref{thm:characterization_m2} does not extend to more alternatives. 
All missing proofs can be found in the appendix.

\section{Truthful Non-Phantom Mechanisms}
\label{sec:existence_truthful_non_phantom}

We now formalize and prove \Cref{thm:existence_truthful_non_phantoms}, which states that there exist budget-aggregation mechanisms that are truthful, anonymous, neutral, and continuous, but not moving-phantom mechanisms.

\begin{maintheorem}[formal]
\label{thm:existence_truthful_non_phantoms}
    There exists a budget-aggregation mechanism $\A$ that is truthful, anonymous, neutral, and continuous, but for no $n,m \in \N$ with $m \ge 3$, does there exist a phantom system $\F_n$ with $\A(P)=\A^{\F_n}(P)$ for all $P \in \mathcal{P}_{n,m}$.
\end{maintheorem}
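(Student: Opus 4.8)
The plan is to build the witness mechanism $\A$ as a two-stage process: first run a carefully chosen moving-phantom mechanism, which I will call \textsc{GreedyMax}, and then post-process its output with a \emph{cutoff} operator. Concretely, I would specify a phantom system $\F^{\mathrm{GM}}$ whose phantoms push the ``floor'' $f_n$ down and the ``ceiling'' $f_0$ up as aggressively as possible — so that, in particular, \textsc{GreedyMax} returns the unanimous vote whenever all voters agree — giving a moving-phantom mechanism $\A^{\mathrm{GM}}$ that is already truthful, anonymous, neutral, and continuous by \Cref{thm:phantoms_anonymous_neutral_continuous}. The cutoff operator $C_\tau : \Delta^{(m-1)} \to \Delta^{(m-1)}$, for a fixed threshold $\tau \ge \sfrac{1}{2}$, leaves an allocation $a$ unchanged unless some coordinate $a_{j^\star}$ exceeds $\tau$ (at most one coordinate can, since $\tau \ge \sfrac12$), in which case it caps that coordinate at $\tau$ and redistributes the surplus $a_{j^\star}-\tau$ among the remaining coordinates proportionally to their current mass (uniformly if those are all zero). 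The witness mechanism is $\A = C_\tau \circ \A^{\mathrm{GM}}$.

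The three ``easy'' properties follow by composition. Anonymity and neutrality are inherited from $\A^{\mathrm{GM}}$ together with the facts that $C_\tau$ ignores voter identities and commutes with permutations of the alternatives. Continuity holds because $\A^{\mathrm{GM}}$ is continuous and $C_\tau$ is continuous: the surplus $\max_j a_j - \tau$ vanishes continuously as $\max_j a_j \downarrow \tau$, and the proportional-redistribution map is continuous wherever it is applied, with the zero-mass configuration handled as a limit.

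Next I would show that $\A$ is not a moving-phantom mechanism for any $n$ and any $m \ge 3$, using a single distinguishing profile. Fix $\tau = 0.6$ and take the unanimous profile $P$ in which every one of the $n$ voters reports $p = (0.8, 0.15, 0.05, 0, \dots, 0) \in \Delta^{(m-1)}$. Then $\A^{\mathrm{GM}}(P) = p$ by unanimity of \textsc{GreedyMax}, its first coordinate exceeds $\tau$, so $\A(P) = (0.6, 0.3, 0.1, 0, \dots, 0)$. Suppose some phantom system $\F_n$ produced this on $P$. On a unanimous profile, $\A^{\F_n}(P)_j = \med(f_0(t^\star), \dots, f_n(t^\star), p_j, \dots, p_j)$ equals, for each $j$, either $p_j$ or one of the common phantom values $f_k(t^\star)$. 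Matching the first coordinate forces $f_0(t^\star) = 0.6$ with every phantom at most $0.6$; matching the second then forces every phantom to be at least $0.3$ with $f_n(t^\star) = 0.3$; but then every coordinate $j \ge 3$ of the output equals $f_n(t^\star) = 0.3$, contradicting $\A(P)_3 = 0.1$ (and also contradicting $\sum_j \A^{\F_n}(P)_j = 1$). Hence no phantom system reproduces $\A$ on $\mathcal{P}_{n,m}$.

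The main obstacle is truthfulness of $\A$, and this is exactly where the specific choice of \textsc{GreedyMax} is needed. When neither the honest profile nor the deviation triggers the cutoff, $\A$ agrees locally with $\A^{\mathrm{GM}}$ and truthfulness is immediate, so all the work lies in the capped regime. I would fix a voter $i$ and analyze a deviation $p_i \to p_i^\star$ by a case distinction on whether the cutoff is active at $P$ and at $P^\star$ and on whether the over-threshold alternative changes. The ingredients I expect to need are: (i) a monotonicity property of \textsc{GreedyMax} — raising voter $i$'s reported mass on an alternative weakly raises that alternative's aggregate — and (ii) a structural description of the set of aggregates voter $i$ can induce by varying $p_i$ with $P_{-i}$ fixed, showing that for \textsc{GreedyMax} this set is ``spread out'' in a sufficiently controlled way that its image under $C_\tau$ still has the honest outcome as an $\ell_1$-closest point. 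The delicate cases are those where the deviation crosses the boundary of the capped regime or changes which alternative is capped; these are precisely where a generic moving-phantom mechanism fails, and where the coordinated motion of the \textsc{GreedyMax} phantoms is invoked. I would organize the argument as a chain of lemmas — truthfulness of $\A^{\mathrm{GM}}$, the monotonicity lemma, a lemma controlling how $C_\tau$ perturbs $\ell_1$-distances, and finally the case analysis combining them into truthfulness of $\A$.
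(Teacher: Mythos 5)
Your plan — cap the output of a well-chosen moving-phantom mechanism at a threshold $\tau$ and redistribute the surplus — is indeed the paper's approach, and your ``not a moving-phantom'' argument on the unanimous profile $p=(0.8,0.15,0.05,0,\ldots,0)$ is sound. But the specific cutoff you propose does not work: redistributing the surplus \emph{proportionally to the current mass of the other coordinates} is not a continuous map on the simplex, so $\A = C_\tau\circ\A^{\mathrm{GM}}$ fails continuity. Concretely, fix $m=3$ and $\tau\ge\sfrac12$, and consider the unanimous profiles with votes $(1-\eps,\eps,0)$ and $(1-\eps,\eps/2,\eps/2)$. Both converge to the unanimous profile at $(1,0,0)$, and on unanimous profiles your $\A^{\mathrm{GM}}$ returns the vote itself, so $\A$'s outputs converge to $(\tau,1-\tau,0)$ and $\bigl(\tau,\tfrac{1-\tau}{2},\tfrac{1-\tau}{2}\bigr)$ respectively. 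These limits disagree, so $C_\tau$ has \emph{no} continuous extension to $(1,0,\ldots,0)$; the ``handled as a limit'' step cannot be repaired, and by neutrality no choice of value at $(1,0,\ldots,0)$ can make the map continuous. The paper avoids this by redistributing the surplus \emph{uniformly}, giving $c_\tau(a)_j = a_j + \frac{a_k-\tau}{m-1}$ for $j\neq k$; this is affine in $a$ on the capped region, agrees with the identity on the boundary $\{a_k=\tau\}$, and hence is continuous everywhere, including at the vertices.

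Two further points. First, your description of \textsc{GreedyMax} (``push $f_0$ up and $f_n$ down as aggressively as possible, so that the mechanism is unanimous'') underdetermines the phantom system, and unanimity alone is not enough for the later truthfulness argument; the paper's concrete choice $g_0=\cdots=g_{n-1}=t$, $g_n=0$ is what makes the structural lemmas (that the aggregate on the extreme alternative equals one minus the sum of the maxima on the others, and the exact characterization of when the aggregate is $(j,\tau)$-extreme) hold, and these lemmas are the backbone of the truthfulness case analysis. Second, the truthfulness portion of your proposal is left at the level of a plan (``I would organize the argument as a chain of lemmas \ldots''); the decisive case is when the honest outcome is already capped, and there the argument hinges on showing $\bar{a}_j \ge p_{n,j}$ for all uncapped $j$ and $\bar{a}_k=\tau\le p_{n,k}$ for the capped $k$, from which a short $\ell_1$ computation (using $\|u-v\|_1 = 2 - 2\sum_j \min(u_j,v_j)$) yields the contradiction. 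If you swap proportional redistribution for uniform redistribution and fill in that case analysis, you recover the paper's proof.
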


\begin{wrapfigure}{R}{0.5\textwidth}
    \centering
    \vspace*{6mm}
    \begin{tikzpicture}[scale=4.5]
        \plotSimplexWithNonExtremeBox
        \small

        \draw[thick, draw=black]
            (\ax+1/2*\bx-1/2*\ax,\ay+1/2*\by-1/2*\ay) -- (\cx+1/2*\ax-1/2*\cx,\cy+1/2*\ay-1/2*\cy);
        \draw[thick, draw=black]
            (\bx+1/2*\cx-1/2*\bx,\by+1/2*\cy-1/2*\by) -- (\ax+1/2*\bx-1/2*\ax,\ay+1/2*\by-1/2*\ay);
        \draw[thick, draw=black]
            (\cx+1/2*\ax-1/2*\cx,\cy+1/2*\ay-1/2*\cy) -- (\bx+1/2*\cx-1/2*\bx,\by+1/2*\cy-1/2*\by);

        \node (vote1) [draw, circle, color1, inner sep=0, minimum size=5, fill] at (0.2,0.2) {};
        \node (vote1cut) [draw, cross out, very thick, color2, inner sep=0, minimum size=3, fill] at (0.337,0.283) {};
        \draw [dashed, ->, thin, shorten <= 1pt, shorten >= 2pt] (vote1) -- (vote1cut);

        \node (vote2) [draw, circle, color1, inner sep=0, minimum size=5, fill] at (0.48,0.83) {};
        \node (vote2cut) [draw, cross out, very thick, color2, inner sep=0, minimum size=3, fill] at (0.48,0.432) {};
        \draw [dashed, ->, thin, shorten <= 1pt, shorten >= 2pt] (vote2) -- (vote2cut);

        \node (vote3) [draw, circle, color1, inner sep=0, minimum size=5, fill] at (0.6,0.3) {};
        \node (vote3cut) [draw, cross out, very thick, color2, inner sep=0, minimum size=3, fill] at (0.6,0.3) {};
    \end{tikzpicture}
    
    \caption{Illustration of $c_\tau$ for $m = 3$ and $\tau = \frac{1}{2}$. Circles are example inputs and crosses are the corresponding outputs. Solid black lines correspond to points where one coordinate is exactly $\tau$. %
    \vspace*{-8mm}
    }
    \label{fig:cutoff}
\end{wrapfigure}
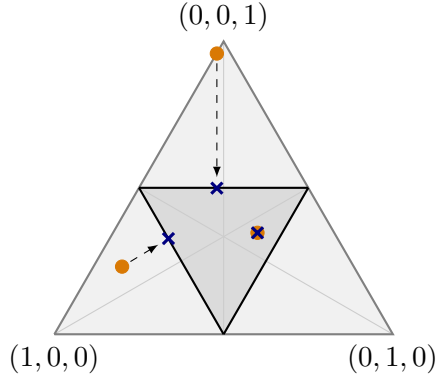
To prove \Cref{thm:existence_truthful_non_phantoms}, we introduce a class of budget-aggregation mechanisms, all of which are continuous, anonymous, and neutral, and some of which are also truthful and not moving-phantom mechanisms. We call this class of mechanisms \textit{cutoff-phantoms}. The key idea is to apply a so-called \textit{cutoff} function to the output of a phantom mechanism in situations where some alternative receives more than a certain threshold. After introducing cutoff-phantoms, we introduce a new moving-phantom mechanism called \textsc{GreedyMax}, which behaves particularly well in combination with cutoff functions. More precisely, we show in \Cref{subsec:greedymax} that \textsc{CutoffGreedyMax} is not a phantom mechanism and satisfies truthfulness for any constant threshold above $\sfrac{1}{2}$. In \Cref{sec:moretruthful}, we present more examples of cutoff-phantoms that are truthful, albeit for non-constant thresholds.

We modify the outcome of a phantom mechanism in cases where the aggregate is ``extreme''. More precisely, for some \emph{threshold function} $\tau: \mathbb{N} \times \mathbb{N} \rightarrow [\sfrac{1}{2},1]$ we say that $a\in \Delta^{(m-1)}$ is $(j,\tau)$\emph{-extreme} if $a_j > \tau(n,m)$ for $j\in[m]$. While $\tau$ will be constant in this section, in \Cref{sec:moretruthful} it can depend on $m$ and $n$. We also write $j$-extreme instead of $(j,\tau)$-extreme whenever $\tau$ is clear from the context.

If a phantom mechanism returns a $(j,\tau)$-extreme outcome, we lower the value of alternative $j$ to $\tau$ and redistribute the removed budget evenly among all other alternatives. We refer to this function on the aggregate as the \textit{cutoff function}. More formally, let $m \in \N$, $m \ge 2$, and $a = (a_1, \dots, a_m) \in \Delta^{(m-1)}$. For some $\tau: \mathbb{N} \times \mathbb{N} \rightarrow [\sfrac{1}{2},1]$ we define the cutoff function $c_\tau: \Delta^{(m-1)} \rightarrow \Delta^{(m-1)}$ as follows. Let $k \in \argmax_{j \in [m]}(a_j)$, then 
\[
    c_\tau(a)_j = \begin{cases} 
    a_j & \text{if } a_k \le \tau, \\
    \tau(n,m) & \text{if } a_k > \tau \text{ and } j = k, \\
    a_j + \frac{a_k-\tau(n,m)}{m-1} & \text{if } a_k > \tau \text{ and } j \neq k. \\
    \end{cases}
\]
Note, that $c_\tau(a)\neq a$ only if $\max_{j \in [m]}(a_j) > \tau(n,m) \ge \sfrac{1}{2}$, which in turn implies that $k$ is unique and $c_\tau$ is well defined. 
A visualization of the cutoff function can be found in \Cref{fig:cutoff}.

\paragraph{Cutoff-phantom Mechanisms.}  
For any family of phantom systems $\mathcal{F}$ and $\tau: \N \times \N \rightarrow [\frac{1}{2},1]$, 
we define the \textit{cutoff-phantom} mechanism $\A^\F_{c_\tau}$ as the composition of $\A^\F$ and $c_{\tau}$, i.e., $$\A^\F_{c_{\tau}} = c_{\tau} \circ \A^\F.$$ We drop $\tau$ from the subscript of $c$ whenever it is clear from context. 

\begin{restatable}{proposition}{cutoffCAN}\label{prop:greedy_max_cutoff_anonymous_neutral_continuous}
    Any cutoff-phantom mechanism is anonymous, neutral, and continuous.
\end{restatable}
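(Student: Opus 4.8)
The plan is to establish the three properties separately, leveraging \Cref{thm:phantoms_anonymous_neutral_continuous}, which already gives us that every moving-phantom mechanism $\A^\F$ is anonymous, neutral, and continuous. Since the cutoff-phantom mechanism is the composition $\A^\F_{c_\tau} = c_\tau \circ \A^\F$, it suffices to check how the cutoff function $c_\tau$ interacts with each property.

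\textbf{Anonymity.} This is immediate: $c_\tau$ acts only on the aggregate $a \in \Delta^{(m-1)}$ and does not see voter identities, so if $\A^\F$ returns the same aggregate under any permutation of the voters, so does $c_\tau \circ \A^\F$. Formally, for any permutation $\sigma$ of $[n]$, $\A^\F_{c_\tau}(\p_1,\dots,\p_n) = c_\tau(\A^\F(\p_1,\dots,\p_n)) = c_\tau(\A^\F(\p_{\sigma(1)},\dots,\p_{\sigma(n)})) = \A^\F_{c_\tau}(\p_{\sigma(1)},\dots,\p_{\sigma(n)})$.

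\textbf{Neutrality.} Here I would verify that $c_\tau$ commutes with coordinate permutations, i.e., $c_\tau(a_{\sigma(1)},\dots,a_{\sigma(m)}) = (c_\tau(a)_{\sigma(1)},\dots,c_\tau(a)_{\sigma(m)})$ for every permutation $\sigma$ of $[m]$. This holds because the definition of $c_\tau$ is symmetric in the coordinates: it identifies the (unique, when a modification occurs) argmax coordinate $k$, caps it at $\tau(n,m)$, and adds $\tfrac{a_k - \tau(n,m)}{m-1}$ to every other coordinate — all operations that are equivariant under relabeling of coordinates. Combining this with the neutrality of $\A^\F$ from \Cref{thm:phantoms_anonymous_neutral_continuous} gives neutrality of the composition.

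\textbf{Continuity.} This is the one step requiring a little care, since $c_\tau$ is defined by cases and one might worry about the boundary $a_k = \tau(n,m)$. The key observation is that on the set where $\max_j a_j = \tau(n,m)$, the two relevant branches agree: if $a_k = \tau(n,m)$ then $\tau(n,m) = a_k$ and $a_j + \tfrac{a_k - \tau(n,m)}{m-1} = a_j$, so the ``capped'' formula coincides with the identity. Hence $c_\tau$ is a continuous function on $\Delta^{(m-1)}$ (it is a pasting of two continuous maps along the closed set $\{a : \max_j a_j \le \tau(n,m)\}$ and its complement's closure, agreeing on the overlap — the pasting lemma applies; the only subtlety, namely that $\argmax$ could be multi-valued, is irrelevant because multi-valued argmax forces $\max_j a_j \le \tfrac{1}{2} \le \tau(n,m)$, landing in the identity branch). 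Since $\A^\F$ is continuous and $c_\tau$ is continuous, their composition is continuous. I expect the continuity argument — specifically articulating that the case distinction does not create a discontinuity at the threshold and that the non-uniqueness of the argmax is harmless — to be the main (and only mild) obstacle; everything else is a direct consequence of the corresponding property of $\A^\F$.
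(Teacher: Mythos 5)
Your proposal is correct and follows essentially the same route as the paper: invoke \Cref{thm:phantoms_anonymous_neutral_continuous} for the properties of $\A^\F$ and observe that $c_\tau$ preserves each of them. You simply flesh out the steps the paper leaves as "one can easily verify" — namely that $c_\tau$ commutes with coordinate permutations and that the case split in $c_\tau$ does not introduce a discontinuity — which is a welcome level of rigor but not a different argument.
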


\subsection{\textsc{CutoffGreedyMax}} \label{subsec:greedymax}

We introduce a novel moving-phantom mechanism called \textsc{GreedyMax} and then show that \textsc{CutoffGreedyMax} is truthful and not a phantom mechanism for any constant threshold $\tau \in [\frac{1}{2},1)$. Among known moving-phantom mechanisms in the literature, \textsc{GreedyMax} is the only one that induces a cutoff-phantom that is truthful for constant thresholds (see Section~\ref{sec:moretruthful}).
\textsc{GreedyMax} moves the first $n$ phantoms from zero to one simultaneously and leaves the last phantom at zero.\footnote{According to the definition of a moving-phantom mechanism, each phantom $f_i$ is required to have $f_i(1) = 1$. For this phantom system, however, normalization is always reached when the top $n$ phantoms reach $1$ (or sooner). Thus, we fix the last phantom at position zero for simplicity.}

\vspace{-0.2cm}
\paragraph{\textsc{GreedyMax}.} We define the moving-phantom mechanism \textsc{GreedyMax} as the mechanism $\A^\G$ induced by the family of phantom systems $\G = \{\G_n \mid n \in \N\}$ with
\[
    \G_n = \{g_0, \dots, g_{n}\} ,\,\,\text{ where } g_0(t) = \dots = g_{n-1}(t) = t \,\,\text{ and }\,\, g_n(t) = 0.
\]

The medians according to the \textsc{GreedyMax} phantom systems follow a simple pattern:

\begin{observation} \label{obs:medians}
    Let $t \in [0,1]$, and $P \in \mathcal{P}_{n,m}$. Then for all $j \in [m]$, $$\med(g_0(t), \dots, g_n(t), \p_{1,j}, \dots, \p_{n,j}) =\min(t, \max_{i \in [n]} p_{i,j}).$$
\end{observation}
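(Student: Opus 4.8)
The statement to prove is \Cref{obs:medians}: for $t \in [0,1]$ and $P \in \mathcal{P}_{n,m}$, the median of the $n+1$ phantom values $g_0(t),\dots,g_n(t)$ together with the $n$ reported values $p_{1,j},\dots,p_{n,j}$ equals $\min(t, \max_{i\in[n]} p_{i,j})$.

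\medskip

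\textbf{Proof plan.} The plan is to unpack the \textsc{GreedyMax} phantom system and compute the median of the resulting multiset of $2n+1$ numbers directly. By definition of $\G_n$, we have $g_0(t) = \cdots = g_{n-1}(t) = t$ and $g_n(t) = 0$, so for a fixed coordinate $j$ the multiset of values whose median we need is
\[
    S \;=\; \{\,\underbrace{t,\dots,t}_{n \text{ copies}},\; 0,\; p_{1,j},\dots,p_{n,j}\,\},
\]
which has $2n+1$ elements, so its median is the $(n+1)$-st smallest element. First I would sort the $n$ votes in coordinate $j$ and write $q_1 \le q_2 \le \dots \le q_n$ for $p_{1,j},\dots,p_{n,j}$ in nondecreasing order, so that $q_n = \max_{i\in[n]} p_{i,j}$. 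Since $0 \le q_1$, the single phantom at $0$ is among the smallest elements and never affects the $(n+1)$-st order statistic once we account for it; concretely, the $(n+1)$-st smallest element of $S$ equals the $n$-th smallest element of $S' = \{t,\dots,t\ (n\text{ copies}),\ q_1,\dots,q_n\}$ (we have removed the $0$, which is a minimum, and so decreased the rank we seek by one).

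\medskip

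\textbf{Main case split.} The core of the argument is then to find the $n$-th smallest element of the $2n$-element multiset $S'$ consisting of $n$ copies of $t$ and the sorted votes $q_1 \le \dots \le q_n$. Here I would split on whether $t \le q_n$ or $t > q_n$.
\begin{itemize}
\item If $t \le q_n = \max_i p_{i,j}$: I claim the $n$-th smallest element of $S'$ is $t$. Indeed, there are $n$ copies of $t$ in $S'$, and at least one vote, namely $q_n$, satisfies $q_n \ge t$; hence among the $2n$ elements, at least $n$ are $\ge t$ is not quite the right phrasing — rather, the $n$ copies of $t$ occupy positions whose largest index is at least $n$ (since $q_n \ge t$ sits weakly after them in sorted order) and whose smallest index is at most $n$ (since at most $n-1$ votes $q_1,\dots,q_{n-1}$ can be strictly smaller than $t$). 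So position $n$ holds a copy of $t$, giving median $t = \min(t, q_n)$.
\item If $t > q_n = \max_i p_{i,j}$: then every vote $q_i \le q_n < t$, so in sorted order all $n$ votes come before all $n$ copies of $t$. The $n$-th smallest element is therefore $q_n = \max_i p_{i,j} = \min(t, q_n)$.
\end{itemize}
In both cases the $(n+1)$-st smallest element of $S$ equals $\min(t, \max_{i\in[n]} p_{i,j})$, which is exactly the claimed median.

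\medskip

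\textbf{Expected obstacle.} This is a routine order-statistics computation, so there is no deep obstacle; the only place to be careful is the bookkeeping of ranks when the phantom value $0$, the $n$ copies of $t$, and the votes can coincide or interleave with ties. I would handle ties by consistently treating the median as the $(n+1)$-st element of the sorted multiset (with multiplicity), which is well-defined regardless of ties, and by using non-strict inequalities throughout so that boundary cases such as $t = q_n$ or $q_i = 0$ are automatically covered. One can also note the degenerate check $t = 0$: then all of $g_0,\dots,g_n$ are $0$, the median is $0 = \min(0, \max_i p_{i,j})$, consistent with the formula, and $t = 1$ gives median $\max_i p_{i,j}$, also consistent. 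This completes the proof.
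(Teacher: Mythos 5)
Your proof is correct. The paper states \Cref{obs:medians} without any proof, treating it as immediate from the form of the \textsc{GreedyMax} phantoms, so there is no paper argument to compare against; your order-statistics computation is a valid elementary verification of the claim. The argument can be streamlined slightly: writing $\mu = \min(t, \max_{i\in[n]} p_{i,j})$, one observes that the multiset of $2n+1$ values always contains at least $n+1$ values that are $\le \mu$ (the zero phantom together with the $n$ copies of $t$ when $t\le \max_i p_{i,j}$, or the zero phantom together with the $n$ votes otherwise) and at least $n+1$ values that are $\ge \mu$ (the $n$ copies of $t$ together with a maximal vote when $t\le \max_i p_{i,j}$, or the $n$ copies of $t$ together with a maximal vote again otherwise); this pins down the $(n+1)$-st order statistic as $\mu$ without the preliminary bookkeeping step of discarding the zero phantom and recomputing ranks. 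Minor stylistic note: the mid-sentence self-correction in your first bullet should be cleaned up into a single clean statement of the rank bounds; the content is right, but as written it reads like a draft.
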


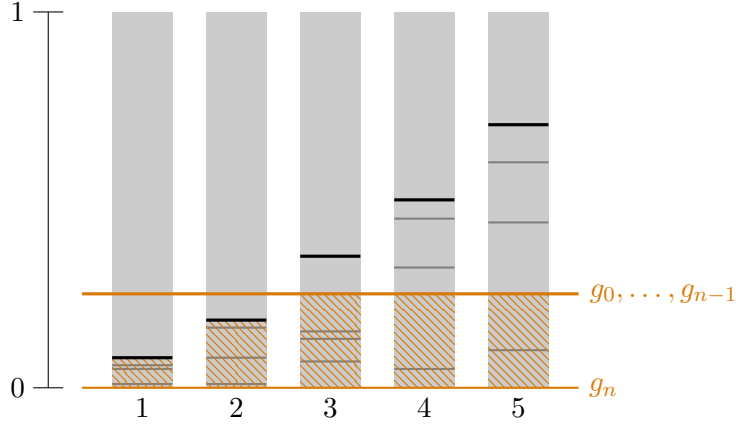
\begin{figure}
    \centering
    \begin{tikzpicture}[scale=5]
        \def\coorddist{0.25}
        \def\coordwidth{0.08}
        
        \draw (0,0) -- (0,1);  
        \draw (-.04,0) -- (0.04,0); 
        \draw (-.04,1) -- (0.04,1); 
        \node[anchor=east, xshift=-5px] at (0,0) {$0$}; 
        \node[anchor=east, xshift=-5px] at (0,1) {$1$}; 
        \filldraw[fill=black!20,draw=none] (1*\coorddist-\coordwidth,0) rectangle (1*\coorddist+\coordwidth,1); 
        \filldraw[fill=black!20,draw=none] (2*\coorddist-\coordwidth,0) rectangle (2*\coorddist+\coordwidth,1); 
        \filldraw[fill=black!20,draw=none] (3*\coorddist-\coordwidth,0) rectangle (3*\coorddist+\coordwidth,1); 
        \filldraw[fill=black!20,draw=none] (4*\coorddist-\coordwidth,0) rectangle (4*\coorddist+\coordwidth,1); 
        \filldraw[fill=black!20,draw=none] (5*\coorddist-\coordwidth,0) rectangle (5*\coorddist+\coordwidth,1); 
        
        \node[anchor=north] at (1*\coorddist,0) {$1$};
        \node[anchor=north] at (2*\coorddist,0) {$2$};
        \node[anchor=north] at (3*\coorddist,0) {$3$};
        \node[anchor=north] at (4*\coorddist,0) {$4$};
        \node[anchor=north] at (5*\coorddist,0) {$5$};

        \draw[thick, gray] (1*\coorddist-\coordwidth,0.06) -- (1*\coorddist+\coordwidth,0.06); 
        \draw[thick, gray] (1*\coorddist-\coordwidth,0.05) -- (1*\coorddist+\coordwidth,0.05); 
        \draw[thick, gray] (1*\coorddist-\coordwidth,0.01) -- (1*\coorddist+\coordwidth,0.01); 
        \fill [color1, pattern=north west lines, pattern color=color1] (1*\coorddist-\coordwidth,0.0) rectangle (1*\coorddist+\coordwidth,0.08);
        \draw[very thick] (1*\coorddist-\coordwidth,0.08) -- (1*\coorddist+\coordwidth,0.08); 
        
        \draw[thick, gray] (2*\coorddist-\coordwidth,0.16) -- (2*\coorddist+\coordwidth,0.16); 
        \draw[thick, gray] (2*\coorddist-\coordwidth,0.08) -- (2*\coorddist+\coordwidth,0.08); 
        \draw[thick, gray] (2*\coorddist-\coordwidth,0.01) -- (2*\coorddist+\coordwidth,0.01); 
        \fill [color1, pattern=north west lines, pattern color=color1] (2*\coorddist-\coordwidth,0.0) rectangle (2*\coorddist+\coordwidth,0.18);
        \draw[very thick] (2*\coorddist-\coordwidth,0.18) -- (2*\coorddist+\coordwidth,0.18); 
        
        \draw[thick, gray] (3*\coorddist-\coordwidth,0.15) -- (3*\coorddist+\coordwidth,0.15); 
        \draw[thick, gray] (3*\coorddist-\coordwidth,0.13) -- (3*\coorddist+\coordwidth,0.13); 
        \draw[thick, gray] (3*\coorddist-\coordwidth,0.07) -- (3*\coorddist+\coordwidth,0.07);
        \fill [color1, pattern=north west lines, pattern color=color1] (3*\coorddist-\coordwidth,0.0) rectangle (3*\coorddist+\coordwidth,0.25);
        \draw[very thick] (3*\coorddist-\coordwidth,0.35) -- (3*\coorddist+\coordwidth,0.35); 
        
        \draw[thick, gray] (4*\coorddist-\coordwidth,0.45) -- (4*\coorddist+\coordwidth,0.45); 
        \draw[thick, gray] (4*\coorddist-\coordwidth,0.32) -- (4*\coorddist+\coordwidth,0.32); 
        \draw[thick, gray] (4*\coorddist-\coordwidth,0.05) -- (4*\coorddist+\coordwidth,0.05);
        \fill [color1, pattern=north west lines, pattern color=color1] (4*\coorddist-\coordwidth,0.0) rectangle (4*\coorddist+\coordwidth,0.25);
        \draw[very thick] (4*\coorddist-\coordwidth,0.50) -- (4*\coorddist+\coordwidth,0.50); 
                
        \draw[thick, gray] (5*\coorddist-\coordwidth,0.60) -- (5*\coorddist+\coordwidth,0.60); 
        \draw[thick, gray] (5*\coorddist-\coordwidth,0.44) -- (5*\coorddist+\coordwidth,0.44); 
        \draw[thick, gray] (5*\coorddist-\coordwidth,0.10) -- (5*\coorddist+\coordwidth,0.10);
        \fill [color1, pattern=north west lines, pattern color=color1] (5*\coorddist-\coordwidth,0.0) rectangle (5*\coorddist+\coordwidth,0.25);
        \draw[very thick] (5*\coorddist-\coordwidth,0.70) -- (5*\coorddist+\coordwidth,0.70); 

        \draw[thick,color1] (1*\coorddist-2*\coordwidth,0) -- (5*\coorddist+2*\coordwidth,0) node[anchor=west]{$g_n$}; 
        \draw[very thick,color1] (1*\coorddist-2*\coordwidth,0.25) -- (5*\coorddist+2*\coordwidth,0.25) node[anchor=west]{$g_0, \dots, g_{n-1}$};

  \end{tikzpicture}
    \caption{Illustrative example of the \textsc{GreedyMax} mechanism. The votes on each alternative are marked by (gray) lines. Since the mechanism only depends on the maximum vote on each alternative, we emphasize the maximum votes and order the alternatives in increasing order of their maxima. 
    The phantom positions are shown as (orange) lines crossing all alternatives and the upper bounds of the hatched areas mark the allocations to each of the alternatives.}
    \label{fig:greedy_max_phantom}
\end{figure}

This observation implies that $\sfrac{1}{m} \le t^\star \le 1$. At time $t$, the sum of all medians is upper bounded by $mt$, and therefore $t^\star \ge \sfrac{1}{m}$. Additionally, when $t=1$ the median on each alternative is exactly the maximum vote on that alternative, and their sum is at least 1, which implies that $t^\star \le 1$.

Observation~\ref{obs:medians}, which we also illustrate in \Cref{fig:greedy_max_phantom}, leads to an alternative interpretation of the mechanism that also inspired its name \textsc{GreedyMax}: In the beginning, we set a budget $b=1$ and a counter $k=0$. Then, we iterate over the alternatives in increasing order of their maximum votes. Let $j$ be the current alternative. If $\max_{i \in [n]} p_{i,j} < b/(m-k)$, then assign $\max_{i \in [n]} p_{i,j}$ to alternative $j$, decrease $b$ by the same value and increase $k$ by one. Otherwise, assign $b/(m-k)$ to all remaining alternatives and stop the process. The interpretation is that the mechanism greedily assigns the value of the maximum vote to alternatives, unless this would lead to future alternatives receiving less than what is assigned in the current step.

To complete the proof of \Cref{thm:existence_truthful_non_phantoms}, we show that \textsc{CutoffGreedyMax} is not a moving-phantom mechanism and that it is truthful in \Cref{prop:greedy_max_cutoff_not_phantom} and \Cref{prop:greedy_max_cutoff_truthfulness}, respectively. 

\begin{proposition} \label{prop:greedy_max_cutoff_not_phantom}
    For every $n,m \in \N$, $m \ge 3$, every phantom system $\F_n$ and $\tau \in [\frac{1}{2}, 1)$, there exists a profile $P \in \mathcal{P}_{n,m}$ such that $\A^\G_{c_\tau}(P) \neq \A^{\F_n}(P)$.
\end{proposition}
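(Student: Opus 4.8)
The plan is to exhibit, for each fixed $n$ and $m\ge 3$, a single profile $P$ on which $\A^\G_{c_\tau}$ provably differs from \emph{every} moving-phantom mechanism $\A^{\F_n}$. The natural choice is a profile where \textsc{GreedyMax} outputs a $j$-extreme allocation, so that the cutoff function actually fires and pushes the $k$-th coordinate down to $\tau<1$. Concretely I would take a profile in which every voter puts almost all weight on alternative $1$ — say each $p_i = (1-(m-1)\delta,\delta,\dots,\delta)$ for a small $\delta>0$ — so that $\max_i p_{i,1} = 1-(m-1)\delta$ is large (exceeding $\tau$ once $\delta$ is small, since $\tau<1$), while the maxima on the other coordinates are tiny. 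By Observation~\ref{obs:medians} and the greedy interpretation, \textsc{GreedyMax} then returns an allocation whose first coordinate is close to $1$ and in particular strictly above $\tau$; applying $c_\tau$ yields an outcome with first coordinate exactly $\tau$.

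The key step is then a structural fact about moving-phantom outputs on such "unanimous-ish" profiles: no moving-phantom mechanism can ever return an allocation with a coordinate strictly below the minimum vote on that coordinate, nor above the maximum vote on that coordinate — this is immediate from the median characterization, since $a_j = \med(f_0(t^\star),\dots,f_n(t^\star),p_{1,j},\dots,p_{n,j})$ always lies in $[\min_i p_{i,j}, \max_i p_{i,j}]$. Hence for \emph{any} $\F_n$ we have $\A^{\F_n}(P)_1 \ge \min_i p_{i,1} = 1-(m-1)\delta$. Choosing $\delta$ small enough that $1-(m-1)\delta > \tau$ forces $\A^{\F_n}(P)_1 > \tau = \A^\G_{c_\tau}(P)_1$, so the two allocations differ, and this holds uniformly over all phantom systems $\F_n$. (One should double-check the edge case $\tau$ close to $1$: since $\tau<1$ strictly, there is always room to pick such a $\delta$; and one must confirm \textsc{GreedyMax} genuinely outputs something $>\tau$ on the first coordinate, which follows because with all votes concentrated on alternative $1$, the greedy process assigns the tiny maxima to alternatives $2,\dots,m$ and dumps the remaining budget — nearly all of it — onto alternative $1$.)

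I would organize the write-up as: (1) define the profile $P_\delta$ and compute $\A^\G(P_\delta)$ via the greedy description, verifying the first coordinate exceeds $\tau$ for small $\delta$; (2) conclude $\A^\G_{c_\tau}(P_\delta)_1 = \tau$; (3) invoke the median bound $\A^{\F_n}(P_\delta)_1 \in [\min_i p_{i,1},\max_i p_{i,1}]$ to get $\A^{\F_n}(P_\delta)_1 \ge 1-(m-1)\delta > \tau$; (4) conclude the allocations differ. The main obstacle is essentially bookkeeping in step (1): making sure the greedy process on $P_\delta$ indeed terminates with alternative $1$ receiving more than $\tau$, i.e. checking that $1-(m-1)\delta > (1 - (m-1)\delta)/1$-type threshold comparisons at each stage go the right way so that the other $m-1$ alternatives each get only $\delta$ and alternative $1$ absorbs the rest. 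Everything else is a one-line appeal to the median formula and to Proposition~\ref{prop:greedy_max_cutoff_anonymous_neutral_continuous}-style reasoning already established.
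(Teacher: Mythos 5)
Your proposal has a critical gap: the claim that
$$\A^{\F_n}(P)_j = \med(f_0(t^\star),\dots,f_n(t^\star),p_{1,j},\dots,p_{n,j}) \in \big[\min_i p_{i,j},\ \max_i p_{i,j}\big]$$
is \emph{false}. The median is taken over $2n+1$ values, of which only $n$ are votes and $n+1$ are phantoms; since the phantoms form a majority, the median can be strictly outside the range of the votes. Concretely, take the phantom system $f_0=\dots=f_n=\mathrm{id}$. On any coordinate $j$, the median of $n+1$ copies of $t$ and $n$ copies of a vote is always $t$, so normalization occurs at $t^\star=1/m$ and the aggregate is $(1/m,\dots,1/m)$ regardless of the (possibly unanimous) profile. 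This already shows moving-phantom mechanisms need not be unanimous, and it falsifies the bound you rely on.

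Worse, your chosen unanimous profile $P_\delta$ with $p_i=(1-(m-1)\delta,\delta,\dots,\delta)$ does not separate $\A^\G_{c_\tau}$ from \emph{all} phantom mechanisms. The cutoff output on $P_\delta$ is $\big(\tau,\tfrac{1-\tau}{m-1},\dots,\tfrac{1-\tau}{m-1}\big)$, and one can build a phantom system $\F_n$ whose output on $P_\delta$ equals exactly this: pick $t^\star$ with $f_0(t^\star)=\tau$, $f_n(t^\star)=\tfrac{1-\tau}{m-1}$, and the intermediate phantoms in between (possible since $\tfrac{1-\tau}{m-1}\le\tau$ for $\tau\ge 1/2$ and $m\ge 2$). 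Then the median on coordinate $1$ equals $f_0(t^\star)=\tau$ (all $n$ votes exceed $\tau$, so the $(n+1)$-th smallest is the top phantom), the median on the other coordinates equals $f_n(t^\star)=\tfrac{1-\tau}{m-1}$ (all $n$ votes are below it, so the $(n+1)$-th smallest is the bottom phantom), and these sum to $1$, so $t^\star$ is a time of normalization. Thus a unanimous profile cannot work. The paper instead uses a \emph{near}-unanimous profile with one deviant voter at $\big(\tfrac{1+\tau}{2},\tfrac{1-\tau}{2},0,\dots,0\big)$ and $n-1$ voters at $(1,0,\dots,0)$. The presence of $n-1$ zero votes on alternative $3$ forces the bottom phantom to sit exactly at the small value $\tfrac{1-\tau}{2(m-1)}$, and combined with the $n-1$ zero votes on alternative $2$ this pins the median on alternative $2$ to at most $\tfrac{1-\tau}{2}$, strictly below what the cutoff aggregate demands. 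This rigidity, coming from the votes at $0$, is exactly what your unanimous profile lacks.
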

\begin{proof}
    We fix $n,m \in \N$, $m \ge 3$ and $\tau \in [\frac{1}{2}, 1)$ and consider the profile $P \in \mathcal{P}_{n,m}$ defined as $$P = \Big(\Big(1, 0, \dots,0\Big), \dots, \Big(1, 0, \dots,0\Big),\Big(\frac{1+\tau}{2}, \frac{1-\tau}{2}, 0, \dots, 0\Big)\Big).$$ \textsc{GreedyMax} reaches normalization at $t = \frac{1+\tau}{2}$ and $\A^\G(P) = (\frac{1+\tau}{2}, \frac{1-\tau}{2}, 0, \dots, 0)$. Thus $$a = \A^\G_c(P) = \Big(\tau, \frac{1-\tau}{2}+\frac{1-\tau}{2(m-1)}, \frac{1-\tau}{2(m-1)}, \dots, \frac{1-\tau}{2(m-1)}\Big).$$ Suppose for contradiction that there exists a moving-phantom mechanism $\A^\F$ such that $\A^\F(P) = \A^\G_c(P)$. Then, $a_j$ corresponds to the median of the votes and the phantoms for alternative $j$ for each $j\in [m]$. Since $a_3$ does not match any of the votes for alternative $3$, 
    some phantom must be at position $\frac{1-\tau}{2(m-1)}$ at a time of normalization.
    But then, for alternative $2$ there are $n-1$ votes at $0$, one vote at $\frac{1-\tau}{2}$ and at least one phantom at $\frac{1-\tau}{2(m-1)} < \frac{1-\tau}{2}$. Thus the median for alternative $2$ is at most $\frac{1-\tau}{2} < \frac{1-\tau}{2}+\frac{1-\tau}{2(m-1)} = a_2$, a contradiction.
\end{proof}

 We now turn to sketching that \textsc{CutoffGreedyMax} is in fact truthful for any constant $\tau \in [\sfrac{1}{2},1)$. 
 The proof distinguishes two cases based on whether \textsc{GreedyMax} returns an extreme aggregate for the truthful report of a voter. To support this case distinction, \Cref{lem:greedy_max_high_aggregate} characterizes the profiles for which \textsc{GreedyMax} returns extreme outcomes.

\begin{restatable}{lemma}{lemmatwo}\label{lem:greedy_max_high_aggregate}
    For all $n,m \in \N, m \ge 2$, $\tau \in [\frac{1}{2}, 1)$, any profile $P = (\p_1, \dots, \p_n) \in \mathcal{P}_{n,m}$ and corresponding aggregate of \textsc{GreedyMax} $a = \A^\G(P)$ it holds that $a_j > \tau$ if and only if
    $$\min_{i \in [n]}(\p_{i,j}) > \tau \qquad \textnormal{and} \qquad \sum_{\substack{k \in [m] \\ k\neq j}} \max_{i \in [n]}(\p_{i,k}) < 1 - \tau.$$
\end{restatable}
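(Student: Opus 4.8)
The statement characterizes exactly when \textsc{GreedyMax} produces a $j$-extreme aggregate. I would prove it by exploiting Observation~\ref{obs:medians}, which tells us that the allocation to alternative $j$ is $\min(t^\star, \max_{i\in[n]} p_{i,j})$, together with the greedy interpretation of the mechanism described right after the observation. Write $M_j = \max_{i\in[n]} p_{i,j}$ and $m_j = \min_{i\in[n]} p_{i,j}$ for brevity. Since $\tau \ge \sfrac12$, at most one alternative can be $j$-extreme, so there is no ambiguity about which $j$ we are discussing.

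\textbf{Forward direction.} Suppose $a_j > \tau$. By Observation~\ref{obs:medians}, $a_j = \min(t^\star, M_j)$, so in particular $M_j \ge a_j > \tau$; more is needed though, namely $m_j > \tau$. Here I would argue that $a_j > \tau \ge \sfrac12$ forces the allocation of $j$ to "dominate" in the profile: since $\sum_k a_k = 1$ and $a_j > \sfrac12$, every other alternative $k$ has $a_k < \sfrac12 < a_j$, hence $a_k < a_j \le t^\star$, which by Observation~\ref{obs:medians} means $a_k = M_k$ for all $k \ne j$ (the median is pinned to the max vote, not the phantom). Then $\sum_{k \ne j} M_k = 1 - a_j < 1 - \tau$, giving the second condition. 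For the first condition, $m_j > \tau$: note $\sum_{k\ne j} p_{i,k} = 1 - p_{i,j}$ for each voter $i$, and since each $p_{i,k} \le M_k$, we get $1 - p_{i,j} = \sum_{k \ne j} p_{i,k} \le \sum_{k\ne j} M_k < 1-\tau$, so $p_{i,j} > \tau$ for every $i$, i.e.\ $m_j > \tau$.

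\textbf{Reverse direction.} Suppose $m_j > \tau$ and $\sum_{k\ne j} M_k < 1-\tau$. I want to show $a_j > \tau$. Since $a_j = \min(t^\star, M_j)$ and $M_j \ge m_j > \tau$, it suffices to show $t^\star > \tau$. Suppose instead $t^\star \le \tau$. Then for every alternative $k$, $a_k = \min(t^\star, M_k) \le t^\star$. For $k = j$ we have $M_j > \tau \ge t^\star$, so $a_j = t^\star$. For $k \ne j$, $a_k \le M_k$, so $\sum_{k\ne j} a_k \le \sum_{k\ne j} M_k < 1-\tau$. Therefore $1 = \sum_k a_k = a_j + \sum_{k\ne j} a_k < t^\star + (1-\tau) \le \tau + (1-\tau) = 1$, a contradiction. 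Hence $t^\star > \tau$ and $a_j = \min(t^\star, M_j) > \tau$ (both arguments exceed $\tau$), as desired.

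\textbf{Main obstacle.} The delicate point is the forward direction's claim that $a_k = M_k$ for all $k \ne j$: one must be careful that the median for $k$ is genuinely equal to the maximum vote and not to a phantom position — but this follows cleanly from Observation~\ref{obs:medians} since $a_k < a_j \le t^\star = \min(t^\star, \text{something})$ already forces the $\min$ to be attained at $M_k$ whenever $a_k < t^\star$; the only remaining case, $a_k = t^\star$, cannot happen simultaneously with $a_j = t^\star$ unless $a_j = a_k$, contradicting $a_j > \sfrac12 > a_k$. Everything else is bookkeeping with the simplex constraint $\sum_k p_{i,k} = 1$.
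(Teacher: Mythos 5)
Your proof is correct, and it takes a cleaner, more self-contained route than the paper's. The paper proves the forward direction by contradiction with a two-way case split (one case for a non-extreme vote, one for a too-large sum of maxima), each time arguing that the allocations would sum to more than $1$; and it proves the reverse direction by invoking a separate lemma (\Cref{lem:greedy_max_exact_aggregate}), which pins down the exact aggregate under the given hypotheses. You instead work directly with $a_k = \min(t^\star, M_k)$ from \Cref{obs:medians} in both directions: for the forward direction you observe that $a_j > \tau \geq \sfrac12$ forces $a_k < \sfrac12 < a_j \leq t^\star$ for every $k \neq j$, which pins $a_k = M_k$ and makes both conditions drop out (the second from normalization, the first from applying the simplex constraint to each voter's report); for the reverse direction you show that $t^\star \leq \tau$ would make the medians sum to strictly less than $1$. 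This avoids the case split and the auxiliary lemma. What the paper's approach buys is modularity -- \Cref{lem:greedy_max_exact_aggregate} is needed elsewhere (e.g., in the proof of \Cref{prop:greedy_max_cutoff_truthfulness}), so reusing it here is economical within the paper -- whereas yours buys a shorter, standalone argument. One small point of care, which you handle correctly: $t^\star$ is not unique, but the induced allocation is, so fixing one time of normalization is harmless throughout.
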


Making use of \Cref{lem:greedy_max_high_aggregate}, we provide a proof sketch of the truthfulness of \textsc{CutoffGreedyMax}.

\begin{restatable}{proposition}{maxcutoffgreedy}\label{prop:greedy_max_cutoff_truthfulness}
    \textsc{CutoffGreedyMax} is truthful for any $\tau \in [\frac{1}{2},1)$. 
\end{restatable}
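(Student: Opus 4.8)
The plan is to fix a voter $i$ with true report $p_i$ and an arbitrary misreport $p_i^\star$, and to show $\ellone{p_i}{\A^\G_{c_\tau}(P)} \le \ellone{p_i}{\A^\G_{c_\tau}(P^\star)}$. Write $a = \A^\G(P)$, $a^\star = \A^\G(P^\star)$, and $b = c_\tau(a)$, $b^\star = c_\tau(a^\star)$ for the post-cutoff aggregates. The key structural fact I would exploit is \Cref{lem:greedy_max_high_aggregate}: whether \textsc{GreedyMax} produces a $j$-extreme outcome depends only on $\min_i p_{i,j}$ and on $\sum_{k \ne j}\max_i p_{i,k}$. Since $\tau \ge \sfrac12$, at most one coordinate can ever be extreme, so the cutoff modifies at most one coordinate (together with a uniform redistribution). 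I would split into cases according to whether $a$ (the outcome under the truthful report) is extreme, and for each case, whether $a^\star$ is extreme and on which coordinate.

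The main case distinction: \textbf{Case 1: $a$ is not extreme.} Then $b = a = \A^\G(P)$, and since \textsc{GreedyMax} is truthful (it is a moving-phantom mechanism), $\ellone{p_i}{a} \le \ellone{p_i}{\A^\G(P^\star)} = \ellone{p_i}{a^\star}$. If $a^\star$ is also not extreme we are done since $b^\star = a^\star$. If $a^\star$ is $j$-extreme, I must show the cutoff only moves $a^\star$ \emph{farther} from $p_i$, i.e. $\ellone{p_i}{a^\star} \le \ellone{p_i}{b^\star}$; this should follow because by \Cref{lem:greedy_max_high_aggregate}, $a^\star_j > \tau$ forces $\min_i p_{i,j} > \tau$, hence in particular $p_{i,j} > \tau$, so $p_{i,j}$ lies above $\tau$ while the cutoff pushes coordinate $j$ down to $\tau$ and pushes every other coordinate up; a short $\ell_1$ bookkeeping argument (coordinate $j$ contributes an increase of $a^\star_j - \tau$ to the distance, and the other coordinates contribute at most the same total mass being moved) finishes it. \textbf{Case 2: $a$ is $j$-extreme.} Now $b_j = \tau$, $b_k = a_k + (a_j - \tau)/(m-1)$ for $k \ne j$. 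Here the delicate subcase is when voter $i$'s true preference already satisfies $p_{i,j} > \tau$ (forced by the lemma, via $\min_i p_{i,j} > \tau$) — so the cutoff actually hurts voter $i$ relative to $a$, and I need to argue voter $i$ still cannot do better by deviating. The plan is: either $a^\star$ is also $j$-extreme, in which case I would like to conclude $\ellone{p_i}{b} \le \ellone{p_i}{b^\star}$ by comparing it to $\ellone{p_i}{a} \le \ellone{p_i}{a^\star}$ (truthfulness of \textsc{GreedyMax}) plus understanding how the cutoff interacts with the $\ell_1$-distance — the cutoff is a fixed contraction-like map and one should show it is ``$\ell_1$-nonexpansive toward $p_i$'' on the relevant region when $p_{i,j}>\tau$; or $a^\star$ is not $j$-extreme (extreme on a different coordinate, or not extreme at all), in which case I compare $\ellone{p_i}{b^\star}$ directly against $\ellone{p_i}{b}$ using that $b_j = \tau \ge \sfrac12 \ge b^\star_j$ while $p_{i,j} > \tau$, so coordinate $j$ alone already contributes at least $p_{i,j} - \tau$ more to the distance under $b^\star$, and one shows the remaining coordinates cannot compensate.

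The main obstacle I anticipate is the subcase of Case 2 where \emph{both} $a$ and $a^\star$ are $j$-extreme on the same coordinate $j$: here neither the truthfulness of \textsc{GreedyMax} alone nor a crude monotonicity argument suffices, because the cutoff is a nonlinear (argmax-dependent) operation and one must show that the composition $c_\tau \circ \A^\G$ still discourages deviation. I would handle this by observing that on the set of $j$-extreme aggregates, $c_\tau$ acts as a fixed affine map (subtract from coordinate $j$, add uniformly elsewhere), and that this affine map, restricted to that set and viewed as a function to $\Delta^{(m-1)}$, does not decrease $\ell_1$-distance to any point $p_i$ with $p_{i,j} \ge \tau$; combined with $\ellone{p_i}{a} \le \ellone{p_i}{a^\star}$ this gives the claim. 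The other subtle point is ensuring the boundary cases — where $a^\star$ transitions from extreme to non-extreme, or where a voter's deviation changes which coordinate is extreme — are covered; these I would dispatch by noting a $k$-extreme outcome with $k \ne j$ forces $\min_i p_{i,k} > \tau$, which (for $i$ being the deviating voter whose true report has $p_{i,j} > \tau$, with $p_{i,j}+p_{i,k} \le 1$ and $\tau \ge \sfrac12$) is impossible, so in Case 2 the only possibilities for $a^\star$ are ``$j$-extreme'' or ``not extreme,'' simplifying the analysis considerably.
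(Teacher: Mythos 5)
Your overall structure matches the paper's: split on whether the truthful aggregate $a = \A^\G(P)$ is extreme, and within the non-extreme case, on whether the deviation aggregate $a^\star$ is extreme. You also correctly observe, via \Cref{lem:greedy_max_high_aggregate}, that when $a$ is $j$-extreme, $a^\star$ can only be $j$-extreme or non-extreme (a $k$-extreme $a^\star$ with $k\neq j$ would force $p_{i',k}>\tau$ for the non-deviating voters, contradicting $p_{i',j}>\tau$ and $\tau\geq\sfrac12$). However, two of the key steps are flawed.

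First, in your Case~1 subcase ($a$ non-extreme, $a^\star$ $j$-extreme), you claim that \Cref{lem:greedy_max_high_aggregate} gives $p_{i,j}>\tau$ and then argue the cutoff only moves $a^\star$ away from $p_i$. But the lemma is applied to the deviated profile $P^\star$, so the conclusion is $p^\star_{i,j}>\tau$ --- the \emph{misreport}, not the true preference. Voter $i$ may have $p_{i,j}\leq\tau$ and manufacture a $j$-extreme $a^\star$ by misreporting a large $p^\star_{i,j}$. In that event the cutoff moves coordinate $j$ \emph{toward} $p_{i,j}$, so your bookkeeping step fails and you cannot conclude $\ellone{p_i}{c_\tau(a^\star)}\geq\ellone{p_i}{a^\star}$. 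The paper handles this case differently: it forms the profile $Q=(p_1,\dots,p_{n-1},\bar a^\star)$ where voter $n$ reports the cutoff aggregate itself, and uses \Cref{lem:greedy_max_exact_aggregate} to show $\A^\G(Q)=\bar a^\star$. Truthfulness of \textsc{GreedyMax} then gives $\ellone{p_n}{\bar a^\star}\geq\ellone{p_n}{a}=\ellone{p_n}{\bar a}$ directly, with no need to compare $a^\star$ and $\bar a^\star$.

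Second, in your Case~2 subcase where both $a$ and $a^\star$ are $j$-extreme, the lemma you propose to prove --- that $c_\tau$ restricted to $j$-extreme points does not decrease $\ell_1$-distance to any $p_i$ with $p_{i,j}\geq\tau$ --- is false as stated. With $m=3$, $\tau=\sfrac12$, $x=(0.6,0.3,0.1)$ (so $c_\tau(x)=(0.5,0.35,0.15)$) and $p=(0.51,0.39,0.1)$, one has $\ellone{p}{x}=0.18$ but $\ellone{p}{c_\tau(x)}=0.1$. The claim becomes true only with the additional structural fact from \Cref{lem:greedy_max_exact_aggregate} that $a_k=\max_{i'}p_{i',k}\geq p_{i,k}$ for $k\neq j$; and even then, combining the distance increments with $\ellone{p_i}{a}\leq\ellone{p_i}{a^\star}$ is not immediate, because the increment $2\min(a_j-\tau,\,p_{i,j}-\tau)$ need not be monotone between $a$ and $a^\star$. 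The paper avoids this altogether: once $a$ is $m$-extreme, it shows $\bar a_j\geq p_{n,j}$ for $j<m$ and $\bar a_m=\tau<p_{n,m}$, then uses the identity $\ellone{u}{v}=2-2\sum_j\min(u_j,v_j)$ to compute $\ellone{p_n}{\bar a^\star}-\ellone{p_n}{\bar a}$ term by term; a strict negative difference would force $\bar a^\star_m>\tau$, contradicting the definition of the cutoff, and this works uniformly whether $a^\star$ is extreme or not.
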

\begin{proof_sketch}
Consider profiles $P=(p_1,\dots,p_{n-1},p_n)$ and  $P^\star = (p_1,\dots,p_{n-1},p^\star_n)$ where voter $n$ changes their report unilaterally. Let $a$ and $a^*$ be the aggregates of \textsc{GreedyMax} for $P$ and $P^\star$, respectively. Similarly, let $\bar{a}$ and $\bar{a}^\star$ be the aggregates of \textsc{CutoffGreedyMax} for $P$ and $P^\star$, respectively. Then, we distinguish two cases: 
(i) If $a$ is not extreme, then $\bar{a}=a$ and we can show that $\bar{a}^\star$ can be achieved through a misreport by voter $n$ under \textsc{GreedyMax}. This follows directly if $\bar{a}^\star = a^\star$. Otherwise, we prove that \textsc{GreedyMax} returns $\bar{a}^\star$ when voter $v$ reports $\bar{a}^\star$. Hence, $\ellone{v}{\bar{a}^\star} \geq \ellone{v}{\bar{a}}$ follows directly from the truthfulness of \textsc{GreedyMax}. (ii) The aggregate $a$ is extreme, and wlog $m$-extreme. By \Cref{lem:greedy_max_high_aggregate}, all reports in $P$ are $m$-extreme and $\sum_{j \in [m-1]} \max_{i \in [n]}\{p_{i,j}\} \leq 1-\tau$. From the former statement, we know that $\bar{a}_m = \tau \leq p_{n,m}$. Using the latter, we can moreover show that $\bar{a}_j \geq a_j = \max_{i \in [n]} (p_{i,j})\ge \p_{n,j}$ for all $j \in [m-1]$. In words, $\bar{a}$ gives more budget than voter $n$ would like to all alternatives besides $m$. Here, we can show that assuming a truthfulness violation of \textsc{CutoffGreedyMax}, i.e.\ $\ellone{v}{\bar{a}} > \ellone{v}{\bar{a}^\star}$, implies $\bar{a}^\star_m > \bar{a}_m = \tau$, a contradiction.
\end{proof_sketch}

\Cref{thm:existence_truthful_non_phantoms} now follows immediately from \Cref{prop:greedy_max_cutoff_anonymous_neutral_continuous,prop:greedy_max_cutoff_truthfulness,prop:greedy_max_cutoff_not_phantom}.

\subsection{More Truthful Cutoff-Phantoms} \label{sec:moretruthful}

Given the generality of the class of cutoff-phantoms, it is natural to ask whether other moving-phantom mechanisms also retain their truthfulness when combined with a cutoff function. The answer to this question is subtle: The moving-phantom mechanisms that were introduced in the literature fail truthfulness for any constant threshold, however, if we allow the threshold to depend on $m$ and $n$, truthfulness for some of them can be restored. More precisely, the phantom mechanisms \textsc{MaxUtilitarianWelfare} \citep{freeman2021truthful}, \textsc{IndependentMarkets (IM)} \citep{freeman2021truthful}, \textsc{PiecewiseUniform} \citep{caragiannis2022truthful}, and \textsc{Ladder} \citep{freeman2024project}, all fail truthfulness when combined with a constant threshold. We defer their definitions to \Cref{sec:appthreetwo}. 

\begin{restatable}{proposition}{noConstantThreshold}
    For any constant threshold $\tau \in [\frac{1}{2},1)$, \textsc{CutoffMaxUtilitarianWelfare}, \textsc{CutoffIM}, \textsc{CutoffPiecewise}, and \textsc{CutoffLadder} violate truthfulness. 
    \label{prop:noconstantthreshold}
\end{restatable}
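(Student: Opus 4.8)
The plan is to exhibit, for each of the four mechanisms and for each fixed $\tau \in [\frac12,1)$, a single profile and a single beneficial misreport. Since all four mechanisms are moving-phantom mechanisms, the composition with $c_\tau$ differs from the raw mechanism only on profiles whose raw aggregate has some coordinate exceeding $\tau$, so the natural strategy is to engineer a profile $P$ where the raw aggregate is $\tau$-extreme in coordinate $1$ (so the cutoff bites, redistributing mass to the other coordinates) and a misreport $p_n^\star$ for the last voter that pushes the raw aggregate just below the threshold (so the cutoff no longer applies, or applies less), landing the outcome strictly closer to $p_n$. The cleanest template is to take all voters except voter $n$ to report $(1,0,\dots,0)$, and voter $n$ to report something like $(\tfrac{1+\tau}{2}, \tfrac{1-\tau}{2}, 0,\dots,0)$, mirroring the profile used in the proof of \Cref{prop:greedy_max_cutoff_not_phantom}; then compute, using the explicit phantom systems of each mechanism (deferred to \Cref{sec:appthreetwo}), where normalization occurs and what the raw aggregate is. For \textsc{GreedyMax} this profile gave raw aggregate $(\tfrac{1+\tau}{2},\tfrac{1-\tau}{2},0,\dots,0)$, which is \emph{not} $\tau$-extreme when $\tau\ge\tfrac12$ — so \textsc{GreedyMax} survives — but the key point is that the other four mechanisms push the first coordinate strictly above $\tau$ on an analogous profile, precisely because they move their phantoms \emph{faster} near the top, and that gap is what breaks truthfulness.

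Concretely, I would first record, for each of \textsc{MaxUtilitarianWelfare}, \textsc{IM}, \textsc{PiecewiseUniform}, and \textsc{Ladder}, the raw aggregate on a family of profiles of the form "$n-1$ voters at a vertex, one voter at a near-vertex point parameterized by a small $\eps$". The second step is to choose the parameters so that the raw aggregate on $P$ has first coordinate $> \tau$, hence the cutoff redistributes a positive amount $\delta = a_1 - \tau$ away from alternative $1$ and onto the others (in particular onto alternative $2$, which voter $n$ cares about), while after voter $n$'s misreport the raw aggregate on $P^\star$ has first coordinate $\le \tau$, so $\bar a^\star = a^\star$ is the unmodified moving-phantom output. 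Then the third step is the disutility comparison: by truthfulness of the underlying moving-phantom mechanism, $\ellone{p_n}{a^\star} \ge \ellone{p_n}{a}$, but the cutoff \emph{moves $a$ toward} $p_n$ in the relevant coordinates (it lowers coordinate $1$ toward $p_{n,1}$ and raises coordinate $2$ toward $p_{n,2}$), so $\ellone{p_n}{\bar a} < \ellone{p_n}{a}$, and if the misreport is chosen so that $\bar a^\star = a^\star$ is not itself cut, we conclude $\ellone{p_n}{\bar a} < \ellone{p_n}{a^\star} = \ellone{p_n}{\bar a^\star}$ — a truthfulness violation. One has to be a little careful that the cutoff redistribution to coordinates $3,\dots,m$ (which voter $n$ does \emph{not} want, since $p_{n,j}=0$ there) does not overwhelm the gain on coordinates $1$ and $2$; this is controlled by taking $\delta$ small and checking that the $\ell_1$ change is dominated by the coordinate-$1$ and coordinate-$2$ movements, or by restricting to $m=3$ first and then noting the argument extends by padding with zero coordinates.

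The main obstacle is the bookkeeping of the four different phantom systems: \textsc{MaxUtilitarianWelfare}, \textsc{IM}, \textsc{PiecewiseUniform}, and \textsc{Ladder} each have their own schedule for how the phantoms move, and I need, for each, an explicit profile on which the raw first coordinate exceeds $\tau$ for \emph{every} $\tau < 1$ — i.e., I need profiles where the raw aggregate has a coordinate arbitrarily close to $1$ while a second voter still has a positive stake elsewhere. The natural candidate is again "$n-1$ copies of $(1,0,\dots,0)$ and one voter at $(1-\eps, \eps, 0,\dots,0)$": for mechanisms that, unlike \textsc{GreedyMax}, do not keep a phantom pinned at the top, the median on alternative $1$ will exceed the single dissenting vote $1-\eps$, i.e., equal some phantom value that can be made larger than $\tau$ by taking $\eps$ small. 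I expect each of the four to admit essentially this profile (possibly with the dissenting voter's mass split across two alternatives to make the redistribution argument cleanest), but verifying the exact normalization time and confirming $a_1 > \tau$ in each case — and that a suitable small perturbation of $p_n$ drops $a_1$ below $\tau$ without the new outcome being cut — is the part that requires the mechanism-specific computation, which I would carry out case by case in the appendix. A secondary subtlety is continuity at the threshold: since $c_\tau$ is continuous, one cannot rely on a discontinuous jump, so the misreport must genuinely change which regime the raw aggregate is in, and the strict inequality in the disutility comparison comes from $\delta > 0$ being bounded away from $0$ on the chosen profile, not from any jump.
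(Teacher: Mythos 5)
The proposal's closing chain of inequalities is backwards. You write that the cutoff ``moves $a$ toward $p_n$,'' so $\ellone{p_n}{\bar a} < \ellone{p_n}{a}$, and combine this with truthfulness of the raw mechanism ($\ellone{p_n}{a} \le \ellone{p_n}{a^\star}$) and with $\bar a^\star = a^\star$ to conclude $\ellone{p_n}{\bar a} < \ellone{p_n}{\bar a^\star}$. But that inequality says voter $n$ \emph{prefers the truthful report}, so it is a proof of truthfulness for this misreport, not a violation. A violation needs $\ellone{p_n}{\bar a} > \ellone{p_n}{\bar a^\star}$, and the two ingredients you invoke both push in the wrong direction for that.

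The template profile is also structurally unable to produce a violation. If voter $n$ reports $(1-\eps,\eps,0,\dots,0)$ (so $p_{n,1}$ is large), then \emph{any} cutoff-phantom outcome $\bar a^\star$ has $\bar a^\star_1 \le \tau$, hence $|p_{n,1} - \bar a^\star_1| \ge 1-\eps-\tau$, and by normalization $\sum_{j\ge 2}|p_{n,j}-\bar a^\star_j| \ge (1-\bar a^\star_1) - \eps \ge 1-\tau-\eps$. So every report gives voter $n$ disutility at least $2(1-\tau-\eps)$, which is exactly what the truthful report already achieves (e.g.\ under \textsc{IM} the raw aggregate at that profile is $(1-\eps,\eps,0,\dots,0)$, and cutting it gives disutility $2(1-\tau-\eps)$). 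The paper avoids this dead end by making the \emph{deviating} voter the one with \emph{small} mass on the capped alternative: with $p_1 = (0,1-\eps,\eps)$ the cap at $\tau$ on alternative~1 is neutral-to-helpful for voter~1, and the gain comes from where the redistributed mass lands on alternatives 2 and 3.

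Two further gaps. First, you never address that $n$ must be chosen as a function of $\tau$: for \textsc{IM}/\textsc{PiecewiseUniform}/\textsc{Ladder} the relevant phantom value near the top is roughly $\frac{n-1}{n}$, so the construction only works when $\frac{n-2}{n} \le \tau < \frac{n-1}{n}$; with a fixed $n$ and $\tau$ close to $1$ the profile fails. Second, \textsc{MaxUtilitarianWelfare} does not fit the ``$n-1$ vertex voters plus one near-vertex voter'' template at all: since its phantom schedule moves phantoms one at a time, the paper uses a two-voter, three-alternative profile with nontrivial mass on \emph{all three} coordinates for the non-deviating voter. So deferring ``bookkeeping'' to the appendix hides the fact that a single template does not cover all four mechanisms.
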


We can regain the truthfulness of certain cutoff-phantoms that are induced by a subclass of phantom systems: We call a  family of phantom systems $\F$ \emph{slow}, if, for any $n \in \N$ there exists some point in time at which the second to last phantom has already moved while the first phantom has not reached $1$ and at the point in time where the first phantom reaches $1$, the last phantom is still at $0$. Formally, there exists $t \in [0,1]$ such that $f_0(t)<1$, $f_{n-1}(t) >0$ and for the minimum value of $t$ with $f_0(t) = 1$ we have $f_n(t)=0$. Among the four moving-phantom mechanisms discussed above, \textsc{IM} and \textsc{Ladder} are slow while \textsc{MaxUtilitarianWelfare} and \textsc{Piecewise} are not slow.

\begin{restatable}{theorem}{propCutOffPhantomsTruthful} \label{prop:phantom_cutoff_truthfulness}
   For any slow family of phantom systems $\F$ there exists $\tau: \N \times \N \rightarrow [\frac{1}{2},1)$ such that the cutoff-phantom $\A^{\F}_{c_{\tau}}$ is truthful and not a moving-phantom mechanisms, i.e., for any $n,m \in \N$, $m \ge 3$, every phantom system $\I_n$, there exists a profile $P \in \mathcal{P}_{n,m}$ such that $\A^\F_{c_\tau}(P) \neq \A^{\I_n}(P)$. 
\end{restatable}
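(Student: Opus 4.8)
The plan is to mimic the structure of the \textsc{CutoffGreedyMax} argument (\Cref{prop:greedy_max_cutoff_not_phantom,prop:greedy_max_cutoff_truthfulness}), but to replace the ``constant threshold works'' step by a careful choice of $\tau(n,m)$ dictated by the slowness hypothesis. First I would observe that the ``not a moving-phantom'' part is essentially free and essentially the same as \Cref{prop:greedy_max_cutoff_not_phantom}: for any $\tau(n,m)\in[\sfrac12,1)$ one picks a profile on which $\A^{\F}$ outputs an allocation whose largest coordinate exceeds $\tau(n,m)$ — e.g. $n-1$ copies of $(1,0,\dots,0)$ together with one vote $(\tfrac{1+\tau}{2},\tfrac{1-\tau}{2},0,\dots,0)$, where monotonicity of the phantoms and $f_n(t)=0$ (which holds at the relevant normalization time for a slow system, since slowness forces the last phantom to lag) guarantee the first coordinate of $\A^{\F}(P)$ is at least $\tfrac{1+\tau}{2}>\tau$ — and then runs the same median-counting contradiction: after the cutoff, coordinate $3$ takes a value $\tfrac{1-\tau}{2(m-1)}$ matched by no vote, forcing a phantom there, which then pushes the median on coordinate $2$ below the required value. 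So the real content is truthfulness, and the crux is choosing $\tau$.

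The key step is to extract from slowness a threshold $\tau(n,m)$ that is large enough that \textsc{GreedyMax}-style control of the extreme region still holds. Concretely, slowness gives, for each $n$, a time $t_0$ with $f_0(t_0)<1$ and $f_{n-1}(t_0)>0$, and the property that at the first time $t_1$ with $f_0(t_1)=1$ we have $f_n(t_1)=0$. The plan is to set $\tau(n,m)$ close to $1$ — specifically larger than $f_0(t_1)$'s ``predecessors'' permit an alternative to grow, i.e.\ so large that whenever $a_j=\A^{\F}(P)_j>\tau(n,m)$, the normalization time must exceed $t_1$, all of $f_0,\dots,f_{n-1}$ have reached $1$, only $f_n=0$ is below, and hence $a_j=\med(1,\dots,1,0,p_{1,j},\dots,p_{n,j})$ behaves exactly like the \textsc{GreedyMax} median. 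In other words, I would prove a lemma analogous to \Cref{lem:greedy_max_high_aggregate}: there is a choice of $\tau(n,m)<1$ such that $\A^{\F}(P)_j>\tau(n,m)$ holds iff $\min_i p_{i,j}>\tau(n,m)$ and $\sum_{k\neq j}\max_i p_{i,k}<1-\tau(n,m)$, i.e.\ in the extreme regime the slow phantom system is indistinguishable from \textsc{GreedyMax}. The bound defining $\tau$ should be something like $\tau(n,m) > 1 - \delta$ where $\delta>0$ is small enough that $1-\delta$ exceeds $f_0(t)$ for every $t$ below the ``all top phantoms saturated'' time; slowness and continuity of the $f_k$ guarantee such a $\delta$ exists, and taking $\tau(n,m)=\max\{\sfrac12, 1-\delta\}$ keeps it in $[\sfrac12,1)$.

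Given that lemma, the truthfulness proof copies the two-case structure of \Cref{prop:greedy_max_cutoff_truthfulness}. Case (i): if $a=\A^{\F}(P)$ is not $\tau$-extreme, then $\bar a = a$, and either $\bar a^\star = a^\star$ (and truthfulness of $\A^{\F}$ finishes it), or $\bar a^\star\neq a^\star$, in which case one shows $\A^{\F}$ outputs $\bar a^\star$ when voter $n$ reports $\bar a^\star$ itself — this uses that $c_\tau(\bar a^\star)=\bar a^\star$ together with the fact that reporting the current outcome is a fixed point for a moving-phantom mechanism — and concludes via truthfulness of $\A^{\F}$. Case (ii): $a$ is $\tau$-extreme, wlog $m$-extreme; by the new lemma all reports are $m$-extreme and $\sum_{j\in[m-1]}\max_i p_{i,j}\le 1-\tau$, so $\bar a_m=\tau\le p_{n,m}$, and using the lemma's second condition (exactly as in the \textsc{GreedyMax} sketch) one gets $\bar a_j\ge a_j=\max_i p_{i,j}\ge p_{n,j}$ for all $j\in[m-1]$; then a beneficial misreport would have to pull coordinate $m$ of the output strictly above $\tau$, contradicting that the output of $c_\tau$ never exceeds $\tau$ on the argmax coordinate.

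The main obstacle I expect is the lemma that makes the slow system behave like \textsc{GreedyMax} in the extreme regime — in particular verifying that a single $\tau(n,m)<1$ can be chosen uniformly over all profiles (the normalization time $t^\star$ varies with $P$, so one needs that the ``extreme $\Rightarrow$ $t^\star$ past the saturation time'' implication is robust), and handling the boundary subtlety that a moving-phantom normalization time need not be unique. Continuity and monotonicity of the $f_k$, plus the fact that in the extreme regime the median on the big coordinate is pinned between $f_0(t^\star)$ and that coordinate's votes, should make this go through, but it is the step that requires genuine care rather than transcription from the \textsc{GreedyMax} case. A secondary subtlety is that $\tau$ now depends on $n$ and $m$, so one must double-check that the cutoff function $c_\tau$ and \Cref{prop:greedy_max_cutoff_anonymous_neutral_continuous} still deliver anonymity, neutrality, and continuity — they do, since for fixed $(n,m)$ the threshold is still a constant and the composition argument is unchanged.
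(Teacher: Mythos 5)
Your high-level plan is the same as the paper's: construct a threshold function $\tau(n,m)$ from slowness, prove an extreme-regime characterization lemma analogous to \Cref{lem:greedy_max_high_aggregate}, then transcribe the three-case truthfulness argument from \Cref{prop:greedy_max_cutoff_truthfulness}, and reuse the profile from \Cref{prop:greedy_max_cutoff_not_phantom} for the non-phantom part. That is indeed the paper's structure. However, the specific mechanism you propose for choosing $\tau$ contains a gap that would break the key lemma.

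You propose to take $\tau$ ``close to $1$'' so that whenever $a_j > \tau$, the normalization time $t^\star$ lies past the saturation time, with ``all of $f_0,\dots,f_{n-1}$ have reached $1$, only $f_n=0$ is below.'' This does not follow from slowness: slowness gives you $f_n(t')=0$ at the first time $t'$ with $f_0(t')=1$, but says nothing about $f_1,\dots,f_{n-1}$ being at $1$ at time $t'$ --- the ordering $f_0\ge\dots\ge f_n$ permits them to lag arbitrarily. More fundamentally, the regime ``$f_0,\dots,f_{n-1}$ all at $1$, $f_n=0$'' can only coincide with a normalization time when $\sum_j \max_i p_{i,j}=1$, i.e., when all votes are identical; generically the sum of medians in that regime strictly exceeds $1$, so normalization happens strictly earlier. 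And since the median on coordinate $j$ is bounded above by $f_0(t^\star)$, the inference ``$a_j>\tau$ forces $t^\star\ge t'$'' would require $\tau\ge 1$, contradicting $\tau<1$.

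The paper's fix is to demand much less of the phantoms: in the extreme regime one only needs $f_0,\dots,f_{n-1}$ to be $\ge 1-\tau$ at normalization, not $=1$, because when all votes are $j$-extreme every vote on a non-$j$ coordinate is $\le 1-\tau$, and that is all it takes to pin the median on those coordinates to $\max_i p_{i,k}$. Concretely, the paper sets $\tau(n,m)=1-f_{n-1}(t^\star_{m,n})$, where $t^\star_{m,n}$ is the \emph{maximum} $t$ with $f_0(t)\le 1-(m-1)f_{n-1}(t)$. Slowness guarantees $\tau(n,m)<1$ and that $f_n$ is still $0$ throughout the relevant window $[t^\star_{m,n},t']$; the balance condition on $t^\star_{m,n}$ ensures that for any $j$-extreme profile the true normalization time falls inside this window. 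This is the ``genuine care'' step you flagged; your heuristic of pushing $\tau$ toward $1$ overshoots the requirement and lands in a regime that the dynamics never reach. Everything else in your proposal (the three-case truthfulness argument, the reuse of the non-phantom profile, the observation that anonymity/neutrality/continuity are unaffected because $\tau$ is constant for fixed $(n,m)$) matches the paper and would go through once the threshold and the accompanying lemma are fixed as above.
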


Our proof is constructive, i.e., for any slow phantom system we provide a threshold function $\tau$ for which the statement holds. Namely, for $n,m \in \N$, let $t \in [0,1]$ be maximum such that $f_0(t) \le 1-(m-1) \cdot f_{n-1}(t)$. Then, define $\tau(n,m) = 1 -f_{n-1}(t)$. This results in the following threshold functions for \textsc{CutoffIM} and \textsc{CutoffLadder}, respectively: $$\tau(n,m) = \frac{n+m-2}{n+m-1} \quad \quad \text{and} \quad \quad \tau(n,m) = \frac{mn-1}{mn}.$$ 
Proposition~\ref{prop:noconstantthreshold} does not rule out smaller non-constant threshold functions for \textsc{CutoffIM} and \textsc{CutoffLadder}, but the instance used in the proof does preclude threshold functions that do not approach 1 as $n \to \infty$.
The proof technique for the first part of \Cref{prop:phantom_cutoff_truthfulness} (truthfulness) is similar to the proof sketch presented in \Cref{subsec:greedymax}. Namely, we first show that an analogous version of \Cref{lem:greedy_max_high_aggregate} holds for slow moving-phantom mechanisms and the threshold functions introduced above. Additionally, we show a second lemma which characterizes the form of extreme aggregates. Having established these two lemmas, we can then apply the same proof as for \Cref{prop:greedy_max_cutoff_truthfulness}. For the second part of the proof (not a moving-phantom mechanism) we can use the same profile as in the proof of \Cref{prop:greedy_max_cutoff_not_phantom}.

\section{Truthful and Unanimous Non-Phantom Mechanisms}
\label{sec:truthful_unanimous_non_phantoms}

\Cref{prop:phantom_cutoff_truthfulness} defines a broad subclass of cutoff-phantoms that are truthful and not moving-phantoms. However, the definition of cutoff-phantoms is arguably unnatural in that they are defined by ruling out extreme aggregates, and this is the case even if every voter votes for the same extreme distribution. That is, cutoff-phantoms violate unanimity.

\paragraph{Unanimity.} A budget-aggregation mechanism $\A$ is \emph{unanimous} if, for any profile $P = (p, \ldots, p)$, it holds that $\mathcal{A}(P)=p$.

It is natural to ask whether there exists a truthful, anonymous, neutral, and continuous mechanism that is additionally unanimous. We do not have a complete answer to this question, but provide a partial result in support of a positive answer.

\begin{restatable}{theorem}{unanimousproof}\label{thm:existence_unanimous_truthful_non_phantoms}
    There exists a budget-aggregation mechanism $\A$ that is truthful, unanimous, anonymous, neutral, and continuous for $n=2$ and $m=3$, and there does not exist a phantom system $\F_2$ with $\A(P)=\A^{\F_2}(P)$ for all $P \in \mathcal{P}_{2,3}$.
\end{restatable}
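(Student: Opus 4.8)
The plan is to construct $\A$ explicitly for $n=2$, $m=3$ as a controlled modification of a moving-phantom mechanism. I would start from a base family of phantom systems $\F$ that is \emph{slow} in the sense of \Cref{prop:phantom_cutoff_truthfulness}: slowness makes $\A^\F$ unanimous, since at the first time $t$ with $f_0(t)=1$ we still have $f_2(t)=0$, so on a unanimous profile $(p,p)$ every coordinate median $\med(1,f_1(t),0,p_j,p_j)$ equals $p_j$ and the sum is $1$. I would then alter the output only on an open set of profiles $U\subseteq\mathcal{P}_{2,3}$ that is bounded away from the diagonal $\{(p,p):p\in\Delta^{(2)}\}$ and invariant under swapping the two voters and under permuting alternatives (with the output permuted accordingly); on $U$ the aggregate $\A^\F(P)$ is replaced by $r(P)$, where $r$ is a continuous redistribution map that agrees with $\A^\F$ on $\partial U$ (giving continuity of $\A$) and is designed so that at some non-unanimous profile $P_0\in U$ — where the two voters disagree on two coordinates $j,k$ — the aggregate $a=\A(P_0)$ satisfies $\min_i p_{i,j}<a_j<\max_i p_{i,j}$, $\min_i p_{i,k}<a_k<\max_i p_{i,k}$, and $a_j\neq a_k$. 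Anonymity and neutrality of $\A$ are inherited from $\A^\F$ since $U$ and $r$ are chosen symmetrically, and unanimity is immediate because $U$ avoids the diagonal and $\A^\F$ is unanimous there.

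For the non-phantom claim I would first prove a structural fact about phantom mechanisms with $n=2$, $m=3$: if $b=\A^\I(P)$ for a phantom system $\I$ with time of normalization $t^\star$, and the coordinate value $b_j$ lies strictly between the two reports on coordinate $j$, then $b_j=f_1(t^\star)$. Indeed, among the five values $f_0(t^\star)\ge f_1(t^\star)\ge f_2(t^\star)$ and the sorted votes $\ell_j\le h_j$, the median is strictly inside $(\ell_j,h_j)$, hence it is not $\ell_j$ or $h_j$; it cannot be $f_0(t^\star)$, because if $f_0(t^\star)\in(\ell_j,h_j)$ then only $h_j$ among the five exceeds it; symmetrically it is not $f_2(t^\star)$; so it must be $f_1(t^\star)$. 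Consequently no phantom mechanism can output an aggregate that is strictly interior on two coordinates while taking distinct values there. Since $\A(P_0)$ is exactly such an aggregate by construction, $\A(P_0)\neq\A^{\I_2}(P_0)$ for every phantom system $\I_2$, which is the required separation.

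The crux — and the main obstacle — is truthfulness of $\A$. A local edit of the output is fatal in general: truthfulness forces, for each voter and each fixed report of the other voter, the outcome to be an $\ell_1$-closest point of the set of outcomes attainable by that voter's reports, so moving $\A^\F(P_0)$ away from its current value can only hurt the deviating voter unless the \emph{whole} family of attainable outcomes is altered consistently — which is precisely why the cutoff construction (\Cref{prop:greedy_max_cutoff_truthfulness}) works by deleting a coherent set of aggregates from all profiles rather than editing one. I would therefore design $r$ and $U$ so that the aggregates produced by $r$, together with those produced by $\A^\F$ outside $U$, still form, for each fixed report of one voter, an outcome set whose nearest-point selection is the output, and then verify truthfulness by the same two-case scheme as in the proof sketch of \Cref{prop:greedy_max_cutoff_truthfulness}: (i) if a voter's truthful report yields an unmodified aggregate, show that the outcome of any deviation — modified or not — is itself achievable under $\A^\F$ by some report of that voter, so truthfulness of $\A^\F$ applies; (ii) if the truthful report yields a modified aggregate, prove (via an analogue of \Cref{lem:greedy_max_high_aggregate} characterizing the shape of the modified aggregates) that a profitable deviation would force the deviation's aggregate into an impossible configuration. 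Simultaneously meeting (i)/(ii), landing the aggregate in the phantom-forbidden configuration of the previous paragraph, keeping $U$ away from the diagonal, and matching $r$ with $\A^\F$ on $\partial U$ is a very tight set of constraints, which is the reason the statement is established only for $n=2$ and $m=3$.
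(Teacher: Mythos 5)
Your proposal identifies the right obstacles, but it is an outline rather than a proof: the map $r$ and the set $U$ are never actually constructed, and the truthfulness verification — which you yourself flag as ``the crux'' — is left to a ``very tight set of constraints'' that you describe but do not resolve. Without an explicit $r$ and $U$, the claim that the resulting option sets remain nearest-point consistent is an aspiration, not an argument; indeed, you correctly observe earlier in your own write-up that a local edit of the aggregate is ``fatal in general,'' and nothing you write overcomes that objection for your own construction.

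The paper's actual proof also takes a materially different route. It never modifies the \emph{output} of a phantom mechanism on a region of profile space; instead it modifies the \emph{input}. Concretely, it introduces a new phantom mechanism, \textsc{GreedyMin} (which, incidentally, is \emph{not} slow: its phantoms satisfy $f_0(t)<1 \Leftrightarrow t<\sfrac{1}{2}$ and $f_1(t)>0 \Leftrightarrow t>\sfrac{1}{2}$, so there is no $t$ with both — so your slowness-based route to unanimity would not apply to the paper's own base mechanism, even though the observation is correct in itself). It then defines a bivariate cutoff $\Tilde c(v,w)$ that shaves budget off $v$'s dominant coordinate by an amount $\gamma(v,w)$ that interpolates linearly between full cutoff (when $w$ is far from extreme, $w_1\le 0.7$) and no cutoff (when $w$ is itself extreme, $w_1\ge 0.8$), and runs \textsc{GreedyMin} on $(\Tilde c(v,w),\Tilde c(w,v))$. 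Unanimity is immediate since $\Tilde c(v,v)=v$. Truthfulness is established by an extensive case analysis, supported by two lemmas about how moving the \emph{other} voter's (cut-off) vote toward or away from the extreme corner can only weakly help. For the non-phantom part the paper uses two concrete nearby profiles whose $\mathcal{U}$-aggregates would pin the middle phantom at incompatible values, contradicting its monotonicity.

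Two of your ingredients are correct and worth keeping in mind: slowness does imply unanimity (for all $n$, in fact, by exactly your reasoning); and your structural lemma — that for $n=2$ a phantom aggregate which is strictly interior on a coordinate must equal $f_1(t^\star)$ there, hence cannot be strictly interior with distinct values on two coordinates — is a clean alternative ``forbidden configuration'' criterion that the paper does not use. But these do not close the gap: the mechanism itself is never specified, and the truthfulness argument, which is the entire difficulty, is absent.
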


To prove \Cref{thm:existence_unanimous_truthful_non_phantoms}, we first introduce a second approach for generating truthful non-phantoms. Namely, instead of ``cutting off'' the aggregate of a moving-phantom mechanism, we suggest to cut off the voter reports before applying a moving-phantom mechanism. To obtain a truthful mechanism via this approach, we introduce a novel moving-phantom mechanism called \textsc{GreedyMin} (as the name suggests, it can be seen as a dual of \textsc{GreedyMax}). We then show in \Cref{app:votecutgreedy} that \textsc{VoteCutGreedyMin} (defined by applying a constant cutoff function to the voter reports and then running \textsc{GreedyMin}), is truthful for any $n$ and $m=3$. 

Clearly, this approach is not sufficient to additionally guarantee unanimity. Hence, in \Cref{app:unamvotecutgreedy}, we define a new cutoff function for $n=2$ that is applied to an extreme vote if and only if the other vote is not extreme. In particular, if both votes are identical, then the cutoff is not applied, preserving unanimity. As an additional subtlety, to guarantee continuity and truthfulness, we only partially apply the cutoff to an extreme vote if the other vote is close to extreme. While our mechanism is carefully designed with concrete thresholds,
even proving truthfulness for the specific mechanism and the restricted case of $n=2$, $m=3$ is nontrivial - the proof relies on a lengthy case analysis inspired by a visual exploration of the rule's behavior. 
While we are optimistic that our proof can be extended to show truthfulness for larger $n$, our arguments do not naturally translate to higher values of $m$.

\section{Lower Bound for Mean-Approximation} \label{sec:lowerBound}

As argued in \Cref{sec:prelim}, the \textsc{mean} mechanism is not truthful. However, due to its intuitive appeal and practical relevance, \cite{caragiannis2022truthful} define the concept of a $\delta$-approximation to the mean as a fairness measure of a mechanism.

\textbf{Mean approximation.} For $\delta: \N \times \N \rightarrow \mathbb{R}$, a mechanism $\A$ achieves a \textit{$\delta$-approximation} to the mean if for any $m,n \in \N$, $m\ge 2$ and profile $P \in \mathcal{P}_{n,m}$ $$\ellone{\A(P)}{\Amean(P)} \le \delta(m,n).$$
\cite{freeman2024project} define a similar notion using the $\ell_\infty$-distance instead of the $\ell_1$-distance. This can be interpreted as a measure that targets fairness towards alternatives, since the maximum deviation to the mean over all alternatives is considered. Consequently, they call their notion project fairness.\footnote{\cite{freeman2024project} refer to alternatives as \emph{projects}.}

\textbf{Project fairness.} For $\delta: \N \times \N \rightarrow \mathbb{R}$, a mechanism $\A$ is \textit{$\delta$-project fair} if for all $m,n \in \N$, $m\ge 2$ and profiles $P \in \mathcal{P}_{n,m}$ $$\ellinf{\A(P)}{\Amean(P)} \le \delta(m,n).$$ 

\citet[Theorem 7]{caragiannis2022truthful} show that no moving-phantom mechanism can achieve a $\delta$-approximation with $\delta(m,n) < \frac{m-1}{m}$ for all $m \in \N$ and even $n \in \N$ by providing a specific family of profiles. \citet[Proposition 2]{freeman2024project} use the same profiles to show that no moving-phantom mechanism can achieve $\delta$-project fairness with $\delta(m,n) < \frac{m-1}{2m}$ for all $m\in \N$ and even $n \in N$, and with $\delta(m,n) < \frac{m-1}{2m}\cdot\frac{n-1}{n}$ for all $m\in \N$ and odd $n \in N$. However, the arguments used do not easily extend outside of the class of moving-phantom mechanisms. Instead, \cite{caragiannis2022truthful} use a two-voter profile to show that no truthful mechanism can achieve a $\delta$-approximation to the mean with $\delta(m,2) < \frac{1}{2}$ for any $m \in \N$. One can verify that the same arguments show that no truthful mechanism can achieve $\delta$-project fairness with $\delta(m,2) < \frac{1}{4}$ for any $m \in \N$. In this section, we directly strengthen the results of \cite{caragiannis2022truthful} and \cite{freeman2024project} by showing that the $\delta$-approximation and $\delta$-project fairness lower bounds for moving-phantom mechanisms in fact apply to any continuous, truthful, anonymous and neutral mechanism. To do so we consider the same profiles but characterize not only the output of moving-phantom mechanisms but of all mechanisms that fall into this class, which requires a significantly more detailed argument. In the following, we refer to mechanisms satisfying the four properties of \textbf{c}ontinuity, \textbf{t}ruthfulness, \textbf{a}nonymity, and \textbf{n}eutrality as \emph{CTAN} mechanisms.

\begin{figure}[ht]
        \begin{subfigure}[t]{0.475\textwidth}
            \begin{tikzpicture}[scale=5.5]
                \plotSimplex
                \small
                \node[draw, circle, minimum size=20, fill=white, fill opacity=1] at (0.5,0.286) (callout){};
                \draw[color1, fill] (0.5,0.286) node[voter, fill, xshift=-2.5, yshift=3.75](center1){};
                \draw[color2] (0.5,0.286) node[voter, fill=white, xshift=2.5, yshift=-3.75](center2){};
                \node[color2, label, anchor=north west] at (center2) {$\symvote{\frac{1}{3}}$};    
                \node[color1, label, anchor=south east, xshift=1pt] at (center1) {$\symvote{\frac{1}{3}}$};    
                \draw[color1, fill] (0,0) node[voter, fill]{} node[label, anchor=south east]{$\symvote{1}$};
                \draw[color2] (0.75,0.433) node[voter, fill=white]{} node[label, anchor=south west]{$\symvote{0}$};    
            
                \draw[color1] (0.35*0.75,0.35*0.433) node[aggregate, fill]{} node[label, anchor=south east, xshift=-0.5pt]{$\A(P) = \symvote{\beta}$};    
                \draw[color2] (0.85*0.75,0.85*0.433) node[aggregate, fill=white]{} node[label, anchor=north west, xshift=-1pt]{$\A(P') = \symvote{\gamma}$};    
                 
            \end{tikzpicture}
            \caption{Profiles $P$ (orange, filled) and $P'$ (blue, outline) with their corresponding aggregates. 
            }
            \label{fig:lower_bound_n2_two_profiles}
        \end{subfigure}%
        \hfill
        \begin{subfigure}[t]{0.475\textwidth}
            \begin{tikzpicture}[scale=5.5]
                \plotSimplex
                \small
                \node[draw, circle, minimum size=20, fill=white, fill opacity=1] at (0.5,0.286) (callout){};
                
                \draw[color1, fill] (callout) node[voter, fill, xshift=-4.1, yshift=2.9](center1){};
                \draw[color2] (callout) node[voter, fill=white, xshift=4.1, yshift=2.9](center2){};
                \node[color2, label, anchor=south west, xshift=-1pt] at (center2) {$\symvote{\frac{1}{m}}$};   
                \node[color1, label, anchor=south east, yshift=0pt, xshift=2pt] at (center1) {$\symvote{\frac{1}{m}}$};    
                \draw[color1, fill] (0.25*0.75,0.25*0.433) node[voter, fill]{} node[label, anchor=north east, xshift=-1pt, yshift=2pt]{$\symvote{\alpha}$};    
                \draw[color2] (0.5,0) node[voter, fill=white]{} node[label, anchor=south west]{$\hat{p}$};
    
                \draw[color1] (0.47*0.75,0.47*0.433) node[aggregate, fill]{} node[label, anchor=east, yshift=1.8pt, xshift=-2pt]{$\A(P^\star) = \symvote{\beta'}$};  
                \draw[color2] (callout) node[aggregate, fill=white, xshift=0, yshift=-5](center3){} node[label, anchor=north west, xshift=-2pt, yshift=-4pt]{$\A(\hat{P}) = \symvote{\frac{1}{m}}$};  
    
                \draw[dashed, ->, thin, shorten <= 5pt, shorten >= 5pt] (0.5,0) --(0.25*0.75,0.25*0.433);
            \end{tikzpicture}
            \caption{Profiles $P^\star$ (orange, filled) and $\hat{P}$ (blue, outline) with their corresponding aggregates.}
            \label{fig:lower_bound_n2_parametric_profile}
        \end{subfigure}
        \caption{
        Proof sketch of \Cref{thm:truthful_uniform} for $m = 3$ and $n = 2$. In the figure, circles denote voters and triangles denote aggregates. 
        For a value $\alpha \in [0,1]$ we write the vote $(\alpha, \frac{1-\alpha}{2}, \frac{1-\alpha}{2})$ as $\symvote{\alpha}$.
        All voters and aggregates in the white circle in the center are positioned at $p(\frac{1}{3}) = (\frac{1}{3}, \frac{1}{3}, \frac{1}{3})$. 
        The goal of the proof is to show that for the profile $P$ (orange profile in \Cref{fig:lower_bound_n2_two_profiles}), the aggregate of any CTAN mechanism has to be in the center, i.e., $\beta=\frac{1}{3}$.
        To do so, we first show that for all profiles in \Cref{fig:proof-sketch-main-theorem-two}, the aggregate must lie on the line segment connecting their two voter reports. Then, we suppose for contradiction that in profile $P$ it holds that $\beta > \frac{1}{3}$ and consider profile $P'$ (blue profile, \Cref{fig:lower_bound_n2_two_profiles}) with aggregate $p(\gamma)$, with $\gamma\leq\frac{1}{3}$. If $\gamma < \frac{1}{3}$, we can then modify $P$ and $P'$ in such a way that their aggregates do not change but that the modified profiles are identical -- a contradiction. If $\gamma=\frac{1}{3}$ then we consider the profile $\hat{P}$ (blue profile, \Cref{fig:lower_bound_n2_parametric_profile}) along with $P^\star$ (orange profile, \Cref{fig:lower_bound_n2_parametric_profile}), a modified version of $P$ which is chosen so that its aggregate $p(\beta')$ is sufficiently close to the center. Accordingly, the voter at $\hat{p}$ prefers the orange aggregate $p(\beta')$ over the blue aggregate $p(\frac{1}{m})$ and therefore can manipulate by switching to $p(\alpha)$, effectively turning profile $\hat{P}$ into profile $P^\star$ and obtaining a preferred outcome, a contradiction to truthfulness.
        }
        \label{fig:proof-sketch-main-theorem-two}
    \end{figure}
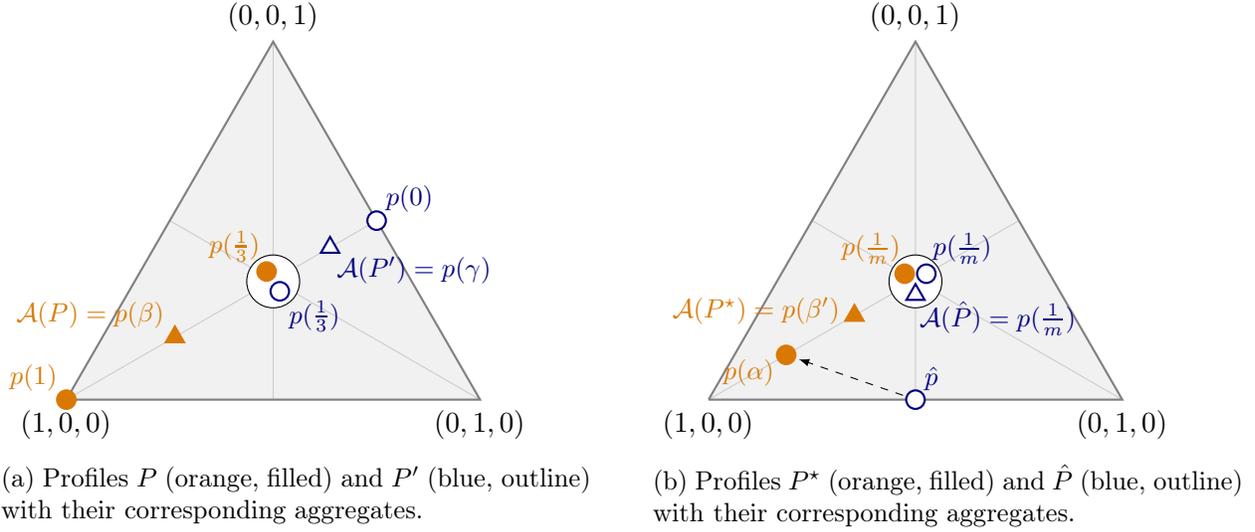

\begin{restatable}[formal]{maintheorem}{maintheoremtwo} \label{thm:truthful_uniform}
For any  $m, n \in \N$, $m \ge 2$ there exists a profile in $\mathcal{P}_{n,m}$ for which every truthful, anonymous, neutral, and continuous mechanism returns an outcome with  
    \begin{itemize}
        \item $\ell_1$-distance of $\frac{m-1}{m}$and $\ell_{\infty}$-distance of $\frac{m-1}{2m}$ to the mean if $n$ is even and
        \item $\ell_1$-distance of $\frac{m-1}{m}\cdot\frac{n-1}{n}$ and $\ell_{\infty}$-distance of $\frac{m-1}{2m}\cdot\frac{n-1}{n}$ to the mean if $n$ is odd.
    \end{itemize} 
\end{restatable}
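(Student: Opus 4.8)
The target profile is the natural one from \citet{caragiannis2022truthful}: for even $n$, let half the voters report $p(1)=(1,0,\dots,0)$ and the other half report $p(0)=(0,\frac{1}{m-1},\dots,\frac{1}{m-1})$, so that the mean is $p(\sfrac12)$; we want to show every CTAN mechanism outputs exactly $p(\sfrac1m)=(\frac1m,\dots,\frac1m)$, which is at $\ell_1$-distance $\frac{m-1}{m}$ and $\ell_\infty$-distance $\frac{m-1}{2m}$ from the mean. (The odd-$n$ case will follow by a near-identical construction with one fewer ``balanced'' voter, giving the $\frac{n-1}{n}$ factor, or by a padding argument adding one voter at $p(\sfrac1m)$.) The reductions via anonymity, neutrality, and continuity let us work on a one-parameter family: first use \Cref{thm:characterization_m2}-style reasoning together with neutrality to argue that on any profile in which every vote lies on the segment $\{p(\alpha):\alpha\in[0,1]\}$, the aggregate also lies on that segment and is symmetric in the last $m-1$ coordinates. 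This reduces the problem essentially to a two-alternative problem where the known characterization and monotonicity of truthful mechanisms apply.

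\textbf{Key steps.}
First I would set up the symmetry reduction: by neutrality under permutations of alternatives $2,\dots,m$, on any profile where all votes are of the form $p(\alpha_i)$ the aggregate must be fixed by all such permutations, hence of the form $p(\beta)$; moreover the map sending the vector $(\alpha_1,\dots,\alpha_n)$ to $\beta$ is the first coordinate of a CTAN aggregation on the ``collapsed'' instance, so by \Cref{thm:characterization_m2} it is a generalized median (moving-phantom) rule in that single coordinate, in particular monotone and continuous in each $\alpha_i$. Second, following the proof sketch in \Cref{fig:proof-sketch-main-theorem-two}: assume for contradiction that on the target profile $P$ (half at $p(1)$, half at $p(0)$) the output is $p(\beta)$ with $\beta>\sfrac1m$ (the case $\beta<\sfrac1m$ is symmetric by swapping the roles of alternative $1$ and the block $\{2,\dots,m\}$, exploiting neutrality). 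Consider the companion profile $P'$ obtained by moving the ``$p(1)$'' voters to $p(\sfrac1m)$ (or to some interior point): its aggregate is $p(\gamma)$ with $\gamma\le\sfrac1m$ by monotonicity. Third, do the gluing/manipulation argument: if $\gamma<\sfrac1m$ strictly, perturb both $P$ and $P'$ — using continuity and the one-dimensional monotone structure — so that the aggregates are unchanged but the two profiles become literally identical, contradiction. If $\gamma=\sfrac1m$, run the truthfulness manipulation of \Cref{fig:lower_bound_n2_parametric_profile}: build $\hat P$ with aggregate $p(\sfrac1m)$ and a perturbation $P^\star$ of $P$ with aggregate $p(\beta')$ for $\beta'$ slightly above $\sfrac1m$; a voter sitting at the appropriate $\hat p$ strictly prefers $p(\beta')$ to $p(\sfrac1m)$ in $\ell_1$-distance (here one checks $\ellone{\hat p}{p(\beta')}<\ellone{\hat p}{p(\sfrac1m)}$ by a direct sign computation on the coordinates), and can realize the switch by reporting $p(\alpha)$, contradicting truthfulness. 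Finally, translate the conclusion $\beta=\sfrac1m$ into the two stated distances, and handle odd $n$ by the padding variant.

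\textbf{Main obstacle.}
The delicate part is the gluing step — arguing that when $\gamma<\sfrac1m$ one can simultaneously perturb $P$ and $P'$ to a common profile while holding both aggregates fixed. This needs more than the abstract CTAN axioms: it leans on the one-dimensional reduction (so that the aggregate depends on the profile only through a monotone, continuous scalar rule) and on a careful choice of which voters to move and by how much, so that an intermediate-value argument yields the coincidence. A secondary subtlety is making the $\beta>\sfrac1m$ vs.\ $\beta<\sfrac1m$ dichotomy genuinely exhaustive and symmetric: the asymmetry between ``one special alternative'' and ``$m-1$ interchangeable alternatives'' means the reflection is not literally a neutrality permutation, so one must instead re-run the same argument with the roles reversed rather than invoking a single symmetry. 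Everything else — the symmetry reduction, the monotonicity from \Cref{thm:characterization_m2}, and the final distance computations — should be routine.
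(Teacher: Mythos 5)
The central error is your choice of profile. With half the voters at $p(1)=(1,0,\dots,0)$ and half at $p(0)=(0,\frac{1}{m-1},\dots,\frac{1}{m-1})$, the mean is $p(\frac{1}{2})$, and since $\ellone{p(x)}{p(y)}=2|x-y|$, the $\ell_1$-distance from the conjectured aggregate $p(\frac{1}{m})$ to the mean is $2\big|\tfrac{1}{m}-\tfrac{1}{2}\big|=\frac{m-2}{m}$, not $\frac{m-1}{m}$; likewise the $\ell_\infty$-distance is $\frac{m-2}{2m}$, not $\frac{m-1}{2m}$. For $m=2$ this distance is zero, so your profile cannot possibly establish the stated lower bound. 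The paper's profile instead puts $\lceil n/2\rceil$ voters at $p(\frac{1}{m})=(\frac{1}{m},\dots,\frac{1}{m})$ --- the fully symmetric interior point, not $p(0)$ --- and $\lfloor n/2\rfloor$ at $p(1)$; the mean is then $p(\frac{m+1}{2m})$, and $2\big|\tfrac{1}{m}-\tfrac{m+1}{2m}\big|=\frac{m-1}{m}$. Note that the companion $P'$ you build (half at $p(\frac{1}{m})$, half at $p(0)$) coincides with the paper's $P'$, so it appears you have swapped the roles of the primary profile $P$ and the companion $P'$.

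Two secondary issues. First, your Case-2 manipulation does not port to your profile. The paper's $\hat{P}$ places the fixed voters at $p(\frac{1}{m})$ --- which is invariant under \emph{every} coordinate permutation --- and the remaining voters at $\hat{p}=(\frac{1}{m-1},\dots,\frac{1}{m-1},0)$, so that $\hat{P}$ is a neutrality permutation of a profile ($P''$) whose aggregate is already pinned down to be $p(\frac{1}{m})$. With your $P$ the fixed voters sit at $p(0)$, which is not permutation-invariant, so the analogous $\hat{P}$ is not a permutation of anything with a known aggregate; and if one tries $\hat p = p(0)$ instead, the decisive $\ell_1$-comparison goes the wrong way ($\ellone{p(0)}{p(\beta')}=2\beta'>\tfrac{2}{m}=\ellone{p(0)}{p(\frac{1}{m})}$). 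Second, your appeal to \Cref{thm:characterization_m2} to conclude that the one-dimensional slice is a generalized-median rule is not justified: that characterization is about genuine two-alternative instances, not about restrictions of $m$-alternative mechanisms to a one-parameter family of inputs. The paper does not rely on this; it proves the needed invariance, ``no-crossing,'' and betweenness properties directly (\Cref{lem:truthfulness_half_space_line}).
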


The profile that we use to show \Cref{thm:truthful_uniform} is 
$P = (\p_1, \dots, \p_n)$ with 
\[
    \p_i = 
    \begin{cases}
        (\frac{1}{m}, \dots, \frac{1}{m}) & \text{ if } i \le \lceil\frac{n}{2}\rceil,\\    
        (1, 0, \dots, 0) & \text{ if } i > \lceil\frac{n}{2}\rceil.  
    \end{cases}
\]
The proof makes use of the symmetry of this and similar profiles and shows that the outcome of any CTAN mechanism for profile $P$ is $(\frac{1}{m},\dots,\frac{1}{m})$.
We provide a sketch of the proof of \Cref{thm:truthful_uniform} in \Cref{fig:proof-sketch-main-theorem-two} (and its caption).

For completeness, we state improved versions of the theorems from \cite{caragiannis2016truthful} and \cite{freeman2024project} that immediately follow from \cref{thm:truthful_uniform}.

\begin{corollary}
    For any $\delta: \N \times \N \rightarrow \N$, such that there are $m,n \in \N$ with $\delta(m,n) < \frac{m-1}{m}$ if $n$ is even or $\delta(m,n) < \frac{m-1}{m}\cdot\frac{n-1}{n}$ if $n$ is odd, there is no truthful, anonymous, neutral, and continuous mechanism that achieves a $\delta$-approximation to the mean.
\end{corollary}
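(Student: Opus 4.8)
The plan is to derive this directly from \Cref{thm:truthful_uniform}, which already pins down the \emph{exact} $\ell_1$-distance (not merely a lower bound) between the mean and the outcome of any CTAN mechanism on a specific profile. First I would fix a pair $m,n \in \N$ witnessing the hypothesis, so that $\delta(m,n) < \frac{m-1}{m}$ when $n$ is even and $\delta(m,n) < \frac{m-1}{m}\cdot\frac{n-1}{n}$ when $n$ is odd. By \Cref{thm:truthful_uniform} applied to this $m$ and $n$, there is a profile $P \in \mathcal{P}_{n,m}$ such that every truthful, anonymous, neutral, and continuous mechanism $\A$ satisfies $\ellone{\A(P)}{\Amean(P)} = \frac{m-1}{m}$ if $n$ is even, and $\ellone{\A(P)}{\Amean(P)} = \frac{m-1}{m}\cdot\frac{n-1}{n}$ if $n$ is odd.

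Next I would combine this with the hypothesis: in either parity case the right-hand side is, by the choice of $m,n$, strictly larger than $\delta(m,n)$, so $\ellone{\A(P)}{\Amean(P)} > \delta(m,n)$ for every CTAN mechanism $\A$. Hence no CTAN mechanism satisfies $\ellone{\A(P)}{\Amean(P)} \le \delta(m,n)$, and a fortiori none achieves a $\delta$-approximation to the mean, since that would require the inequality $\ellone{\A(Q)}{\Amean(Q)} \le \delta(m',n')$ to hold on \emph{every} profile $Q$, in particular on $P$. This rules out the existence of such a mechanism, completing the argument.

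There is essentially no obstacle here: the substantive work has been done in \Cref{thm:truthful_uniform}, and the only point worth noting is that the theorem yields an equality, so the strict inequality in the hypothesis transfers verbatim (a lower bound of the same magnitude would suffice just as well). The one subtlety to keep in mind is that the profile $P$ depends only on the witnessing pair $(m,n)$ and not on the mechanism, which is precisely what is needed to preclude a single mechanism from performing well uniformly over all instances.
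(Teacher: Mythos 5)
Your proposal is correct and matches the paper's (unstated) reasoning: the corollary is presented as an immediate consequence of \Cref{thm:truthful_uniform}, and your argument is exactly the expected instantiation — pick the witnessing $(m,n)$, take the profile $P$ from the theorem, observe that any CTAN mechanism's $\ell_1$-distance to the mean on $P$ exceeds $\delta(m,n)$, and conclude that no CTAN mechanism can be a $\delta$-approximation. Your remark that the theorem provides an equality rather than just a lower bound is a fair observation, but as you note it makes no difference to the conclusion.
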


This corollary strengthens the theorem from \cite{caragiannis2022truthful} which gives the same bound of $\frac{m-1}{m}$ for even $n$, but only for the class of moving-phantom mechanisms.

\begin{corollary}
    For any $\delta: \N \times \N \rightarrow \N$, such that there are $m,n \in \N$ with $\delta(m,n) < \frac{m-1}{2m}$ if $n$ is even or $\delta(m,n) < \frac{m-1}{2m}\cdot\frac{n-1}{n}$ if $n$ is odd, there is no truthful, anonymous, neutral, and continuous mechanism that achieves $\delta$-project fairness.
\end{corollary}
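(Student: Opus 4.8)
The plan is to reduce the whole statement to a single structural claim about the exhibited profile $P=(\p_1,\dots,\p_n)$: every CTAN mechanism $\mathcal{A}$ satisfies $\mathcal{A}(P)=(\tfrac1m,\dots,\tfrac1m)$. Granting this, the bounds are a short computation. Both $\mathcal{A}(P)$ and $\Amean(P)$ lie on the symmetric line $L:=\{\,p(\alpha):=(\alpha,\tfrac{1-\alpha}{m-1},\dots,\tfrac{1-\alpha}{m-1})\mid\alpha\in[0,1]\,\}$ -- in fact $\Amean(P)=p(\mu_1)$ with $\mu_1=\tfrac{\lceil n/2\rceil}{mn}+\tfrac{\lfloor n/2\rfloor}{n}$ -- and on $L$ one has $\ellone{p(x)}{p(y)}=2\lvert x-y\rvert$ and $\ellinf{p(x)}{p(y)}=\lvert x-y\rvert$. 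Hence the two distances equal $2\lvert\tfrac1m-\mu_1\rvert$ and $\lvert\tfrac1m-\mu_1\rvert$, and since $\lvert\tfrac1m-\mu_1\rvert=\tfrac{(m-1)\lfloor n/2\rfloor}{mn}$ this is $\tfrac{m-1}{m}$ and $\tfrac{m-1}{2m}$ for even $n$ and $\tfrac{m-1}{m}\cdot\tfrac{n-1}{n}$ and $\tfrac{m-1}{2m}\cdot\tfrac{n-1}{n}$ for odd $n$, exactly the dichotomy in the statement. For $m=2$ the structural claim follows from the second part of \Cref{thm:characterization_m2} together with the known moving-phantom lower bound, so I assume $m\ge3$ below.

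I would prove the structural claim working entirely inside $L$. Neutrality (invariance under permuting alternatives $2,\dots,m$) forces $\mathcal{A}(Q)\in L$ whenever all votes of $Q$ lie on $L$; in particular $\mathcal{A}(P)=p(\beta)$, and the task is to show $\beta=\tfrac1m$. I would then isolate two lemmas about profiles supported on $L$. (i) \emph{Uncompromisingness}: the aggregate of a two-voter profile $\{p(a),p(b)\}$ is $p(c)$ with $c$ between $a$ and $b$. Restricting to $m=2$ is not enough here (a generalized median rule need not be uncompromising), but full neutrality pins the all-central profile to $p(\tfrac1m)$, and propagating this one fixed point through chains of truthful deviations -- using $\ellone{p(x)}{p(y)}=2\lvert x-y\rvert$ and closing the chains by continuity -- yields uncompromisingness. (ii) \emph{Locking}: if $a\le c\le b$, sliding either voter of $\{p(a),p(b)\}$ toward $p(c)$ leaves the aggregate at $p(c)$; iterating, even $\{p(c),p(c)\}$ maps to $p(c)$. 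Each lemma is a short truthfulness argument together with the fact that the aggregate lies in $\Delta^{(m-1)}$.

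With the lemmas in hand, suppose for contradiction $\beta\ne\tfrac1m$; uncompromisingness (every vote of $P$ has first coordinate $\ge\tfrac1m$) forces $\beta>\tfrac1m$. Replace the $\lceil n/2\rceil$ uniform voters of $P$ by voters reporting $p(0)$, obtaining $P'$ with aggregate $p(\gamma)$; comparing $P$ and $P'$ via the truthfulness of a switching voter gives $\gamma\le\beta$ and $\beta-\tfrac1m\le\lvert\tfrac1m-\gamma\rvert$, and a short extra step (excluding $\gamma=\beta>\tfrac1m$) reduces us to $\gamma\le\tfrac1m$, with two subcases. If $\gamma<\tfrac1m$, I would use the locking lemma to rewrite $P$ (shrinking its votes toward $p(\beta)$) and $P'$ (shrinking toward $p(\gamma)$ and sliding its extreme voters) into two profiles that are identical as multisets of votes yet carry the distinct aggregates $p(\beta)\ne p(\gamma)$ -- impossible for a function. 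If $\gamma=\tfrac1m$, I would instead invoke continuity to perturb $P$ into a profile $P^\star$ whose aggregate $p(\beta')$ satisfies $\tfrac1m<\beta'<\beta$, and then, using the case assumption and the lemmas, locate a profile $\hat P$ with one voter at an edge allocation $\hat p=(\tfrac12,\tfrac12,0,\dots,0)$ and aggregate $p(\tfrac1m)$: that voter strictly prefers $p(\beta')$ and can obtain it by deviating to a suitable $p(\alpha)$ that turns $\hat P$ into $P^\star$, contradicting truthfulness. Either subcase refutes $\beta>\tfrac1m$, so $\beta=\tfrac1m$. The odd-$n$ and general-$n$ cases run the same argument with $\lceil n/2\rceil$ (rather than $n/2$) uniform voters; only the closing arithmetic changes, producing the factor $\tfrac{n-1}{n}$.

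The main obstacle, and what makes this harder than the moving-phantom lower bound of \citet{caragiannis2022truthful}, is that outside the moving-phantom class we cannot compute $\mathcal{A}(P)$ at all: every conclusion must be wrung out of truthfulness and continuity applied to profiles we get to design. Concretely, the delicate step is the boundary subcase $\gamma=\tfrac1m$, where continuity has to manufacture a nearby profile whose aggregate is provably strictly closer to the center than $p(\tfrac1m)$ yet still strictly off-center, while the manipulating voter's report must be steered so that the manipulated profile coincides exactly with that profile. Proving uncompromisingness on $L$ for arbitrary CTAN mechanisms, and checking that the two rewritten profiles in the $\gamma<\tfrac1m$ subcase can really be forced to coincide, are the other places where care is required.
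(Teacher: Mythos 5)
The statement is one of the two corollaries that the paper explicitly declares to \emph{immediately} follow from \Cref{thm:truthful_uniform}: that theorem exhibits the very profile and shows that every CTAN mechanism's output has the stated $\ell_\infty$-distance from the mean, so the impossibility is a one-line consequence. Your proposal instead re-derives the content of \Cref{thm:truthful_uniform} from first principles and then does the closing arithmetic (which you compute correctly); this is a much longer route than the paper takes for the corollary itself, though of course it does track the paper's proof of \Cref{thm:truthful_uniform} in its broad outline (restrict to the symmetric line $L$, establish a locking lemma, derive a contradiction via anonymity in one subcase and via continuity plus the edge-voter $\hat p$ in the other).

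However, there is a concrete gap in the re-derivation, in the construction of $P'$. You obtain $P'$ from $P$ by ``replacing the $\lceil n/2\rceil$ uniform voters by voters reporting $p(0)$,'' i.e.\ $P'=(\multivoter{p(0)}{\lceil n/2\rceil},\multivoter{p(1)}{\lfloor n/2\rfloor})$, and you then want to argue $\gamma\le\tfrac1m$ for its aggregate $p(\gamma)$, disposing of the possibility $\gamma=\beta>\tfrac1m$ by a ``short extra step.'' But that case cannot be excluded: under the standing assumption $\beta>\tfrac1m$, the locking lemma (the paper's \Cref{lem:truthfulness_half_space_line:const_aggregate}) applies to each slide of a voter from $p(\tfrac1m)$ to $p(0)$ — because $\beta\notin(0,\tfrac1m]$ — so the aggregate is pinned at $p(\beta)$ throughout and one concludes $\gamma=\beta>\tfrac1m$ exactly, not $\gamma\le\tfrac1m$. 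This makes your subsequent case split (on $\gamma<\tfrac1m$ versus $\gamma=\tfrac1m$) unreachable. The paper avoids this by choosing a genuinely different profile, $P'=(\multivoter{p(\tfrac1m)}{\lfloor n/2\rfloor},\multivoter{p(0)}{\lceil n/2\rceil})$, which contains \emph{no} $p(1)$ votes and has the opposite count split: its votes lie in $[0,\tfrac1m]$, so \Cref{lem:truthfulness_half_space_line:min_max} directly forces $\gamma\in[0,\tfrac1m]$, and the swapped multiplicities are precisely what make the two locked-down profiles coincide as multisets of votes. A secondary, smaller imprecision: you state ``uncompromisingness'' on $L$ unconditionally, but the paper only proves (and only needs) the version conditional on the votes straddling $\tfrac1m$; the unconditional version is not established by the fixed-point-at-center argument, though this does not affect the instances where you apply it.
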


This corollary mirrors the theorem from \cite{freeman2024project}, which states the same bounds only for the class of moving-phantom mechanisms. It has further implications, namely, \cite{freeman2024project} additionally show that the \textsc{Ladder} moving-phantom mechanism that they introduce achieves $\frac{m-1}{2m}$-project fairness, which they call the \textit{essentially} optimal project fairness guarantee among all moving-phantom mechanisms, where \textit{essentially} refers to the gap of $O(\frac{1}{n})$ for odd $n$. Our result thus implies that the \textsc{Ladder} mechanism achieves essentially optimal project fairness among the more general class of CTAN mechanisms.

\section{Conclusion}
We resolved in the affirmative the question of whether there exist continuous, truthful, anonymous, and neutral (CTAN) mechanisms that are not moving-phantom mechanisms. 
To do so, we designed a new family of mechanisms by carefully pairing (new) moving-phantom mechanisms with cutoff functions.

Our work raises many interesting questions. Characterizing the space of CTAN mechanisms remains a compelling open question, but there is no natural formulation to conjecture in light of our cutoff-phantom mechanisms. A related question is whether moving-phantom mechanisms can be characterized by adding additional axioms. Our result in \Cref{sec:truthful_unanimous_non_phantoms} shows that adding the natural property of \emph{unanimity} won't be enough. More generally, do there exist non-moving-phantom mechanisms that are competitive with moving-phantom mechanisms with respect to their axiomatic properties and/or fairness guarantees? From the perspective of truthful mean approximation, we have established that the \textsc{Ladder} moving-phantom mechanism provides the optimal worst-case approximation in terms of $\ell_\infty$-distance (for any $m$) and $\ell_1$-distance (for $m=3$). However, for $\ell_1$-distance and $m > 3$ it remains possible that moving-phantom mechanisms do not achieve the optimal approximation among all truthful mechanisms. 

\pagebreak
\paragraph{Acknowledgments.} Part of this research was carried out while Ulrike Schmidt-Kraepelin was supported by the National Science Foundation under Grant No. DMS-1928930 and by the Alfred P. Sloan Foundation under grant G-2021-16778 while being in residence at the Simons Laufer Mathematical Sciences Institute (formerly MSRI) in Berkeley, California, during the Fall 2023 semester.

We also thank Felix Brandt, Matthias Greger, and Warut Suksompong for interesting discussions on the topic.

\bibliography{literature}

\begin{thebibliography}{23}
\providecommand{\natexlab}[1]{#1}
\providecommand{\url}[1]{\texttt{#1}}
\expandafter\ifx\csname urlstyle\endcsname\relax
  \providecommand{\doi}[1]{doi: #1}\else
  \providecommand{\doi}{doi: \begingroup \urlstyle{rm}\Url}\fi

\bibitem[Airiau et~al.(2023)Airiau, Aziz, Caragiannis, Kruger, Lang, and Peters]{airiau2023portioning}
S.~Airiau, H.~Aziz, I.~Caragiannis, J.~Kruger, J.~Lang, and D.~Peters.
\newblock Portioning using ordinal preferences: Fairness and efficiency.
\newblock \emph{Artificial Intelligence}, 314:\penalty0 103809, 2023.

\bibitem[Aziz and Shah(2021)]{aziz2021participatory}
H.~Aziz and N.~Shah.
\newblock Participatory budgeting: Models and approaches.
\newblock \emph{Pathways Between Social Science and Computational Social Science: Theories, Methods, and Interpretations}, pages 215--236, 2021.

\bibitem[Bogomolnaia et~al.(2005)Bogomolnaia, Moulin, and Stong]{bogomolnaia2005collective}
A.~Bogomolnaia, H.~Moulin, and R.~Stong.
\newblock Collective choice under dichotomous preferences.
\newblock \emph{Journal of Economic Theory}, 122\penalty0 (2):\penalty0 165--184, 2005.

\bibitem[Brandl et~al.(2021)Brandl, Brandt, Peters, and Stricker]{brandl2021distribution}
F.~Brandl, F.~Brandt, D.~Peters, and C.~Stricker.
\newblock Distribution rules under dichotomous preferences: Two out of three ain't bad.
\newblock In \emph{Proceedings of the 22nd ACM Conference on Economics and Computation}, pages 158--179, 2021.

\bibitem[Brandt et~al.(2024)Brandt, Greger, Segal-Halevi, and Suksompong]{brandt2024optimal}
F.~Brandt, M.~Greger, E.~Segal-Halevi, and W.~Suksompong.
\newblock Optimal budget aggregation with single-peaked preferences.
\newblock \emph{arXiv:2402.15904}, 2024.

\bibitem[Cabannes(2004)]{cabannes2004participatory}
Y.~Cabannes.
\newblock Participatory budgeting: a significant contribution to participatory democracy.
\newblock \emph{Environment and urbanization}, 16\penalty0 (1):\penalty0 27--46, 2004.

\bibitem[Caragiannis et~al.(2016)Caragiannis, Procaccia, and Shah]{caragiannis2016truthful}
I.~Caragiannis, A.~Procaccia, and N.~Shah.
\newblock Truthful univariate estimators.
\newblock In \emph{Proceedings of the 33rd International Conference on Machine Learning}, pages 127--135, 2016.

\bibitem[Caragiannis et~al.(2022)Caragiannis, Christodoulou, and Protopapas]{caragiannis2022truthful}
I.~Caragiannis, G.~Christodoulou, and N.~Protopapas.
\newblock Truthful aggregation of budget proposals with proportionality guarantees.
\newblock In \emph{Proceedings of the 36th AAAI Conference on Artificial Intelligence}, pages 4917--4924, 2022.

\bibitem[Elkind et~al.(2023)Elkind, Suksompong, and Teh]{elkind2023settling}
E.~Elkind, W.~Suksompong, and N.~Teh.
\newblock Settling the score: Portioning with cardinal preferences.
\newblock In \emph{Proceedings of the 26th European Conference on Artificial Intelligence}, pages 621--628, 2023.

\bibitem[Fain et~al.(2016)Fain, Goel, and Munagala]{fain2016core}
B.~Fain, A.~Goel, and K.~Munagala.
\newblock The core of the participatory budgeting problem.
\newblock In \emph{Proceedings of the 12th International Conference on Web and Internet Economics}, pages 384--399, 2016.

\bibitem[Freeman and Schmidt-Kraepelin(2024)]{freeman2024project}
R.~Freeman and U.~Schmidt-Kraepelin.
\newblock Project-fair and truthful mechanisms for budget aggregation.
\newblock In \emph{Proceedings of the 38th {AAAI} Conference on Artificial Intelligence}, pages 9704--9712, 2024.

\bibitem[Freeman et~al.(2021)Freeman, Pennock, Peters, and Vaughan]{freeman2021truthful}
R.~Freeman, D.~M. Pennock, D.~Peters, and J.~W. Vaughan.
\newblock Truthful aggregation of budget proposals.
\newblock \emph{Journal of Economic Theory}, 193\penalty0 (3):\penalty0 105234, 2021.

\bibitem[Goel et~al.(2019)Goel, Krishnaswamy, Sakshuwong, and Aitamurto]{goel2019knapsack}
A.~Goel, A.~K. Krishnaswamy, S.~Sakshuwong, and T.~Aitamurto.
\newblock Knapsack voting for participatory budgeting.
\newblock \emph{ACM Transactions on Economics and Computation}, 7\penalty0 (2):\penalty0 1--27, 2019.

\bibitem[Goyal et~al.(2023)Goyal, Sakshuwong, Sarmasarkar, and Goel]{goyal2023low}
M.~Goyal, S.~Sakshuwong, S.~Sarmasarkar, and A.~Goel.
\newblock {Low Sample Complexity Participatory Budgeting}.
\newblock In \emph{Proceedings of the 50th International Colloquium on Automata, Languages, and Programming}, pages 70:1--70:20, 2023.

\bibitem[Jennings et~al.(2023)Jennings, Laraki, Puppe, and Varloot]{jennings2023new}
A.~B. Jennings, R.~Laraki, C.~Puppe, and E.~M. Varloot.
\newblock New characterizations of strategy-proofness under single-peakedness.
\newblock \emph{Mathematical Programming}, pages 1--32, 2023.

\bibitem[Lindner et~al.(2008)Lindner, Nehring, and Puppe]{lindner2008midpoint}
T.~Lindner, K.~Nehring, and C.~Puppe.
\newblock Allocating public goods via the midpoint rule.
\newblock In \emph{Proceedings of the 9th International Meeting of the Society for Social Choice and Welfare}, 2008.

\bibitem[Mass{\'o} and De~Barreda(2011)]{masso2011strategy}
J.~Mass{\'o} and I.~M. De~Barreda.
\newblock On strategy-proofness and symmetric single-peakedness.
\newblock \emph{Games and Economic Behavior}, 72\penalty0 (2):\penalty0 467--484, 2011.

\bibitem[Michorzewski et~al.(2020)Michorzewski, Peters, and Skowron]{michorzewski2020price}
M.~Michorzewski, D.~Peters, and P.~Skowron.
\newblock Price of fairness in budget division and probabilistic social choice.
\newblock In \emph{Proceedings of the 34th AAAI Conference on Artificial Intelligence}, pages 2184--2191, 2020.

\bibitem[Moulin(1980)]{moulin1980strategy}
H.~Moulin.
\newblock On strategy-proofness and single peakedness.
\newblock \emph{Public Choice}, pages 437--455, 1980.

\bibitem[Renault and Trannoy(2005)]{renault2005protecting}
R.~Renault and A.~Trannoy.
\newblock Protecting minorities through the average voting rule.
\newblock \emph{Journal of Public Economic Theory}, 7\penalty0 (2):\penalty0 169--199, 2005.

\bibitem[Renault and Trannoy(2011)]{renault2011assessing}
R.~Renault and A.~Trannoy.
\newblock Assessing the extent of strategic manipulation: the average vote example.
\newblock \emph{SERIEs}, 2:\penalty0 497--513, 2011.

\bibitem[Tao(2022)]{tao2022analysis}
T.~Tao.
\newblock \emph{Analysis {II}}.
\newblock Springer, 4 edition, 2022.

\bibitem[Wagner and Meir(2023)]{wagner2023strategy}
J.~Wagner and R.~Meir.
\newblock Strategy-proof budgeting via a vcg-like mechanism.
\newblock In \emph{Proceedings of the 16th International Symposium on Algorithmic Game Theory}, pages 401--418, 2023.

\end{thebibliography}
\bibliographystyle{abbrvnat}

\newpage
\appendix

\section{Missing Proofs of \Cref{sec:existence_truthful_non_phantom}}

\cutoffCAN*
\begin{proof}
    Let $\F$ be a family of phantom systems, fix $n, m \in \mathbb{N}$ and let $\tau \in (\sfrac{1}{2},1]$. 
    
    \textbf{$\A^\F_c$ is anonymous:}
    Given a profile $(\p_1, \dots, \p_n)$ and a permutation $\sigma: [n] \rightarrow [n]$, we know by anonymity of the moving-phantom mechanism $\A^{\F}$ (see \cref{thm:phantoms_anonymous_neutral_continuous}) that $$\A_c^\F(\p_1, \dots, \p_n) = c(\A^\F(\p_1, \dots, \p_n)) = c(\A^\F(\p_{\sigma(1)}, \dots, \p_{\sigma(n)})) = \A_c^\F(\p_{\sigma(1)}, \dots, \p_{\sigma(n)}).$$
    
    \textbf{$\A^\F_c$ is neutral:} 
    Given a profile $P = (\p_1, \dots, \p_n)$, with $\p_i = (\p_{i,1}, \dots, \p_{i,m})$, and a permutation $\sigma: [m] \rightarrow [m]$, let $P^\sigma = (\p_1^\sigma, \dots, \p_n^\sigma)$, with $\p_i^\sigma = (\p_{i,\sigma(1)}, \dots, \p_{i,\sigma(m)})$, as in the definition of neutrality. Let $a = \A^\G(P)$ and $a^\sigma = \A^\F(P^\sigma)$.
    Then, $$\A_c^\F(P^\sigma) = c(a^{\sigma}) = c(a_{\sigma(1)}, \dots, a_{\sigma(m)})= (c(a)_{\sigma(1)}, \dots, c(a)_{\sigma(m)}),$$ where the second equation follows from neutrality of \textsc{GreedyMax} (see \cref{thm:phantoms_anonymous_neutral_continuous}) and the last equation follows from neutrality of the cutoff function. Thus, \textsc{CutOffGreedyMax} is neutral.

    \textbf{$\A^\G_c$ is continuous:}
    One can easily verify that $c$ is a continuous function and $\A^\F$ is continuous by \cref{thm:phantoms_anonymous_neutral_continuous}. Then $\A^\F_c$ is continuous, since it is a composition of two continuous functions \citep[Corollary 2.1.7]{tao2022analysis}. 
\end{proof}

\subsection{Missing Proofs of \Cref{subsec:greedymax}} \label{sec:appthreeone}

In this section, we show that \textsc{CutoffGreedyMax} is truthful.
The proof distinguishes several cases based on whether \textsc{GreedyMax} returns an extreme aggregate for both the truthful report and a misreport of a voter. To support this case distinction, \Cref{lem:greedy_max_exact_aggregate} specifies the aggregate of \textsc{GreedyMax} in situations where all voter reports are (weakly) $(j,\tau)$-extreme for the same alternative $j$ and an additional condition on the other alternatives. \Cref{lem:greedy_max_high_aggregate} characterizes the profiles for which \textsc{GreedyMax} returns extreme outcomes.

\begin{lemma}\label{lem:greedy_max_exact_aggregate}
For all $n,m \in \N, m \ge 2$, $\tau \in [\frac{1}{2}, 1)$ and profiles $P\in \mathcal{P}_{n,m}$ for which $\min_{i \in [n]} p_{i,j} \geq \tau$ for some $j \in [m]$ and $\sum_{\substack{k \in [m] \\ k\neq j}} \max_{i \in [n]}(\p_{i,k}) \le 1-\tau$, the aggregate $a$ of \textsc{GreedyMax} is given as
    $$
        a_k =
        \begin{cases} 
            \max_{i \in [n]}(\p_{i,k}) & \text{ for } k \neq j \\
            1 - \sum_{\substack{h \in [m]\\ h \neq j}} \max_{i \in [n]}(\p_{i,h}) & \text{ for } k = j.
        \end{cases}
    $$
\end{lemma}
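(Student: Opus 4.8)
The plan is to analyze the structure of the \textsc{GreedyMax} phantom mechanism using the greedy interpretation described after \Cref{obs:medians}, and verify that under the stated hypotheses the greedy procedure terminates in a controlled way. Recall that \textsc{GreedyMax} iterates over alternatives in increasing order of their maximum votes $M_k := \max_{i \in [n]} p_{i,k}$, assigning $M_k$ to alternative $k$ (and decrementing the remaining budget $b$) as long as $M_k < b/(m-k')$ where $k'$ is the number of alternatives processed so far; otherwise it assigns $b/(m-k')$ uniformly to all remaining alternatives and halts. The key claim to establish is that, under the hypotheses $\min_i p_{i,j} \ge \tau$ (so $M_j \ge \tau$) and $\sum_{k \ne j} M_k \le 1 - \tau$, alternative $j$ is the \emph{last} alternative processed and, when it is reached, the budget is assigned greedily to it as well — i.e.\ the halting case is never triggered, or if it is, it coincides with assigning $M_k$ to each $k \ne j$ and the remainder to $j$.

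First I would record that $M_j \ge \tau \ge \sfrac12 \ge M_k$ for all $k \ne j$: indeed $M_k \le \sum_{h \ne j} M_h \le 1 - \tau \le \sfrac12 \le \tau \le M_j$. Hence $j$ has the (weakly) largest maximum vote, so in the increasing-order processing of alternatives, $j$ comes last (ties can be broken to put $j$ last without affecting the outcome, since the allocation is unique). Next I would argue by induction over the first $m-1$ steps that at step $\ell$ (processing the alternative with the $\ell$-th smallest maximum, call it $k_\ell \ne j$), the greedy condition $M_{k_\ell} < b_\ell / (m - \ell + 1)$ holds, where $b_\ell = 1 - \sum_{s < \ell} M_{k_s}$ is the budget remaining before step $\ell$. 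The point is that $b_\ell \ge 1 - \sum_{k \ne j} M_k \ge \tau$, while $M_{k_\ell} \le 1 - \tau$, and there are at least two alternatives remaining (namely $k_\ell$ and $j$), so $m - \ell + 1 \ge 2$; thus $b_\ell/(m-\ell+1) \ge \tau/2 \cdot$ — here I'd need to be slightly more careful, but the clean bound is $b_\ell \ge \tau$ and $M_{k_\ell} \le 1-\tau \le \tau \le b_\ell$, and since there are $\ge 2$ remaining alternatives, comparing $M_{k_\ell}$ against $b_\ell/2$ may not immediately suffice when $\tau$ is close to $\sfrac12$. The cleaner route: the greedy step assigns $M_{k_\ell}$ precisely when doing so leaves every not-yet-processed alternative able to receive at least $M_{k_\ell}$; equivalently (as in the mechanism's description) when $M_{k_\ell} \le b_\ell/(m-\ell+1)$ fails to be violated. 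Since all remaining maxima are $\le M_{k_\ell}$'s successors are $\ge M_{k_\ell}$ and $M_j \ge \tau \ge b_{m}$ (the final budget), one checks directly that feeding each $k \ne j$ its maximum is consistent, leaving $b_{\text{final}} = 1 - \sum_{k\ne j} M_k \in [\tau, 1]$ for alternative $j$, and since $M_j \ge \tau \ge 1 - \sum_{k \ne j} M_k = b_{\text{final}}$, the greedy rule would "want" to cap $j$ at $b_{\text{final}}$ — which is exactly what happens, giving $a_j = 1 - \sum_{k \ne j} M_k$ and $a_k = M_k$ for $k \ne j$.

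Alternatively — and this is probably the cleanest proof — I would bypass the greedy narrative and work directly with \Cref{obs:medians}: $a_k = \min(t^\star, M_k)$ where $t^\star$ is a time of normalization, i.e.\ $\sum_{k\in[m]} \min(t^\star, M_k) = 1$. I claim $t^\star = 1 - \sum_{k \ne j} M_k =: \hat t$. Since $\hat t \ge \tau$ and $M_k \le 1 - \tau \le \tau \le \hat t$ for every $k \ne j$, we get $\min(\hat t, M_k) = M_k$ for all $k \ne j$. And since $M_j \ge \tau \ge \hat t$ (using $\hat t = 1 - \sum_{k\ne j}M_k \le 1 - 0$, but more precisely $\hat t \le \tau$? — no: $\hat t \ge \tau$, and also $M_j \ge \tau$, but I need $M_j \ge \hat t$; this holds because $\sum_{k} M_k \ge \sum_i p_{i, \cdot}$ summed appropriately — actually $M_j + \sum_{k \ne j} M_k \ge \sum_j' (\text{something}) \ge 1$ since each voter's coordinates sum to $1$ so $\sum_k M_k \ge \sum_k p_{1,k} = 1$, giving $M_j \ge 1 - \sum_{k\ne j} M_k = \hat t$). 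Therefore $\min(\hat t, M_j) = \hat t$, and $\sum_k \min(\hat t, M_k) = \hat t + \sum_{k \ne j} M_k = 1$, so $\hat t$ is indeed a time of normalization and the induced (unique) allocation is exactly the one claimed. The main obstacle — and the only thing requiring care — is the chain of inequalities ensuring $\tau \le \hat t \le M_j$ and $M_k \le \tau$ for $k \ne j$; in particular showing $M_j \ge \hat t$ uses that every voter's report lies in $\Delta^{(m-1)}$ so $\sum_k M_k \ge 1$. Once these are in hand, the conclusion is immediate from \Cref{obs:medians} and the uniqueness of the \textsc{GreedyMax} allocation noted in the preliminaries.
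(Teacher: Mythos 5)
Your second, ``cleaner'' argument---setting $\hat t = 1 - \sum_{k\ne j}\max_i p_{i,k}$, checking via the hypotheses and $\sum_k \max_i p_{i,k}\ge 1$ that $\max_i p_{i,k}\le \hat t \le \max_i p_{i,j}$ for all $k\ne j$, and concluding from \Cref{obs:medians} that $\hat t$ is a time of normalization with the claimed medians---is exactly the paper's proof. Your first route through the greedy narrative is a viable alternative, but you rightly abandon it in favor of the direct median computation, which matches the paper.
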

\begin{proof}
    At time $t = \tau$, \cref{obs:medians} implies that the medians on all alternatives besides $j$ equal $\max_{i \in [n]}(p_{i,k})$. Moreover, at time of normalization the median on the first alternative can be at most $\max_{i \in [n]}p_{i,j}$ since the sum of all maxima is at least $1$. Thus, we get that $t = 1 - \sum_{\substack{k \in [m]\\ k \neq j}} \max_{i \in [n]}(\p_{i,k}) \ge \tau$ is a time of normalization which implies the lemma statement. 
\end{proof}

The next lemma shows that \textsc{GreedyMax} returns an extreme aggregate if and only if \emph{all} voter reports are $(j,\tau)$-extreme for the same alternative~$j$ and the maximum votes on the other alternatives sum up to less than $1-\tau$.

\lemmatwo*
\begin{proof}
    We first show the forward direction of the equivalence.

    Assume for contradiction that $a_j > \tau$, but $\min_{i \in [n]}(\p_{i,j}) \le \tau$ or $\sum_{\substack{k \in [m] \\ k\neq j}} \max_{i \in [n]}(\p_{i,k}) \ge 1 - \tau$.
    \begin{CaseTree}
        \item There exists a vote that is not $j$-extreme, i.e., $\min_{i \in [n]}(\p_{i,j}) \le \tau$.\\
        Let $i^* = \argmin_{i \in [n]} (p_{i,j})$.
        Since $a_j > \tau$ we know by \Cref{obs:medians} that for any time $t$ of normalization $t>\tau$. Again by \Cref{obs:medians}, we know that $a_k$ is lower bounded by $\min\{t,p_{i^*,k}\}$ for all alternatives $k \neq j$. Thus, 
        $$
            \sum_{k \in [m]} a_k \ge a_j + \sum_{\substack{k \in [m]\\ k \neq j}} \min\{t,p_{i^*,k}\} > \tau + 1 - \tau =1,
        $$
        a contradiction.
        
        \item All votes are $j$-extreme but $\sum_{\substack{k \in [m] \\ k\neq j}} \max_{i \in [n]}(\p_{i,k}) \ge 1 - \tau$.\\
        Since $a_j > \tau$ we know by \Cref{obs:medians} that for any time $t$ of normalization $t > \tau \ge 1-\tau$. Again, by \cref{obs:medians} we know for each coordinate $k \neq j$ that $a_k = \min\{t, \max_{i \in [n]}(\p_{i,k})\} \ge \min\{1-\tau, \max_{i \in [n]}(\p_{i,k})\}$. But then 
        $$
            \sum_{k \in [m]} a_k \ge a_j + \sum_{\substack{k \in [m] \\ k\neq j}} \min\{1-\tau, \max_{i \in [n]}(\p_{i,k})\} > \tau + 1 - \tau = 1,
        $$
        again contradicting our initial assumption.
    \end{CaseTree}

    We now show the backwards direction of the equivalence.\\

    Assume $\min_{i \in [n]}(\p_{i,j}) > \tau$ and $\sum_{\substack{k \in [m] \\ k\neq j}} \max_{i \in [n]}(\p_{i,k}) < 1 - \tau$. Then we know from \Cref{lem:greedy_max_exact_aggregate} that $a_k = \max_{i \in [n]}(\p_{i,k})$ for all alternatives $k \neq j$. Then 
    \[
        a_j = 1 - \sum_{\substack{k \in [m] \\ k\neq j}} a_k = 1 - \sum_{\substack{k \in [m] \\ k\neq j}} \max_{i \in [n]}(\p_{i,k}) > \tau.\qedhere
    \]
    
\end{proof}

Making use of \Cref{lem:greedy_max_exact_aggregate,lem:greedy_max_high_aggregate}, we can now show the truthfulness of \textsc{CutoffGreedyMax}.

\maxcutoffgreedy*
\begin{proof}
    Let $n,m \in \N$ with $m \ge 2$, $\tau \in [\frac{1}{2},1)$ and $P = (\p_1, \dots, \p_n) \in \mathcal{P}_{n,m}$ be a profile. By the neutrality of $A^\F_c$ we can assume without loss of generality that $\p_{1,1} \le \dots \le \p_{1,m}$.

    Suppose there is a truthfulness violation for a voter. By the anonymity of $A^\F_c$, we can assume without loss of generality that this is voter $n$, i.e., there is $\p_n^\star \in \Delta^{(m-1)}$ such that for the profile $P^\star = (\p_1, \dots, \p_{n-1}, \p_n^\star)$ we have $\ellone{\p_n}{ \A^\F_c(P^\star)} < \ellone{\p_n}{\A^\F_c(P)}$.
    Let $a = \A^\F(P)$, $a^\star = \A^\F(P^\star)$, $\bar{a} = \A^\F_c(P)$ and $\bar{a}^\star = \A^\F_c(P^\star)$. In words, $a$ and $a^*$ are the aggregates of $\A^\F$ before and after the manipulation by voter $n$, respectively. Moreover, $\bar{a}$ and $\bar{a}^\star$ are the aggregates of $A^\F_c$ before and after the manipulation by voter $n$, respectively. 
    
    For all $j \in [m-1]$, we know that $\p_{1,j} \leq \tau$ and thus, by \Cref{lem:greedy_max_high_aggregate}, $a_j \le \tau$ and $a_j^\star \le \tau$. In other words, if $a$ or $a^*$ is extreme, then they are $m$-extreme. 

    \begin{CaseTree}
        \item Neither $a$ nor $a^*$ is $m$-extreme. 
        Then, $a = \bar{a}$, $a^\star = \bar{a}^\star$ and $$\ellone{\p_n}{\A^\F(P^*)} = \ellone{\p_n}{a^*} = \ellone{\p_n}{ \bar{a}^\star} < \ellone{\p_n}{\bar{a}} = \ellone{\p_n}{a} = \ellone{\p_n}{\A^\F(P)},$$ 
        which implies that $\A^\F$ is not truthful, contradicting the truthfulness of moving-phantom mechanisms (\Cref{thm:phantoms_anonymous_neutral_continuous}).
        \item The aggregate $a$ is not $m$-extreme but $a^*$ is $m$-extreme. \\
        Then, $\bar{a} = a$. 
        We argue that the aggregate $\bar{a}^\star$ can be achieved through a misreport by voter $n$ when using $\A^\F$, which leads to a contradiction to the truthfulness of $\A^\F$.

        Since $a^*$ is $m$-extreme, we know by \Cref{lem:greedy_max_high_aggregate} that $p_{i,m}>\tau$ for all $i \in [n-1]$, $p^*_{n,m}>\tau$, and $\sum_{j \in [m-1]}\max\{p^*_{n,j},\max_{i \in [n-1]}p_{i,j}\} < 1 - \tau$. Thus, by \Cref{lem:greedy_max_exact_aggregate} this implies that $a^*_j = \max\{p^*_{n,j},\max_{i \in [n-1]}p_{i,j}\}$ for all $j \in [m-1]$. Moreover, by the definition of the cutoff function, it holds that $\bar{a}^*_j > a^*_j$ for all $j \in [m-1]$ and $\bar{a}^*_m=\tau$. 

        We build the profile $Q = (\p_1, \dots, \p_{n-1}, \bar{a}^\star)$ and define $b = \A^\F(Q)$. We want to apply \Cref{lem:greedy_max_exact_aggregate} again. We can do so since $\bar{a}_m^* = \tau$, all voters in $[n-1]$ are $m$-extreme, and the last condition of \Cref{lem:greedy_max_exact_aggregate} is satisfied as well, namely, $\sum_{j \in [m-1]}\max\{\bar{a}^*_j, \max_{i \in [n-1]} p_{i,j}\}= \sum_{j \in [m-1]}\bar{a}^*_j = 1 - \tau$. Thus, we get $$b_j = \max \{\bar{a}^\star_j, \max_{i \in [n-1]} p_{i,j}\} = \bar{a}^\star_j \text{ for all } j \in [m-1].$$ Thus, also $b_m = \bar{a}^\star_m$ and $b = \bar{a}^\star$. 
        This contradicts the truthfulness of $\A^\F$, since $$\ellone{\p_n}{\A^\F(Q)} = \ellone{\p_n}{b} = \ellone{\p_n}{\bar{a}^*} < \ellone{\p_n}{\bar{a}} = \ellone{\p_n}{a} = \ellone{\p_n}{\A^\F(P)}.$$

        \item The aggregate $a$ is $m$-extreme.\\
        By \Cref{lem:greedy_max_high_aggregate}, all reports in $P$ are $m$-extreme and $\sum_{j \in [m-1]} \max_{i \in [n]}\{p_{i,j}\} < 1-\tau$. Thus, by \Cref{lem:greedy_max_exact_aggregate}, $a_j = \max_{i \in [n]} (p_{i,j})\ge \p_{n,j}$ for all $j \in [m-1]$. Therefore, $\bar{a}_m = \tau < \p_{n,m}$ and $\bar{a}_j > a_j \geq \p_{n,j}$ for all $j \in [m-1]$. Now, by assumption $\ellone{\p_n}{\bar{a}^\star} < \ellone{\p_n}{\bar{a}}$ and thus
        \begin{align*}
            \ellone{\p_n}{\bar{a}^\star} - \ellone{\p_n}{\bar{a}} &=  
            2 - 2\sum_{j=1}^m \min(\p_{n,j}, \bar{a}^\star_j) - \big(2 - 2\sum_{j=1}^m \min(\p_{n,j}, \bar{a}_j)\big)\\ 
            &= 2 \sum_{j=1}^m (\min(\p_{n,j}, \bar{a}_j) - \min(\p_{n,j}, \bar{a}^\star_j)) \\ 
            &= 2 \big(\bar{a}_m - \min(\p_{n,m}, \bar{a}^\star_m) + 2 \sum_{j=1}^{m-1} \underbrace{(\p_{n,j} - \min(\p_{n,j}, \bar{a}^\star_j))}_{\ge 0}\big) &< 0,
        \end{align*}
        where the first equality holds because of the simple observation\footnote{ 
        \label{footnoteLemma}
            $
               \ellone{u}{v} 
               = \sum\limits_{i=1}^m (\max(u_i, v_i) - \min(u_i, v_i)) 
               = \sum\limits_{i=1}^m (\max(u_i, v_i) + \min(u_i, v_i)) - 2 \sum\limits_{i=1}^m \min(u_i, v_i) 
               = 2 - 2 \sum\limits_{i=1}^m \min(u_i, v_i) 
            $
        } that for two vectors $u, v \in \Delta^{(m-1)}$ in the simplex $\ellone{u}{v} = 2 - 2\sum_{i=1}^m \min(u_i, v_i)$. But then $\bar{a}^\star_m > \bar{a}_m = \tau$, contradicting the definition of $\A^\F_c$. \qedhere
    \end{CaseTree}\end{proof}

\subsection{Missing Proofs of \Cref{sec:moretruthful}} \label{sec:appthreetwo}
In order to prove the propositions and lemmas from \Cref{sec:moretruthful}, we first formally introduce the phantom mechanisms \textsc{MaxUtilitarianWelfare}, \textsc{IndependentMarkets}, \textsc{PiecewiseUniform} and \textsc{Ladder}.

The \textsc{MaxUtilitarianWelfare} mechanism \citep{freeman2021truthful} is the moving-phantom mechanism induced by the family of phantom systems $\F = \{\F_n \mid n \in \N\}$, with $\F_n = \{f_0, \dots, f_{n}\}$, where $f_k$ is defined as 
$$
    f_k(t) = 
    \begin{cases}
        0 & \text{ if } t < \frac{k}{n+1}, \\
        t(n+1)-k & \text{ if } \frac{k}{n+1} \le t \le \frac{k+1}{n+1}, \\
        1 & \text{ if } \frac{k+1}{n+1} < t.
    \end{cases}
$$
This corresponds to moving each phantom from zero to one consecutively.

In the following definitions, we again allow a slight deviation from the definition of a phantom system, by defining phantom functions $f$ with $f(1) < 1$, as long as normalization is guaranteed for all profiles, for $t \le 1$.
The \textsc{IndependentMarkets} mechanism \citep{freeman2021truthful} is the moving-phantom mechanism induced by the family of phantom systems defined as $$\F_n = \{f_k(t) = \max(t \cdot \frac{n-k}{n}, 0) \text{ for } k = 0, \dots, n\}$$ for all $n \in \N$.
This corresponds to moving all phantoms simultaneously but at different speeds, until each phantom $k$ reaches its final position $\frac{n-k}{n}$. 

The \textsc{PiecewiseUniform} mechanism \citep{caragiannis2022truthful} is the moving-phantom mechanism induced by the family of phantom systems $\F = \{\F_n \mid n \in \N\}$, with $\F_n = \{f_0, \dots, f_{n}\}$, where $f_k$ is defined as 
$$
    f_k(t) = 
    \begin{cases}
        \frac{4t(n-k))}{n}-2t  & \text{ if } \frac{k}{n} \le \frac{1}{2}, \\
        0 & \text{ if } \frac{k}{n} > \frac{1}{2}, 
    \end{cases}
$$
if $t < \frac{1}{2}$ and  
$$
    f_k(t) = 
    \begin{cases}
        \frac{(n-k) (3-2t)}{n} -2 + 2t  & \text{ if } \frac{k}{n} \le \frac{1}{2}, \\
        \frac{(n-k) (2t-1)}{n} & \text{ if } \frac{k}{n} > \frac{1}{2},
    \end{cases}
$$ if $t \ge \frac{1}{2}$.

The \textsc{Ladder} mechanism \citep{freeman2024project} is the moving-phantom mechanism induced by the family of phantom systems defined as $$ \F_n = \{f_k(t) = \max(t - \frac{k}{n}, 0) \text{ for } k = 0, \dots, n\}$$ for all $n \in \N$.

\noConstantThreshold*
\begin{proof}
    We first show the truthfulness violation for \textsc{CutoffMaxUtilitarianWelfare} and then give a unified proof for \textsc{CutoffIM}, \textsc{CutoffPiecewiseUniform} and \textsc{CutoffLadder}.

    \textbf{\textsc{CutoffMaxUtilitarianWelfare}:}
    
    Let $\tau \in [\frac{1}{2}, 1)$.
    We choose $m = 3$ and $n = 2$ and consider the profile $$P = \Big(\Big(\frac{1+3\tau}{4}, \frac{3-3\tau}{4}, 0\Big),\Big(\frac{1+\tau}{2}, \frac{1-\tau}{4}, \frac{1-\tau}{4}\Big)\Big).$$
    \textsc{MaxUtilitarianWelfare} reaches normalization, when the second phantom reaches position $\frac{1-\tau}{2}$ (at this time the first phantom is at position 1 and the third one at position 0). Thus, the medians form the vector $(\frac{1+3\tau}{4}, \frac{1-\tau}{2}, \frac{1-\tau}{4})$, which is normalized and thus the outcome of \textsc{MaxUtilitarianWelfare}. \textsc{CutoffMaxUtilitarianWelfare} therefore returns the aggregate $a = (\tau, \frac{5-5\tau}{8}, \frac{3-3\tau}{8})$. Now, consider the profile $$P^\star = \Big(\Big(\tau, 1-\tau, 0\Big),\Big(\frac{1+\tau}{2}, \frac{1-\tau}{4}, \frac{1-\tau}{4}\Big)\Big),$$ in which the first voter misreported their preference. Here, \textsc{MaxUtilitarianWelfare} reaches normalization, when the second phantom reaches $\frac{3-3\tau}{4}$ and the medians are $(\tau, \frac{3-3\tau}{4}, \frac{1-\tau}{4})$. Since the first coordinate is equal to $\tau$, the cutoff has no effect and \textsc{CutoffMaxUtilitarianWelfare} returns the aggregate $a^\star = (\tau, \frac{3-3\tau}{4}, \frac{1-\tau}{4})$. This is a truthfulness violation, since
    $$\textstyle
        \ellone{(\frac{1+3\tau}{4}, \frac{3-3\tau}{4}, 0)}{a^\star} = \frac{1-\tau}{2} < \frac{5-5\tau}{8} = \ellone{(\frac{1+3\tau}{4}, \frac{3-3\tau}{4}, 0)}{a}.
    $$

    \textbf{\textsc{CutoffIM}, \textsc{CutoffPiecewiseUniform} and \textsc{CutoffLadder}:}

    Let $\tau \in [\frac{1}{2}, 1)$.
    We choose $m = 3$ and $n$, such that $\frac{n-2}{n} \le \tau < \frac{n-1}{n}$. Note, that since $\tau \ge \frac{1}{2}$, we have $n \ge 3$. Now, consider the profile $P = \{p_1, \dots, p_n\} \in \mathcal{P}_{n,m}$ with $p_1 = (0, 1-\eps, \eps)$ and $p_2 = \dots = p_n = (1, 0, 0)$ for some $\eps > 0$. For \textsc{IndependentMarkets} we choose $\eps = \frac{(1-\tau)n-1}{3n-1}$, for \textsc{Ladder} $\eps = 2\frac{(1-\tau)n-1}{5n}$ and for \textsc{PiecewiseUniform} $\eps = 2\frac{(1-\tau)n-1}{5n}$.

    \textbf{Claim:}
    Then the aggregate $\A^\F(P)$ for any of these phantoms is $\A^\F(P) = (\tau + 2\eps, 1-\tau-3\eps, \eps)$ and thus $\bar{a} = \A^\F_c(P) = (\tau, 1-\tau-2\eps, 2\eps)$. 

    \begin{proof_of_claim}
        We will proof the claim for each mechanism independently. \\
        \textbf{\textsc{IndependentMarkets:}} Note, that
            \begin{equation} \label{eq:im_eps1}
                \eps = \frac{(1-\tau)n-1}{3n-1} \le \frac{(1-\frac{n-2}{n})n-1}{3n-1} = \frac{1}{3n-1} < \frac{1}{n+1}
            \end{equation}
            and also by the definition of $\eps$
            \begin{equation} \label{eq:im_eps2}
                \tau = 1 - \frac{\eps(3n-1)+1}{n} = \frac{n-1}{n} - \eps (2+\frac{n-1}{n}) = \frac{n-1}{n} \cdot (1 - \eps) - 2 \eps.
            \end{equation}
            \textsc{IndependentMarkets} reaches normalization at time $t = 1-\eps$, where we have $f_1(1-\eps) = \frac{n-1}{n}(1-\eps)$, $f_{n-1}(1-\eps) = \frac{1}{n}(1-\eps)$ and $f_n(1-\eps) = 0$. On the first coordinate, the median is $f_1(1-\eps)$, on the second it is $f_1(1-\eps)$ and on the third it is $\eps$, since $f_1(1-\eps) = \frac{1}{n}(1-\eps) > \eps$ by \Cref{eq:im_eps1} and $f_n(1-\eps) = 0$.
            Thus, the aggregate returned is $a = (\frac{n-1}{n}(1-\eps), \frac{1}{n}(1-\eps), \eps)$, which is normalized and thus equals $a = (\tau + 2\eps, 1-\tau-3\eps, \eps)$ by \Cref{eq:im_eps2}.
    
        \textbf{\textsc{PiecewiseUniform:}} Note, that
            \begin{equation} \label{eq:pu_eps1}
                \eps = 2\frac{(1-\tau)n-1}{5n} \le 2\frac{(1-\frac{n-2}{n})n-1}{5n} = \frac{2}{5n} < \frac{2}{3n}
            \end{equation} 
            and also by the definition of $\eps$
            \begin{equation} \label{eq:pu_eps2}
                \tau = \frac{n-1}{n} - \frac{5\eps}{2} = \frac{n-1}{n} - \frac{\eps}{2} - 2\eps.
            \end{equation}
            \textsc{PiecewiseUniform} reaches normalization at time $t^\star = 1 - \frac{n\eps}{4} > \frac{1}{2}$, where we have $f_1(t^\star) = \frac{n-1}{n} - \frac{\eps}{2}$, $f_{n-1}(t^\star) = \frac{1}{n} - \frac{\eps}{2}$ and $f_n(t^\star) = 0$. On the first coordinate, the median is $\frac{n-1}{n} - \frac{\eps}{2}$, on the second it is $\frac{1}{n} - \frac{\eps}{2}$ and on the third it is $\eps$, since $f_1(t^\star) = \frac{1}{n} - \frac{\eps}{2} > \eps$ by \Cref{eq:pu_eps1} and $f_n(t^\star) = 0$.
            Thus, the aggregate returned is $a = (\frac{n-1}{n} - \frac{\eps}{2}, \frac{1}{n} - \frac{\eps}{2}, \eps)$, which is normalized and thus equals $a = (\tau + 2\eps, 1-\tau-3\eps, \eps)$ by \Cref{eq:pu_eps2}.
            
        \textbf{\textsc{Ladder:}} Note, that \Cref{eq:pu_eps1,eq:pu_eps2} hold, as shown above.
            \textsc{Ladder} reaches normalization at time $t^\star = 1 - \frac{\eps}{2}$, where we have $f_{n-1}(t^\star) = \frac{n-1}{n} - \frac{\eps}{2}$, $f_{n-1}(t^\star) = \frac{1}{n} - \frac{\eps}{2}$ and $f_n(t^\star) = 0$. On the first coordinate, the median is $\frac{n-1}{n} - \frac{\eps}{2}$, on the second it is $\frac{1}{n} - \frac{\eps}{2}$ and on the third it is $\eps$, since $f_1(t^\star) = \frac{1}{n} - \frac{\eps}{2} > \eps$ by \Cref{eq:pu_eps1} and $f_n(t^\star) = 0$.
            Thus, the aggregate returned is $a = (\frac{n-1}{n} - \frac{\eps}{2}, \frac{1}{n} - \frac{\eps}{2}, \eps)$, which is normalized and thus equals $a = (\tau + 2\eps, 1-\tau-3\eps, \eps)$ by \Cref{eq:pu_eps2}. \qedhere
    \end{proof_of_claim}

    Now, consider the profile $P^\star = (p_1^\star, p_2, \dots, p_n)$, where voter 1 misreports $p_1^\star = (0,1,0)$. All three phantom mechanisms return the aggregate $a^\star = \A^\F(P^\star) = (\frac{n-1}{n}, \frac{1}{n}, 0)$ and thus the cutoff-mechanisms return $\bar{a}^\star = \A^\F_c(P^\star) = (\tau, \frac{1}{n} + \frac{1}{2}(\frac{n-1}{n}-\tau), \frac{1}{2}(\frac{n-1}{n}-\tau))$.
    
    \textbf{Claim:}
    Then $\frac{1}{2}(\frac{n-1}{n}-\tau) < 2\eps$. 
    
    \begin{proof_of_claim}
        We will proof the claim for the mechanisms independently. \\
        \textbf{\textsc{IndependentMarkets:}} 
            By \Cref{eq:im_eps2} we have
            \begin{equation*}
                \frac{1}{2}(\frac{n-1}{n}-\tau) = \frac{1}{2}(\frac{n-1}{n}-\frac{n-1}{n} \cdot (1 - \eps) + 2 \eps) = \frac{1}{2}(\eps \frac{n-1}{n} + 2 \eps) < 2\eps
            \end{equation*}
        \textbf{\textsc{PiecewiseUniform} and \textsc{Ladder:}}
            By \Cref{eq:pu_eps2} we have
            \begin{equation*}
                \frac{1}{2}(\frac{n-1}{n}-\tau) = \frac{1}{2}(\frac{n-1}{n} - \frac{n-1}{n} + \frac{\eps}{2} + 2\eps) = \frac{\eps}{4} + \eps < 2\eps
            \end{equation*}
            \qedhere
    \end{proof_of_claim}
    
    Since $\bar{a}^\star$ is normalized, we have $\bar{a}^\star = (\tau, 1 - \tau - \frac{1}{2}(\frac{n-1}{n}-\tau)), \frac{1}{2}(\frac{n-1}{n}-\tau))$. This is a truthfulness violation, since    
    \begin{align*}
        \ellone{p_1}{\bar{a}^\star} 
        &= \tau + (1-\eps)-(1 - \tau - \underbrace{\frac{1}{2}(\frac{n-1}{n}-\tau)}_{<2\eps}) + |\underbrace{\frac{1}{2}(\frac{n-1}{n}-\tau)}_{<2\eps} - \eps| \\
        &< \tau + (1-\eps)-(1 - \tau - 2\eps) + \eps = \ellone{p_1}{\bar{a}}. \qedhere
    \end{align*}

\end{proof}

We now want to work towards showing \Cref{prop:phantom_cutoff_truthfulness}, which states that for any slow phantom mechanism $\A^\F$, we can find a threshold function $\tau: \N \times \N \rightarrow [0, 1)$, such that its cutoff version $\A^\F_{c_\tau}$ is truthful and not a phantom mechanism itself. The proof very closely follows the proofs of \Cref{prop:greedy_max_cutoff_truthfulness,prop:greedy_max_cutoff_not_phantom} that state that \textsc{CutoffGreedyMax} is truthful and not a phantom for any $\tau \in [\frac{1}{2}, 1)$, respectively. We first explain how to find the value of $\tau$ for any given slow phantom mechanism and values $m,n \in \N$ and then state \Cref{lem:phantom_exact_aggregate}, which can be seen as a slightly strengthened version of \Cref{lem:greedy_max_exact_aggregate} applied to slow phantom mechanisms and the right value of $\tau$. Similarly, we can then show \Cref{lem:phantom_high_aggregate} which has the same statement as \Cref{lem:greedy_max_high_aggregate}, the only difference being that it talks about slow phantoms and thus needs a stronger condition for the choice of $\tau$. Finally, the proof of truthfulness will be exactly the same as the proof of \Cref{prop:greedy_max_cutoff_truthfulness}, referring to \Cref{lem:phantom_exact_aggregate} and \Cref{lem:phantom_high_aggregate} instead of \Cref{lem:greedy_max_exact_aggregate} and \Cref{lem:greedy_max_high_aggregate}.

In order to prove \Cref{lem:phantom_exact_aggregate}, given $m,n \in \N, m \ge 2$ and a phantom system $\F_n = \{f_0, \dots, f_n\}$, we want to find the maximum time $t^\star_{m,n}$, such that $f_0(t^\star_{m,n}) \le 1 - (m-1) \cdot f_{n-1}(t^\star_{m,n})$. The existence of $t$ is given, as any phantom system trivially satisfies this condition for $t = 0$. For a family of phantom system $\F = \{\F_n \mid n \in \N\}$, we define the \textit{threshold function} $\tau : \N \times \N \rightarrow [\frac{1}{2},1]$ as $\tau(n,m) = 1-f_{n-1}(t^\star_{m,n})$. We are in particular interested in the families of phantom systems that have $\tau(n,m) < 1$ for all $m,n \in \N, m \ge 2$.
By the following argument, we can see that this is the case for all slow families of phantom systems. Let $m \in \N, m \ge 2$ and $\F = \{\F_n \mid n \in \N\}$ be a slow family of phantom systems. Since $\F$ is slow, for any $n$, we know there is a time $t$, where $f_0(t) < 1$ and $f_{n-1}(t) > 0$. Then, there must be a time $t' \le t$, where we have $0 < f_{n-1}(t') \le \frac{1-f_0(t)}{m-1}$, at which we meet the above condition, since $f_0(t') \le f_0(t) \le 1 - (m-1) f_{n-1}(t')$ and $f_0(t') = f_0(t) = 0$. Thus, we have $\tau(n,m) = 1-f_{n-1}(t^\star_{m,n}) \le 1-f_{n-1}(t') < 1$.

\begin{lemma}\label{lem:phantom_exact_aggregate}
    For any slow phantom mechanism $\A^\F$ and its threshold function $\tau: \N \times \N \rightarrow [0,1)$ the following holds. For all $n,m \in \N$, $m \ge 2$ and any profile $P = (\p_1, \dots, \p_n) \in \mathcal{P}_{n,m}$ for which $p_{i,j} \geq \tau(n,m)$ for some $j \in [m]$ and all $i \in [n]$, the aggregate $a = \A^\F(P)$ is given as
    $$
        a_k =
        \begin{dcases}
            \max_{i \in [n]}(\p_{i,k}) & \text{ for } k \neq j \\
            1 - \sum_{\substack{h \in [m]\\ h \neq j}} \max_{i \in [n]}(\p_{i,h}) & \text{ for } k = j.
        \end{dcases}
    $$
\end{lemma}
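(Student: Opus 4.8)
The plan is to mimic the proof of \Cref{lem:greedy_max_exact_aggregate} but using the defining property of the threshold function $\tau$ instead of \Cref{obs:medians}. Recall that $\tau(n,m) = 1 - f_{n-1}(t^\star_{m,n})$, where $t^\star_{m,n}$ is the maximum time with $f_0(t^\star_{m,n}) \le 1-(m-1)\cdot f_{n-1}(t^\star_{m,n})$. I will first argue that the claimed vector $a$ is exactly the vector of medians at time $t = t^\star_{m,n}$, and then verify that this vector sums to $1$, so that $t^\star_{m,n}$ is a time of normalization and (by uniqueness of the induced allocation) $a$ is the aggregate.

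\textbf{Step 1: the medians at $t^\star_{m,n}$.} Fix $k \neq j$. Since all $n$ votes on coordinate $j$ are at least $\tau(n,m) = 1-f_{n-1}(t^\star_{m,n})$, the votes on the remaining coordinates are correspondingly small: for each voter $i$, $\sum_{h\neq j} p_{i,h} = 1-p_{i,j} \le f_{n-1}(t^\star_{m,n})$, and in particular $p_{i,k} \le f_{n-1}(t^\star_{m,n})$ for every $i$. Hence at time $t^\star_{m,n}$ we have $n$ votes on coordinate $k$ all at most $f_{n-1}(t^\star_{m,n}) \le f_{n-2}(t^\star_{m,n}) \le \dots \le f_0(t^\star_{m,n})$, together with the $n+1$ phantom values $f_0(t^\star_{m,n}) \ge \dots \ge f_n(t^\star_{m,n})$. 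I want the median of these $2n+1$ numbers to be $\max_{i\in[n]} p_{i,k}$. The $n+1-\ell$ values $f_0,\dots,f_{n-\ell-1}$ that exceed (weakly) a given vote, plus the votes themselves, sit above $\max_i p_{i,k}$ is not quite the right count; the cleaner way is: at least $n$ of the phantoms, namely $f_1,\dots,f_n$ wait---I will instead directly count. The values $\ge \max_i p_{i,k}$ among the $2n+1$ numbers include $f_0,\dots,f_{n-1}$ (each $\ge f_{n-1}(t^\star_{m,n}) \ge \max_i p_{i,k}$), which is $n$ values, plus $\max_i p_{i,k}$ itself; the values $\le \max_i p_{i,k}$ include all $n$ votes plus $f_n(t^\star_{m,n})=0 \le \max_i p_{i,k}$, which is $n+1$ values. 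So $\max_i p_{i,k}$ is the $(n+1)$-st smallest, i.e.\ the median. For coordinate $j$: all $n$ votes are $\ge \tau(n,m) = 1-f_{n-1}(t^\star_{m,n})$, and I will show the median equals $1 - \sum_{h\neq j}\max_i p_{i,h}$; this requires checking this value lies between the appropriate phantom positions, for which I use the defining inequality $f_0(t^\star_{m,n}) \le 1-(m-1)f_{n-1}(t^\star_{m,n}) \le 1 - \sum_{h\neq j}\max_i p_{i,h}$ (using $\max_i p_{i,h}\le f_{n-1}(t^\star_{m,n})$) together with the fact that each vote $p_{i,j}\ge 1 - \sum_{h\neq j}\max_i p_{i,h}$ is false in general---so more carefully, the median on coordinate $j$ is $\min(f_0(t^\star_{m,n}),\dots)$-type reasoning analogous to \Cref{obs:medians}, pinned down by normalization in Step 2.

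\textbf{Step 2: normalization and conclusion.} Summing the candidate medians gives $\sum_{k\neq j}\max_i p_{i,k} + \big(1-\sum_{k\neq j}\max_i p_{i,k}\big) = 1$, so $t^\star_{m,n}$ is a time of normalization and, since the median on each coordinate $k\neq j$ is forced to be $\max_i p_{i,k}$ as computed in Step 1, the remaining coordinate must absorb the slack $1-\sum_{k\neq j}\max_i p_{i,k}$. By the well-definedness of the moving-phantom allocation, $a$ is as claimed. The main obstacle is Step 1 for coordinate $j$: one must confirm that the value $1-\sum_{h\neq j}\max_i p_{i,h}$ genuinely is the median there, i.e.\ that it is sandwiched between $f_n(t^\star_{m,n})=0$ and the relevant upper phantom/vote, which is exactly where the maximality of $t^\star_{m,n}$ in the definition of $\tau$ (rather than an arbitrary feasible $t$) is used---and this is also the only place the hypothesis $p_{i,j}\ge\tau(n,m)$ for \emph{all} $i$ (not just a bound on the minimum) is needed, mirroring the "weakly $j$-extreme" hypothesis of \Cref{lem:greedy_max_exact_aggregate}.
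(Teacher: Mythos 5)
Your plan evaluates the medians at $t^\star_{m,n}$ and then tries to conclude in Step~2 that $t^\star_{m,n}$ is a time of normalization because ``the candidate medians sum to~$1$.'' This does not follow, because the candidate median on coordinate $j$ is not the actual median there at time $t^\star_{m,n}$. At $t^\star_{m,n}$ every vote on coordinate $j$ is at least $\tau(n,m)=1-f_{n-1}(t^\star_{m,n})\ge f_0(t^\star_{m,n})$ (since $f_0(t^\star_{m,n})\le 1-(m-1)f_{n-1}(t^\star_{m,n})$), so the actual median on coordinate $j$ equals $f_0(t^\star_{m,n})$, which is bounded above by $1-(m-1)f_{n-1}(t^\star_{m,n})\le 1-\sum_{h\neq j}\max_i p_{i,h}$ and will typically be \emph{strictly} smaller. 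Hence the true sum of medians at $t^\star_{m,n}$ is $\le 1$ with equality only in degenerate profiles, and $t^\star_{m,n}$ is generally not a time of normalization. Nothing in your Step~2 rescues this: you cannot ``let the remaining coordinate absorb the slack'' while staying at $t^\star_{m,n}$, because the median on coordinate~$j$ at that time is pinned to $f_0(t^\star_{m,n})$.

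The fix (and what the paper actually does) is to choose a \emph{different} time $t$: the one where $f_0(t)=1-\sum_{h\neq j}\max_i p_{i,h}$. The defining maximality of $t^\star_{m,n}$ together with $\max_i p_{i,h}\le f_{n-1}(t^\star_{m,n})$ gives $1-(m-1)f_{n-1}(t^\star_{m,n})\le 1-\sum_{h\neq j}\max_i p_{i,h}\le 1$, so such a $t$ exists and satisfies $t^\star_{m,n}\le t\le t'$ (where $t'$ is the minimum time with $f_0(t')=1$). On this interval, slowness guarantees $f_n(t)=0$, and monotonicity of $f_{n-1}$ gives $f_{n-1}(t)\ge 1-\tau(n,m)\ge\max_i p_{i,k}$ for $k\neq j$, so your Step~1 count goes through unchanged on coordinates $k\neq j$; on coordinate $j$ the median is now exactly $f_0(t)=1-\sum_{h\neq j}\max_i p_{i,h}$ (using $\min_i p_{i,j}\ge 1-\sum_{h\neq j}\max_i p_{i,h}\ge f_0(t)$), and normalization holds at $t$. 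One more small correction: you remark that ``each vote $p_{i,j}\ge 1-\sum_{h\neq j}\max_i p_{i,h}$ is false in general,'' but in fact this inequality always holds, since $p_{i,j}=1-\sum_{h\neq j}p_{i,h}\ge 1-\sum_{h\neq j}\max_{i'}p_{i',h}$; it is precisely what makes the coordinate-$j$ median equal the top phantom at the corrected time~$t$.
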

\begin{proof}
    Let $t^\star$ be the minimum time at which $f_{n-1}(t^\star) = 1-\tau(n,m)$ and $t'$ the minimum time at which $f_0(t') = 1$. By the slowness of the phantom mechanism, we know that $f_0(t^\star) \le 1-(m-1) \cdot (1-\tau(n,m))$ and $f_n(t') = 0$.
    For each alternative $k \neq j$ we have $\max_{i \in [n]}(p_{i,k}) \le 1-\tau(n,m)$. Consider the position of the phantoms at time $t$, where $f_0(t) = 1 - \sum_{\substack{h \in [m]\\ h \neq j}} \max_{i \in [n]}(\p_{i,h})$. Note, that $t \in [t^\star, t']$, since 
    $$
        1-(m-1)\cdot (1-\tau(n,m))  = 1-\sum_{\substack{h \in [m]\\ h \neq j}}  1-\tau(n,m) \le 1-\sum_{\substack{h \in [m]\\ h \neq j}} \max_{i \in [n]}(\p_{i,h}).
    $$ Then exactly $n$ phantoms are at positions larger than $1-\tau(n,m) \ge \max_{i \in [n]} p_{i,k}$ for $k \neq j$ and one phantom is at position $0$, thus the median on alternative $k \neq j$ equals $\max_{i \in [n]} (p_{i,k})$. On alternative $j$, we have $n$ voters at positions larger than $\tau(n,m)$, thus the top phantom at position $1 - \sum_{\substack{h \in [m]\\ h \neq j}} \max_{i \in [n]}(\p_{i,h}) \le \min_{i \in [n]}(\p_{i,j})$ is the median and normalization is attained.
\end{proof}

\begin{lemma} \label{lem:phantom_high_aggregate}
    For any slow phantom mechanism $\A^\F$ and its threshold function $\tau: \N \times \N \rightarrow [0,1)$ the following holds. For all $n,m \in \N$, $m \ge 2$ and any profile $P = (\p_1, \dots, \p_n) \in \mathcal{P}_{n,m}$ and corresponding aggregate $a = \A^\F(P)$ it holds that $a_j > \tau(n,m)$ if and only if
    $$\min_{i \in [n]}(\p_{i,j}) > \tau(n,m) \qquad \textnormal{and} \qquad \sum_{\substack{k \in [m] \\ k\neq j}} \max_{i \in [n]}(\p_{i,k}) < 1-\tau(n,m).$$
\end{lemma}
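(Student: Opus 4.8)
The plan is to follow the proof of \Cref{lem:greedy_max_high_aggregate} almost verbatim, replacing the \textsc{GreedyMax}-specific median identity of \Cref{obs:medians} by two one-sided bounds valid for every phantom system, and using the defining property of $t^\star_{m,n}$ to transfer control from the top phantom $f_0$ to the second-to-last phantom $f_{n-1}$ (in \textsc{GreedyMax} this transfer was free because $g_0\equiv g_{n-1}$). The two bounds I will use, valid at any $t$ and any coordinate $k$, are: (U)~$\med(f_0(t),\dots,f_n(t),\p_{1,k},\dots,\p_{n,k})\le f_0(t)$, since all $n{+}1$ phantoms lie below $f_0(t)$; and (L)~$\med(f_0(t),\dots,f_n(t),\p_{1,k},\dots,\p_{n,k})\ge\min\{f_{n-1}(t),\max_i\p_{i,k}\}$, since the $n$ phantoms $f_0(t),\dots,f_{n-1}(t)$ together with a largest vote on $k$ give $n{+}1$ of the $2n{+}1$ values that are at least $\min\{f_{n-1}(t),\max_i\p_{i,k}\}$.

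First I would pin down $t^\star_{m,n}$. The function $g(t)=f_0(t)+(m-1)f_{n-1}(t)$ is continuous, non-decreasing, vanishes at $t=0$, is $\le1$ on $[0,t^\star_{m,n}]$ and $>1$ on $(t^\star_{m,n},1]$, and (as already noted in the text) slowness gives $t^\star_{m,n}<1$; hence $g(t^\star_{m,n})=1$ by continuity, which yields $f_{n-1}(t^\star_{m,n})=1-\tau(n,m)$ and $f_0(t^\star_{m,n})=1-(m-1)\bigl(1-\tau(n,m)\bigr)$. Since $m\ge2$ and $\tau(n,m)<1$, the latter gives the key inequality $f_0(t^\star_{m,n})\le\tau(n,m)$. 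The backward implication of the lemma is then immediate: if $\min_i\p_{i,j}>\tau(n,m)$ then $\p_{i,j}\ge\tau(n,m)$ for all $i$, so \Cref{lem:phantom_exact_aggregate} applies and, using the second hypothesis, $a_j=1-\sum_{k\ne j}\max_i\p_{i,k}>1-\bigl(1-\tau(n,m)\bigr)=\tau(n,m)$.

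For the forward implication, suppose $a_j>\tau(n,m)$ and let $t$ be a time of normalization. By (U), $f_0(t)\ge a_j>\tau(n,m)\ge f_0(t^\star_{m,n})$, so monotonicity of $f_0$ forces $t>t^\star_{m,n}$, and then monotonicity of $f_{n-1}$ gives $f_{n-1}(t)\ge f_{n-1}(t^\star_{m,n})=1-\tau(n,m)$. Now suppose, towards a contradiction, that the two claimed conditions fail. If some vote has $\p_{i^\star,j}\le\tau(n,m)$, then $\sum_{k\ne j}\p_{i^\star,k}=1-\p_{i^\star,j}\ge1-\tau(n,m)$, and (L) gives $a_k\ge\min\{f_{n-1}(t),\p_{i^\star,k}\}$ for every $k\ne j$, whence $\sum_{k\ne j}a_k\ge1-\tau(n,m)$ (if all $\p_{i^\star,k}\le f_{n-1}(t)$ the sum equals $1-\p_{i^\star,j}$; otherwise a single term already equals $f_{n-1}(t)\ge1-\tau(n,m)$); together with $a_j>\tau(n,m)$ this contradicts $\sum_k a_k=1$. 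In the only remaining case, $\min_i\p_{i,j}>\tau(n,m)$ but $\sum_{k\ne j}\max_i\p_{i,k}\ge1-\tau(n,m)$; then all votes are $j$-extreme, so \Cref{lem:phantom_exact_aggregate} applies and gives $a_j=1-\sum_{k\ne j}\max_i\p_{i,k}\le\tau(n,m)$, again a contradiction. This establishes both required conditions.

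The step I expect to be most delicate is locating $t^\star_{m,n}$ and checking $f_0(t^\star_{m,n})\le\tau(n,m)$: this is exactly where the ``stronger condition for the choice of $\tau$'' enters, and it is what replaces the identity $g_0\equiv g_{n-1}$ used freely in the \textsc{GreedyMax} argument; everything else is a faithful copy of that proof. Once \Cref{lem:phantom_high_aggregate} is available, the truthfulness half of \Cref{prop:phantom_cutoff_truthfulness} follows by repeating the proof of \Cref{prop:greedy_max_cutoff_truthfulness} word for word, with \Cref{lem:phantom_exact_aggregate,lem:phantom_high_aggregate} in the roles of \Cref{lem:greedy_max_exact_aggregate,lem:greedy_max_high_aggregate}.
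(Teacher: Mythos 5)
Your proof is correct and follows the same route as the paper's: derive $f_0(t^\star_{m,n}) \le \tau(n,m)$ from the definition of $t^\star_{m,n}$, use monotonicity to conclude $t > t^\star_{m,n}$ and hence $f_{n-1}(t) \ge 1-\tau(n,m)$ at any time $t$ of normalization with $f_0(t)>\tau(n,m)$, and then dispatch the two failure cases via the phantom-median lower bound and \Cref{lem:phantom_exact_aggregate}, respectively. The only cosmetic difference is that you pin down $g(t^\star_{m,n})=1$ exactly, whereas the paper only needs the inequality $f_0(t^\star_{m,n})\le 1-(m-1)f_{n-1}(t^\star_{m,n})$, which is built into the definition of $t^\star_{m,n}$ and already yields $f_0(t^\star_{m,n})\le\tau(n,m)$.
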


\begin{proof}
    We first show the forward direction of the equivalence.

    Assume for contradiction that $a_j > \tau(n,m)$, but $\min_{i \in [n]}(\p_{i,j}) \le \tau(n,m)$ or $\sum_{\substack{k \in [m] \\ k\neq j}} \max_{i \in [n]}(\p_{i,k}) \ge 1-\tau(n,m)$.
    \begin{CaseTree}
        \item There exists a vote that is not $j$-extreme, i.e., $\min_{i \in [n]}(\p_{i,j}) \le \tau(n,m)$.\\
        Let $i^*$ be a voter with a vote that is not $j$-extreme, i.e.\ $\p_{i^\star,j} < \tau(n,m)$.
        Since $a_j > \tau(n,m)$ we know by the definition of the threshold function $\tau$ that for any time $t$ of normalization $f_0(t) > \tau(n,m) \ge 1-(m-1) \cdot (1-\tau(n,m))$. Thus, we also know that $f_{n-1}(t) \ge 1-\tau(n,m)$ and therefore $a_k$ is lower bounded by $\min\{1-\tau(n,m),p_{i^*,k}\}$ for all alternatives $k \neq j$. Thus, $$a_j + \sum_{\substack{k \in [m]\\ k \neq j}} a_k \geq a_j + \sum_{\substack{k \in [m]\\ k \neq j}} \min\{1-\tau(n,m),p_{i^*,k}\} > \tau(n,m) + 1-\tau(n,m) =1,$$
        where the last inequality holds, because $\sum_{\substack{k \in [m]\\ k \neq j}} p_{i^*,k} > 1-\tau(n,m)$. This is a contradiction to normalization at time $t$.
        
        \item All votes are $j$-extreme but $\sum_{\substack{k \in [m] \\ k\neq j}} \max_{i \in [n]}(\p_{i,k}) \ge 1-\tau(n,m)$.\\
        Then for all voters $i$ we have $\p_{i,j} > \tau(n,m)$ and thus $\p_{i,k} < 1-\tau(n,m)$ for all other alternatives $k \neq j$. From \Cref{lem:phantom_exact_aggregate}, we know that $a_k = \max_{i \in [n]}(\p_{i,k})$ for alternatives $k \neq j$, and thus $$a_j = 1 - \sum_{\substack{k \in [m] \\ k\neq j}} a_k = 1 - \sum_{\substack{k \in [m] \\ k\neq j}} \max_{i \in [n]}(\p_{i,k}) \le \tau(n,m),$$
        again contradicting our initial assumption.
    \end{CaseTree}

    We now show the backwards direction of the equivalence.\\

    Assume $\min_{i \in [n]}(\p_{i,j}) > \tau(n,m)$ and $\sum_{\substack{k \in [m] \\ k\neq j}} \max_{i \in [n]}(\p_{i,k}) < 1-\tau(n,m)$. Again, we know from \Cref{lem:phantom_exact_aggregate} that $a_k = \max_{i \in [n]}(\p_{i,k})$ for all alternatives $k \neq j$. Then 
    \[
        a_j = 1 - \sum_{\substack{k \in [m] \\ k\neq j}} a_k = 1 - \sum_{\substack{k \in [m] \\ k\neq j}} \max_{i \in [n]}(\p_{i,k}) > \tau(n,m).\qedhere
    \]
    
\end{proof}

We are now ready to prove \Cref{prop:phantom_cutoff_truthfulness} by using the threshold function $\tau$ as the cutoff for any given slow phantom mechanism. Recall, that for given $n, m \in \N$ and a phantom system $\F_n$, the output $\tau(n,m)$ of the cutoff function is defined as $1-f_{n-1}(t^\star_{m,n})$, where $t^\star_{m,n}$ is the maximum time at which $f_0(t^\star_{m,n}) \le 1 - (m-1) \cdot f_{n-1}(t^\star_{m,n})$.
While the arguments of the proof follow the same arguments as the respective proofs for \textsc{CutoffGreedyMax} we state the full proof for completeness.

\propCutOffPhantomsTruthful*

\begin{proof}
    Let $\F$ be a slow family of phantom systems. We show the theorem using the phantoms threshold function $\tau$, as defined above. Let $n,m \in \N$ with $m \ge 3$. Since $m$ and $n$ are fixed, we write $\tau$ instead of $\tau(n,m)$ for the remainder of the proof to improve readability. 

    \textbf{$\A^\F_{c_\tau}$ is truthful:}
    Let $P = (\p_1, \dots, \p_n) \in \mathcal{P}_{n,m}$ be a profile. By the neutrality of $A^\F_c$ we can assume without loss of generality that $\p_{1,1} \le \dots \le \p_{1,m}$.

    Suppose there is a truthfulness violation for a voter. By the anonymity of $A^\F_c$, we can assume without loss of generality that this is voter $n$, i.e. there is $\p_n^\star \in \Delta^{(m-1)}$ such that for the profile $P^\star = (\p_1, \dots, \p_{n-1}, \p_n^\star)$ we have $\ellone{\p_n}{ \A^\F_c(P^\star)} < \ellone{\p_n}{\A^\F_c(P)}$.
    Let $a = \A^\F(P)$, $a^\star = \A^\F(P^\star)$, $\bar{a} = \A^\F_c(P)$ and $\bar{a}^\star = \A^\F_c(P^\star)$. In words, $a$ and $a^*$ are the aggregates of $\A^\F$ before and after the manipulation by voter $n$, respectively. Moreover, $\bar{a}$ and $\bar{a}^\star$ are the aggregates of $A^\F_c$ before and after the manipulation by voter $n$, respectively. 
    
    For all $j \in [m-1]$, we know that $\p_{1,j} \leq \tau$ and thus, by \Cref{lem:phantom_high_aggregate}, $a_j \le \tau$ and $a_j^\star \le \tau$. In other words, if $a$ or $a^*$ is extreme, then they are $m$-extreme. 

    \begin{CaseTree}
        \item Neither $a$ nor $a^*$ is $m$-extreme. 
        Then, $a = \bar{a}$, $a^\star = \bar{a}^\star$ and $$\ellone{\p_n}{\A^\F(P^*)} = \ellone{\p_n}{a^*} = \ellone{\p_n}{ \bar{a}^\star} < \ellone{\p_n}{\bar{a}} = \ellone{\p_n}{a} = \ellone{\p_n}{\A^\F(P)},$$ 
        which implies that $\A^\F$ is not truthful, contradicting the truthfulness of moving-phantom mechanisms (\Cref{thm:phantoms_anonymous_neutral_continuous}).
        \item The aggregate $a$ is not $m$-extreme but $a^*$ is $m$-extreme. \\
        Then, $\bar{a} = a$. 
        We argue that the aggregate $\bar{a}^\star$ can be achieved through a misreport by voter $n$ when using $\A^\F$, which leads to a contradiction to the truthfulness of $\A^\F$.

        Since $a^*$ is $m$-extreme, we know by \Cref{lem:phantom_high_aggregate} that $p_{i,m}>\tau$ for all $i \in [n-1]$ and $p^*_{n,m}>\tau$. Thus, by \Cref{lem:phantom_exact_aggregate} this implies that $a^*_j = \max\{p^*_{n,j},\max_{i \in [n-1]}p_{i,j}\}$ for all $j \in [m-1]$. Moreover, by the definition of the cutoff function, it holds that $\bar{a}^*_j > a^*_j$ for all $j \in [m-1]$ and $\bar{a}^*_m=\tau$. 

        We build the profile $Q = (\p_1, \dots, \p_{n-1}, \bar{a}^\star)$ and define $b = \A^\F(Q)$. Since $\bar{a}_m^* \geq \tau$ and all voters in $[n-1]$ are $m$-extreme, we can again apply \Cref{lem:phantom_exact_aggregate} and get that $$b_j = \max \{\bar{a}^\star_j, \max_{i \in [n-1]} p_{i,j}\} = \bar{a}^\star_j \text{ for all } j \in [m-1].$$ Thus, also $b_m = \bar{a}^\star_m$ and $b = \bar{a}^\star$. 
        This contradicts the truthfulness of $\A^\F$, since $$\ellone{\p_n}{\A^\F(Q)} = \ellone{\p_n}{b} = \ellone{\p_n}{\bar{a}^*} < \ellone{\p_n}{\bar{a}} = \ellone{\p_n}{a} = \ellone{\p_n}{\A^\F(P)}.$$

        \item The aggregate $a$ is $m$-extreme.\\
        By \Cref{lem:phantom_high_aggregate}, all reports in $P$ are $m$-extreme. Thus, by \Cref{lem:phantom_exact_aggregate}, $a_j = \max_{i \in [n]} (p_{i,j})\ge \p_{n,j}$ for all $j \in [m-1]$. Therefore, $\bar{a}_m = \tau < \p_{n,m}$ and $\bar{a}_j > a_j \geq \p_{n,j}$ for all $j \in [m-1]$. Now, by assumption $\ellone{\p_n}{\bar{a}^\star} < \ellone{\p_n}{\bar{a}}$ and thus
        \begin{align*}
            \ellone{\p_n}{\bar{a}^\star} - \ellone{\p_n}{\bar{a}} &=  
            2 - 2\sum_{j=1}^m \min(\p_{n,j}, \bar{a}^\star_j) - \big(2 - 2\sum_{j=1}^m \min(\p_{n,j}, \bar{a}_j)\big)\\ 
            &= 2 \sum_{j=1}^m (\min(\p_{n,j}, \bar{a}_j) - \min(\p_{n,j}, \bar{a}^\star_j)) \\ 
            &= 2 \big(\bar{a}_m - \min(\p_{n,m}, \bar{a}^\star_m) + 2 \sum_{j=1}^{m-1} \underbrace{(\p_{n,j} - \min(\p_{n,j}, \bar{a}^\star_j))}_{\ge 0}\big) &< 0,
        \end{align*}
        where the first equality holds due to \Cref{footnoteLemma}. But then $\bar{a}^\star_m > \bar{a}_m = \tau$, contradicting the definition of $\A^\F_c$. \qedhere
    \end{CaseTree}

    \textbf{$\A^\F_{c_\tau}$ is not a phantom mechanism:}
    Consider the profile $P \in \mathcal{P}_{n,m}$ defined as $$P = \Big(\Big(1, 0, \dots,0\Big), \dots, \Big(1, 0, \dots,0\Big),\Big(\frac{1+\tau}{2}, \frac{1-\tau}{2}, 0, \dots, 0\Big)\Big).$$
    By \Cref{lem:phantom_exact_aggregate} the aggregate $a = \A^\F_{c_\tau}(P)$ is given as $$a = c_\tau\Big(\frac{1+\tau}{2}, \frac{1-\tau}{2}, 0, \dots, 0\Big) = \Big(\tau, \frac{1-\tau}{2}+\frac{1-\tau}{2(m-1)}, \frac{1-\tau}{2(m-1)}, \dots, \frac{1-\tau}{2(m-1)}\Big).$$ Suppose for contradiction that there exists a moving-phantom mechanism $\A^\F$ such that $\A^\F(P) = a$. Then, $a_j$ corresponds to the median of the votes and the phantoms for alternative $j$ for each $j\in [m]$. Since $a_3$ does not match any of the votes for alternative $3$, some phantom must be at position $\frac{1-\tau}{2(m-1)}$ at a time of normalization.
    But then, for alternative $2$ there are $n-1$ votes at $0$, one vote at $\frac{1-\tau}{2}$ and at least one phantom at $\frac{1-\tau}{2(m-1)} < \frac{1-\tau}{2}$. Thus the median for alternative $2$ is at most $\frac{1-\tau}{2} < \frac{1-\tau}{2}+\frac{1-\tau}{2(m-1)} = a_2$, a contradiction.
\end{proof}

\section{Missing Proof of \Cref{sec:truthful_unanimous_non_phantoms}}

\begin{theorem} \label{thm:existence_truthful_non_and_unanimous_phantoms}
    For $n=2$ and $m=3$, there exists a mechanism $\A$, that is continuous, truthful, neutral, anonymous and unanimous, but is not a moving-phantom mechanism.
\end{theorem}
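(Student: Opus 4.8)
The plan is to realize $\A$ through a \emph{dual} of the cutoff-phantom construction: instead of post-processing the \emph{output} of a moving-phantom mechanism, we pre-process the \emph{votes}, clipping extreme coordinates before feeding the profile into a carefully chosen moving-phantom rule. The base rule will be \textsc{GreedyMin}, the mirror image of \textsc{GreedyMax} obtained by pinning one phantom at $1$ and letting the remaining $n$ move together from $0$ to $1$; dualizing \Cref{obs:medians} via $x \mapsto 1-x$ shows its medians take the form $\max\bigl(t,\min_{i\in[n]} \p_{i,j}\bigr)$, so \textsc{GreedyMin} is governed by the \emph{minimum} vote on each alternative, and the analogues of \Cref{lem:greedy_max_exact_aggregate,lem:greedy_max_high_aggregate} transfer by the same duality. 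Using a \emph{constant} vote-cutoff already yields a mechanism \textsc{VoteCutGreedyMin} that is anonymous, neutral, continuous, truthful for $m=3$ and every $n$, and not a moving-phantom mechanism; its truthfulness proof parallels that of \Cref{prop:greedy_max_cutoff_truthfulness}, reducing any profitable deviation to a profitable deviation under \textsc{GreedyMin} itself by exhibiting, for the manipulated clipped profile, a report of the deviator that \textsc{GreedyMin} maps to the same outcome.

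A constant vote-cutoff, however, destroys unanimity: if both voters report the same extreme allocation $p$, both votes are clipped identically and \textsc{GreedyMin} returns the clipped allocation rather than $p$. To fix this for $n=2$, we make the cutoff \emph{conditional and graded}: voter $i$'s report is clipped only to the extent that voter $3-i$'s report is \emph{not} extreme. Concretely, fix a threshold $\tau>1/2$ and a small buffer width; if voter $i$'s report is extreme, move its large coordinate toward $\tau$ (redistributing the removed mass over the other two alternatives) by a fraction that is $0$ when voter $3-i$'s report is extreme, is $1$ when voter $3-i$'s report lies at distance at least the buffer width from the extreme region, and interpolates linearly in between; then run \textsc{GreedyMin} on the resulting profile. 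Anonymity and neutrality follow from the symmetry of this rule. Continuity holds because every clip fraction --- and hence every clipped vote --- is a continuous, piecewise-linear function of the profile, and \textsc{GreedyMin} is continuous. Unanimity is now automatic: for $P=(p,p)$, if $p$ is extreme then each voter sees the other report as extreme, so the clip fraction is $0$; and if $p$ is not extreme, nothing is clipped; in both cases \textsc{GreedyMin} returns $p$.

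For the non-phantom property we use a profile on which the conditional cutoff is \emph{active} --- one extreme vote together with a suitable non-extreme vote --- so that the clipped vote creates a coordinate value that appears in neither of the original reports; the median bookkeeping from the proof of \Cref{prop:greedy_max_cutoff_not_phantom} then shows no phantom system $\F_2$ can reproduce the outcome, since the phantom forced onto that value makes the median on another coordinate too small.

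The main obstacle is truthfulness of the conditional-cutoff mechanism. The reduction available in the constant case fails here, because a unilateral deviation changes the clipped profile in \emph{two} places at once --- the deviator's own clipped report, and the clip applied to the other voter, which depends on the deviator's degree of extremeness --- so one cannot directly invoke truthfulness of \textsc{GreedyMin}. The proof instead proceeds by a case analysis on where the deviator's true report, the deviator's misreport, and the non-deviator's report each fall relative to the extreme region and the buffer (which coordinate is largest, and whether it is below $\tau$, inside the buffer, or beyond it), and bounds in each configuration how the \textsc{GreedyMin} outcome, hence the deviator's $\ell_1$-disutility, can move; the linear interpolation across the buffer is exactly what forces this disutility to vary monotonically, so that no configuration yields a gain. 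Executing this analysis --- guided by a picture of the product of two $2$-simplices --- is the bulk of the work, and is also where $n=2$ and $m=3$ enter: there the geometry of ``one coordinate extreme'' is simplest and the redistribution of clipped mass is unambiguous.
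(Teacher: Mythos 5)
Your overall architecture matches the paper's proof almost exactly: define \textsc{GreedyMin} as the dual of \textsc{GreedyMax}, pre-process votes rather than post-process the aggregate, first prove truthfulness of a constant vote-cutoff variant (\textsc{VoteCutGreedyMin}), then make the cutoff conditional and linearly graded in the \emph{other} voter's degree of extremeness so that unanimity is restored, and argue continuity from the piecewise-linear interpolation. The paper does all of this, with a threshold of $0.8$ and a buffer on $[0.7,0.8]$, and, as you anticipate, the bulk of the effort is a delicate case analysis for truthfulness (supported there by two auxiliary lemmas, which bound how the \textsc{GreedyMin} outcome can move when the \emph{other} voter's clipped report slides along a ray --- this is the bridge that lets the paper still reduce most cases to truthfulness of \textsc{GreedyMin} or \textsc{VoteCutGreedyMin}, a reduction you say outright ``fails'' but which the paper recovers precisely via those lemmas).

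The one place where your sketch contains a real gap is the non-phantom argument. You propose a single profile --- one extreme vote plus one non-extreme vote --- and claim that the coordinate value created by clipping forces a phantom into a position that makes the median on another coordinate too small, as in the proof of \Cref{prop:greedy_max_cutoff_not_phantom}. That proof leans on having $n-1$ votes at $0$ on the second alternative; with $n=2$ there is only one such vote, and the contradiction evaporates. Concretely, for the profile $((0.84,0.16,0),(0.7,0.3,0))$ the clipped profile is $((0.8,0.18,0.02),(0.7,0.3,0))$ and the aggregate is $(0.7,0.18,0.12)$; taking $f_0\ge 0.7$, $f_1=0.18$, $f_2=0.12$ reproduces this aggregate exactly, so no single-profile median count rules out a phantom system. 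The paper instead compares \emph{two} nearby profiles with active cutoff (e.g., the one above and $((0.82,0.18,0),(0.7,0.3,0))$): the phantom positions forced on the second and third coordinates move in \emph{opposite} directions across the two profiles, contradicting the monotonicity of the phantom functions in $t$. Your non-phantom step needs to be replaced by such a two-profile monotonicity argument; the single-profile ``median too small'' route does not go through for $n=2$.
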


In this section, we present a moving-phantom mechanism for $m=3$ and $n=2$ that is truthful, anonymous, neutral, continuous, and unanimous. Since proving this result takes significant effort, we divide it into two steps. In a first step, we introduce a novel mechanism that is truthful but not unanimous. In a second step, we adjust this mechanism to make it unanimous and prove its truthfulness by building upon the truthfulness of the first mechanism. 

\subsection{\textsc{VoteCutGreedyMin}: Another Truthful Non-Phantom for $\mathbf{m=3}$} \label{app:votecutgreedy}

We first define a new moving-phantom mechanism \textsc{GreedyMin}, which is defined by the following phantom system: In a first phase, the first phantom moves to 1 while all others remain at 0. In a second phase, all the remaining phantoms move simultaneously from 0 to 1: Formally, we define $\F = \{f_0, \dots, f_{n}\}$, where 
\begin{align*}
    & f_0(t) = \min\{2t,1\} \text{ and} \\
    & f_1(t) = \dots = f_n(t) = \max(0, 2t-1)
\end{align*}

The rule can be interpreted as a dual to the \textsc{GreedyMax} rule, introduced in \Cref{subsec:greedymax} with the following interpretation: In the beginning, we set a budget $b=1$ and a counter $k=0$. Then, we iterate over the alternatives in decreasing order of their minimum votes. Let $j$ be the current alternative. If $\min_{i \in [n]} p_{i,j} > b/(m-k)$, then assign $\min_{i \in [n]} p_{i,j}$ to alternative $j$, decrease $b$ by the same value and increase $k$ by one. Otherwise, assign $b/(m-k)$ to all remaining alternatives and stop the process. The interpretation is that the mechanism greedily assigns the value of the minimum vote to alternatives, unless this would lead to future alternatives receiving more than what is assigned in the current step.

We formalize parts of this intuition in the following lemma. 

\begin{lemma} \label{lem:diag_line_observations}
    Let $P = (p_1, \dots, p_n)$ be a profile and $a = \A^\F(P)$. Then 
    \begin{enumerate}
        \item The sum of medians is normalized at time $t$ with $\frac{1}{2} \le t \le \frac{1+m}{2m}$, \label{itemNormalization}
        \item $a_j \le \max(\frac{1}{m}, \min_{i \in [n]}(p_{i,j}))$ for all $j \in [m]$, \label{lem:diag_line_observations:smaller_than_min_voter}
        \item $a_j \ge \min_{i \in [n]}(p_{i,j})$ for all $j \in [m]$, \label{lem:diag_line_observations:bigger_than_min_voter}
        \item $a_k \ge p_{i,k}$ for all $i \in [n]$, where $k = \argmin_{j \in [m]}(p_{i,j})$. \label{lem:diag_line_observations:bigger_than_min_coordinate}
    \end{enumerate}    
\end{lemma}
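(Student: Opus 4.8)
The plan is to first establish a closed form for the medians induced by the \textsc{GreedyMin} phantom system — the analogue of \Cref{obs:medians} for \textsc{GreedyMax} — and then read off all four items from it. Writing $\mu_j:=\min_{i\in[n]}p_{i,j}$ as a shorthand, I claim that for every $j\in[m]$ and $t\in[0,1]$,
\[
    \med(f_0(t),\dots,f_n(t),p_{1,j},\dots,p_{n,j})=
    \begin{cases}
        \min(2t,\mu_j) & \text{if }t\le\tfrac12,\\
        \max(2t-1,\mu_j) & \text{if }t\ge\tfrac12.
    \end{cases}
\]
To prove this I would do a direct order-statistic count on the $2n+1$ relevant values. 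In the first phase ($t\le\tfrac12$) the phantoms are $n$ copies of $0$ and one copy of $2t$, so after the $n$ phantom zeros the $(n{+}1)$-st smallest value is the smallest element of $\{2t\}\cup\{p_{i,j}\}_{i\in[n]}$, namely $\min(2t,\mu_j)$. In the second phase ($t\ge\tfrac12$) the phantoms are $n$ copies of $2t-1$ and one copy of $1$, so the $(n{+}1)$-st smallest value equals $2t-1$ as soon as at least one vote is $\le 2t-1$ (equivalently $\mu_j\le 2t-1$), and equals $\mu_j$ otherwise; that is, it equals $\max(2t-1,\mu_j)$. The two branches agree at $t=\tfrac12$, and each median is continuous and non-decreasing in $t$, so $M(t):=\sum_{j\in[m]}\med(\cdot)$ is continuous and non-decreasing.

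Given this formula, items (i)--(iii) are short. For item (i): $M(\tfrac12)=\sum_j\mu_j\le\sum_j p_{1,j}=1$, whereas at $t=\tfrac{1+m}{2m}\in[\tfrac12,1]$ we have $2t-1=\tfrac1m$, hence $M(\tfrac{1+m}{2m})=\sum_j\max(\tfrac1m,\mu_j)\ge 1$; continuity and monotonicity of $M$ then give a normalization time $t^\star\in[\tfrac12,\tfrac{1+m}{2m}]$, which I fix (the induced allocation does not depend on the choice of normalization time). For items (ii) and (iii), we then have $a_j=\max(2t^\star-1,\mu_j)$ with $0\le 2t^\star-1\le\tfrac1m$, which immediately yields $\mu_j\le a_j\le\max(\tfrac1m,\mu_j)$ for every $j$.

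For item (iv), I fix a voter $i$, put $k=\argmin_{j\in[m]}p_{i,j}$ (so $p_{i,k}=\min_j p_{i,j}$), and argue by contradiction: assume $a_k<p_{i,k}$. Setting $s:=2t^\star-1\ge0$, we have $a_j=\max(s,\mu_j)\ge s$ for all $j$, hence $s\le a_k<p_{i,k}\le p_{i,j}$ for every $j$. Now for each $j$, either $\mu_j\ge s$ and then $a_j=\mu_j\le p_{i,j}$, or $\mu_j<s$ and then $a_j=s<p_{i,j}$; in both cases $a_j\le p_{i,j}$, with strict inequality at $j=k$. Summing over $j$ gives $1=\sum_j a_j<\sum_j p_{i,j}=1$, a contradiction, so $a_k\ge p_{i,k}$. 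The main obstacle is getting the median formula exactly right — in particular handling the boundary cases $t\in\{0,\tfrac12,1\}$ and ties between votes and phantom positions; once it is in place, all four items follow in a couple of lines each, as above.
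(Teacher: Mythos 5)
Your proof is correct and, at its core, follows the same path as the paper's: pin the normalization time in $[\tfrac12,\tfrac{1+m}{2m}]$ by evaluating the sum of medians at the two endpoints, then read items (ii)--(iv) off the behavior of the medians and their monotonicity in $t$. The notable organizational difference is that you first establish an explicit closed-form median formula,
$\med(f_0(t),\dots,f_n(t),p_{1,j},\dots,p_{n,j})=\min(2t,\mu_j)$ for $t\le\tfrac12$ and $\max(2t-1,\mu_j)$ for $t\ge\tfrac12$ with $\mu_j=\min_i p_{i,j}$,
i.e., the \textsc{GreedyMin} analogue of the paper's Observation~\ref{obs:medians} for \textsc{GreedyMax}. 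The paper instead reasons about phantom positions ad hoc in each item (e.g., ``lower phantoms never exceed $\tfrac1m$,'' ``only one phantom is above $\tfrac1m$''), which is the same count on order statistics but without isolating the formula. Your formula makes (ii) and (iii) one-line consequences of $0\le 2t^\star-1\le\tfrac1m$, which is slightly crisper. For (iv), the two proofs diverge in style: the paper argues forward, evaluating the medians at $t=\tfrac{1+p_{i,k}}{2}$, observing the sum is $\le 1$, hence $t^\star\ge\tfrac{1+p_{i,k}}{2}$ and the median on coordinate $k$ can only increase; you argue by contradiction, showing $a_j\le p_{i,j}$ for all $j$ with strict inequality at $k$, contradicting normalization. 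Both are valid; the paper's version avoids contradiction and uses monotonicity directly, but yours is equally rigorous and the preliminary median formula is a nice addition that the paper could have stated explicitly.
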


\begin{proof}\leavevmode
\begin{enumerate}[label=(\roman*)]
    \item At time $t = \frac{1}{2}$ the first phantom position is 1 and all others are 0, thus the sum of all medians is $\sum_{j \in [m]} \min_{i \in [n]} p_{i,j} \le 1$. At time $t = \frac{1+m}{2m}$ all phantoms have positions at least $\frac{1}{m}$, thus the sum of all medians is at least $m \frac{1}{m} = 1$.
    \item From \Cref{itemNormalization} we know that the lower phantoms never exceed $\frac{1}{m}$. If $\min_{i \in [n]}p_{i,j} \leq \frac{1}{m}$, this implies an upper bound of $\frac{1}{m}$ for $a_j$. Otherwise, the fact that only one phantom is above $\frac{1}{m}$ implies an upper bound of $\min_{i \in [n]} p_{i,j}$ for $a_j$. 
    \item At time $t = \frac{1}{2}$ the median on coordinate $j$ is $\min_{i \in [n]}p_{i,j}$. The claim follows from the monotonicity of the medians over time and the fact that $t\geq \frac{1}{2}$ at time of normalization by \Cref{itemNormalization}.
    \item We fix some $i$ and look at $t = \frac{1+p_{i,k}}{2}$ (the lower phantoms are at position $p_{i,k}$). On coordinate $k$ the median is $p_{i,k}$ and on all other coordinates $j \neq k$ the median is upper bounded by $p_{i,j}$. Thus, the sum of medians is upper-bounded by $1$ showing that we have normalization at time $t \ge \frac{1+p_{i,k}}{2}$ and the median on coordinate $k$ can only increase.
\end{enumerate}
This concludes the proof. 
\end{proof}

\textbf{VoteCutGreedyMin} We make use of the cutoff function $c_\tau$ for the constant function $\tau=0.8$ (defined in section \Cref{subsec:greedymax}), but this time apply it to each of the voters (instead of the aggregate), before applying \textsc{GreedyMin}. We denote the resulting mechanism by $\mathcal{V}$. More precisely, let $P$ be a profile and $\mathcal{A}^{\F}$ the \textsc{GreedyMin} mechanism. Then let $P'$ be the profile where we replace each $p_i$ by $c_{\tau}(p_i)$. Then, let $\mathcal{V}(P) = \mathcal{A}^{\F}(P')$.

\begin{proposition}
    For every $n \in \N$ and any phantom system $\G_n$ there exists some profile $P \in (\Delta^{(2)})^n$ with three alternatives such that $\mathcal{V}(P) \neq \A^{\G_n}(P)$. 
\end{proposition}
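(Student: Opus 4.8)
The plan is to produce, for each $n$ and each phantom system $\G_n$, a single three-alternative profile witnessing $\mathcal{V}(P)\neq\A^{\G_n}(P)$; in fact one profile works for all $\G_n$ at once. I would take the unanimous profile $P=(p,\dots,p)$ consisting of $n$ copies of $p=(0.9,\,0.1,\,0)$. The point, as in \Cref{prop:greedy_max_cutoff_not_phantom}, is that the cutoff pushes mass onto the third alternative, which receives zero in every report, so a matching phantom system would be forced to place a phantom there, and this placement then clashes with the second coordinate.

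The first step is to evaluate $\mathcal{V}(P)$. Since $\max_j p_j=0.9>0.8=\tau$, the vote-cutoff replaces every $p_i$ by $c_\tau(p)=(0.8,\,0.15,\,0.05)$, so $\mathcal{V}(P)$ equals the output of \textsc{GreedyMin} on the profile $P'$ of $n$ copies of $(0.8,0.15,0.05)$. By \Cref{lem:diag_line_observations:bigger_than_min_voter} the aggregate $a=\A^{\F}(P')$ of \textsc{GreedyMin} satisfies $a_j\ge\min_i p'_{i,j}$ for each $j$, i.e.\ $a_1\ge 0.8$, $a_2\ge 0.15$, $a_3\ge 0.05$; since these bounds already sum to $1$ and $a\in\Delta^{(2)}$, all three are tight, so $\mathcal{V}(P)=(0.8,0.15,0.05)$.

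The second step is to rule out a phantom system reproducing this aggregate. Suppose towards a contradiction that $\A^{\G_n}(P)=(0.8,0.15,0.05)$ for a phantom system $\G_n=\{g_0\ge\dots\ge g_n\}$, fix a time of normalization $t^\star$, and set $\phi_k=g_k(t^\star)$, so $\phi_0\ge\dots\ge\phi_n\ge0$ and the aggregate on coordinate $j$ is the median of $\phi_0,\dots,\phi_n$ together with the $n$ reports on coordinate $j$. On coordinate $3$ all reports are $0$, so that median is the $(n+1)$-st smallest of $n$ zeros and $n+1$ nonnegative values, which is exactly $\phi_n$; hence $\phi_n=0.05$ and therefore $\phi_k\ge0.05$ for all $k$. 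On coordinate $2$ all reports equal $0.1$, and since $\phi_n=0.05<0.1$, at least $n+1$ of the $2n+1$ relevant values (the $n$ reports and the phantom $\phi_n$) are $\le0.1$, so the median on coordinate $2$ is $\le0.1<0.15$, contradicting $\A^{\G_n}(P)=(0.8,0.15,0.05)$. Thus $\mathcal{V}(P)\neq\A^{\G_n}(P)$, which proves the proposition.

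I do not expect a serious obstacle: once \Cref{lem:diag_line_observations} is available the value of $\mathcal{V}(P)$ is forced by the normalization constraint, and the phantom-side argument is a short counting argument on the median. The only points requiring care are handling ties when phantom values coincide with reported values, and noting that the coordinate-$3$ equation pins the smallest phantom value $\phi_n$ \emph{exactly} to $0.05$ rather than merely bounding it.
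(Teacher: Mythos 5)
Your proposal is correct and follows essentially the same route as the paper: the same unanimous profile $P=((0.9,0.1,0),\dots,(0.9,0.1,0))$, the same computation that the cutoff yields $\mathcal{V}(P)=(0.8,0.15,0.05)$, and the same observation that coordinate $3$ forces the bottom phantom to $0.05$ while coordinate $2$ then cannot exceed $0.1$. You supply somewhat more explicit justification (invoking \Cref{lem:diag_line_observations:bigger_than_min_voter} and the normalization constraint, and spelling out the median-counting) than the paper's terse version, but the argument is identical.
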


\begin{proof}
    For any $n \in \N$, consider the unanimous profile in which every voter reports $(0.9,0.1,0)$. Here, every voter's report is mapped to $c_{\tau}(p_i) = (0.8, 0.15, 0.05)$, which yields the aggregate $\mathcal{V}(P) = (0.8, 0.15, 0.05)$. Suppose $\mathcal{V}$ was a moving-phantom mechanism. Then, at time of normalization, the phantom $g_n$ has to be at $0.05$ (due to alternative $3$). However, this implies that the median on alternative $2$ is at most $0.1$, a contradiction. 
\end{proof}

The following lemma will be helpful throughout this section and states that in the case of three alternatives, in order to decrease the $\ell_1$-distance of two normalized vector, one vector has to ``move towards'' the other on at least two coordinates.

\begin{lemma} \label{lem:twocoordinatesmove}
    Let $a,a^*,v \in \Delta^{(2)}$. If $a_i^* \geq a_i\geq v_i$ or $a_i^* \leq a_i\leq v_i$ for at least two $i \in \{1,2,3\}$, then $$\ellone{a}{v} \leq \ellone{a^*}{v}.$$
\end{lemma}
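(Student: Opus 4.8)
The plan is to work directly with the $\ell_1$-distance coordinate by coordinate, exploiting that in $\Delta^{(2)}$ the three coordinate deviations sum to zero. Since $\ell_1$ is invariant under simultaneously permuting the coordinates of $a$, $a^*$, and $v$, I would first relabel so that coordinates $1$ and $2$ are two indices witnessing the hypothesis (call them the \emph{good} coordinates) and coordinate $3$ is the remaining one.

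The crux is the identity $|a^*_i - v_i| - |a_i - v_i| = |a_i - a^*_i|$ for every good coordinate $i \in \{1,2\}$. This is a two-case check: if $a^*_i \ge a_i \ge v_i$, both absolute values resolve with a plus sign and the difference telescopes to $a^*_i - a_i = |a_i - a^*_i| \ge 0$; if instead $a^*_i \le a_i \le v_i$, both resolve with a minus sign and the difference is $a_i - a^*_i = |a_i - a^*_i| \ge 0$. It is precisely the hypothesis that forces $a_i$ to lie between $v_i$ and $a^*_i$, so that $a_i$ and $a^*_i$ are on the same side of $v_i$; this is what makes the telescoping work.

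Summing over the three coordinates then yields
\[
\ellone{a^*}{v} - \ellone{a}{v} = |a_1 - a^*_1| + |a_2 - a^*_2| + \bigl(|a^*_3 - v_3| - |a_3 - v_3|\bigr).
\]
The reverse triangle inequality bounds the last bracket below by $-|a_3 - a^*_3|$. Since $a, a^* \in \Delta^{(2)}$, we have $a_3 - a^*_3 = -(a_1 - a^*_1) - (a_2 - a^*_2)$, hence $|a_3 - a^*_3| \le |a_1 - a^*_1| + |a_2 - a^*_2|$, and the whole right-hand side is nonnegative, which is exactly the claim.

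There is no genuinely hard step here; the only thing to watch is the sign bookkeeping in the case distinction of the second paragraph. An essentially equivalent alternative would be to invoke the identity $\ellone{x}{y} = 2 - 2\sum_i \min(x_i, y_i)$ from the paper's footnote and track $\sum_i\bigl(\min(a_i, v_i) - \min(a^*_i, v_i)\bigr)$, showing it is nonnegative via the same sum-to-zero constraint; the direct triangle-inequality route above is slightly shorter.
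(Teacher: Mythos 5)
Your proof is correct and follows essentially the same route as the paper's: both reduce to the intermediate inequality $\ellone{a^*}{v} - \ellone{a}{v} \ge |a_1 - a^*_1| + |a_2 - a^*_2| - |a_3 - a^*_3|$ (from the ordering hypothesis on the two good coordinates and the reverse triangle inequality on the third) and then argue the right-hand side is nonnegative. The only difference is in this last step, where you close it with a single triangle inequality applied to $a_3 - a^*_3 = -(a_1 - a^*_1) - (a_2 - a^*_2)$, while the paper runs a two-case sign analysis to the same effect; your version is marginally cleaner.
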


\begin{proof}
    Assume that the condition holds for $i=1$ and $i=2$. 
    \begin{CaseTree}
        \item $a_1 \geq a^*_1$ and $a_2 \geq a^*_2$ or $a_1 < a^*_1$ and $a_2 < a^*_2$. \\
        Then, $|a_1 - a^*_1| + |a_2 - a^*_2| = |a_3 - a^*_3|$ and thus $$\ellone{a^*}{v} - \ellone{a}{v} \geq |a_1 - a^*_1| + |a_2 - a^*_2| - |a_3 - a^*_3| = 0 .$$
        \item $a_1 \geq a^\star_1$ and $a_2 < a^\star_2$ or $a_1 < a^\star_1$ and $a_2 \geq a^\star_2$. \\
        Assume wlog that $a_1 \geq a^\star_1$. Also, assume that $a_3 \geq a^\star_3$ (the case when $a_3 < a^\star_3$ follows analogously.) 
        Then, $|a_3 - a^*_3| \leq |a_2 - a^*_2|$ and thus $$\ellone{a^*}{v} - \ellone{a}{v} \geq |a_1 - a^*_1| + |a_2 - a^*_2| - |a_3 - a^*_3| \geq 0 .$$
    \end{CaseTree}
    This concludes the proof. 
\end{proof}

\begin{proposition} 
\label{prop:VoteCutGreedyMin}
    For $m=3$ and any $n \in \N$, the mechanism $\mathcal{V}$ is truthful. 
\end{proposition}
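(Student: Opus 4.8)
Since $\mathcal{V}$ applies the cutoff $c_\tau$ to each vote separately and then runs the moving-phantom mechanism $\mathcal{A}^\F=\textsc{GreedyMin}$, and since $c_\tau$ commutes with renaming the alternatives, $\mathcal{V}$ is anonymous and neutral; so I may assume it is voter~$n$ who deviates, from $p_n$ to $p_n^\star$. Write $v:=p_n$, $q:=c_\tau(v)$, $q^\star:=c_\tau(p_n^\star)$ and $P':=(c_\tau(p_1),\dots,c_\tau(p_{n-1}))$, so that $a:=\mathcal{V}(P)=\mathcal{A}^\F(P',q)$ and $a^\star:=\mathcal{V}(P^\star)=\mathcal{A}^\F(P',q^\star)$. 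Two starting observations drive everything: (1) $c_\tau$ maps $\Delta^{(2)}$ onto the \emph{truncated simplex} $\mathcal{T}=\{x\in\Delta^{(2)}:\max_j x_j\le\tau\}$ and is the identity on $\mathcal{T}$ (for $\tau=0.8$ and $m=3$ one checks directly that after a cut all coordinates are $\le\tau$), so all of $c_\tau(p_1),\dots,c_\tau(p_{n-1}),q,q^\star$ lie in $\mathcal{T}$; and (2) an honest voter~$n$ feeds the effective vote $c_\tau(v)=q$ into \textsc{GreedyMin}, so the truthfulness of moving-phantom mechanisms (\Cref{thm:phantoms_anonymous_neutral_continuous}) hands us for free the inequality $\ellone{q}{a}\le\ellone{q}{a^\star}$. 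The whole task is to upgrade this to $\ellone{v}{a}\le\ellone{v}{a^\star}$.

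If $v$ is not $\tau$-extreme then $q=v$ and the free inequality is already the claim, so assume $v$ is $\tau$-extreme. As $\tau>\tfrac12$, exactly one coordinate of $v$ exceeds $\tau$, which by neutrality I take to be coordinate~$1$, and by neutrality among coordinates $2,3$ I take $v_2\le v_3$; put $\delta:=v_1-\tau>0$, so $q=(\tau,\,v_2+\tfrac\delta2,\,v_3+\tfrac\delta2)$ with $q_2\le q_3$ and $q_2+q_3=1-\tau$. From \Cref{lem:diag_line_observations} applied to the inputs $P',q$ (resp.\ $P',q^\star$), all in $\mathcal{T}$, I would extract: (a) $a_1\le\tau<v_1$ and $a^\star_1\le\tau<v_1$ (part~(ii), since every effective vote has first coordinate $\le\tau$ and $\tau\ge\tfrac13$); (b) $a_2\ge q_2>v_2$, because coordinate~$2$ realises $\argmin_j q_j$ for voter~$n$ (part~(iv)); and (c) $a_2,a_3\le\tfrac13$ (part~(ii) again, the relevant coordinatewise minima being $\le q_3\le 1-\tau<\tfrac13$). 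Then, using $\ellone{u}{w}=2-2\sum_j\min(u_j,w_j)$ on $\Delta^{(2)}$ together with (a) and $q_1=\tau$, for any $x$ with $x_1\le\tau$ one gets $\ellone{v}{x}-\ellone{q}{x}=2\sum_{j\in\{2,3\}}\bigl(\min(q_j,x_j)-\min(v_j,x_j)\bigr)=:\varepsilon(x)$, with $\varepsilon(x)\in[0,2\delta]$ and, by (b), $\varepsilon(a)\ge 2(q_2-v_2)=\delta$. Hence $\ellone{v}{a}\le\ellone{v}{a^\star}$ is equivalent to $\ellone{q}{a}-\ellone{q}{a^\star}\le\varepsilon(a^\star)-\varepsilon(a)$.

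To finish I would split on the position of $a^\star$ relative to $a$. If $a^\star_1\le a_1$ and $a^\star_2\ge a_2$, then coordinates $1$ and $2$ witness $a^\star_1\le a_1\le v_1$ and $a^\star_2\ge a_2\ge v_2$, so \Cref{lem:twocoordinatesmove} (this is exactly where $m=3$ is used) yields $\ellone{v}{a}\le\ellone{v}{a^\star}$ outright; similarly, whenever $\varepsilon(a^\star)\ge\varepsilon(a)$ the displayed equivalence follows from the free inequality. The remaining configurations are those where the deviation \emph{raises} the extreme coordinate of the \textsc{GreedyMin} outcome ($a^\star_1>a_1$) or \emph{lowers} coordinate~$2$ ($a^\star_2<a_2$) while $\varepsilon(a^\star)<\varepsilon(a)$; here I would run a sub-case analysis on whether $p_n^\star$ is $\tau$-extreme and, if so, in which coordinate, and use (a)--(c), \Cref{lem:diag_line_observations}, and the identity $q_2+q_3=1-\tau$ to pin down $a^\star_2,a^\star_3$ relative to $a_2,a_3$ and $v_2,v_3$ — again typically extracting two coordinates on which $a$ lies weakly between $a^\star$ and $v$ for \Cref{lem:twocoordinatesmove}, and dispatching the one leftover case (a deviation that trades weight on coordinate~$3$ for weight on coordinate~$1$ without worsening $\ellone{q}{\cdot}$) via the bounds $\varepsilon(a)\le 2\delta$, $\varepsilon(a^\star)\ge 0$ plugged into the equivalence. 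The main obstacle is precisely this last bookkeeping: because voter~$n$'s effective vote $q$ differs from its true vote $v$, the free inequality is ``off by $\varepsilon$'', and one must show no deviation can exploit that slack, which forces tracking how \textsc{GreedyMin} moves \emph{each} coordinate of the outcome under the deviation (relative to both $q$ and $v$) — and this is also the reason the argument does not extend beyond $m=3$.
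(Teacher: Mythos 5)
Your framework is the same as the paper's: reduce to the deviating voter, use the "free inequality" $\ellone{q}{a}\le\ellone{q}{a^\star}$ from truthfulness of \textsc{GreedyMin} with respect to the cut vote $q=c_\tau(v)$, and then show the correction term $\varepsilon(x)=\ellone{v}{x}-\ellone{q}{x}$ does not break this. You also correctly observe that the question reduces to showing $\varepsilon(a^\star)\ge\varepsilon(a)$ — but that is precisely the step you do not establish, and your proposed fallback does not reach it.

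Concretely, there are three gaps. First, you state as your fact (a) only the weak bounds $a_1\le\tau$ and $a^\star_1\le\tau$, but you later need $a^\star_1\le a_1$ (otherwise your Lemma~\ref{lem:twocoordinatesmove} case does not apply and you are in the "remaining configurations"). The paper establishes $a^\star_1\le a_1$ unconditionally via a short case split using Lemma~\ref{lem:diag_line_observations}(ii)--(iii), which rules out the $a^\star_1>a_1$ configurations you say you would still need to handle. Second, and more importantly, you never derive the paper's key consequence of the contradiction hypothesis: that on the remaining non-extreme coordinate the \textsc{GreedyMin} aggregate lies strictly below the true vote (in your labeling, $a_3<v_3<q_3$). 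The paper gets this by applying Lemma~\ref{lem:twocoordinatesmove} to the three coordinates $(1,2,3)$ plus normalization; without it, the case $a_3\ge q_3$ gives $\varepsilon(a)=2\delta$ at the top of your range, and $\varepsilon(a^\star)\ge 0$ is nowhere near enough. Third, your "dispatch" of the leftover case via the bounds $\varepsilon(a)\le 2\delta$, $\varepsilon(a^\star)\ge 0$ only yields $\varepsilon(a^\star)-\varepsilon(a)\ge -2\delta$, which, combined with the free inequality $\ellone{q}{a}-\ellone{q}{a^\star}\le 0$, does \emph{not} imply the required $\ellone{q}{a}-\ellone{q}{a^\star}\le\varepsilon(a^\star)-\varepsilon(a)$. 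The paper instead computes $\tfrac12(\varepsilon(a^\star)-\varepsilon(a))$ coordinate-by-coordinate, using the three ordering facts together with the fact that the cutoff raises the two small coordinates by the \emph{same} amount $\delta/2$ (so that $q_2-v_2=q_3-v_3$ cancel against one another); this makes the telescoped sum nonnegative in both of its two cases. That symmetric-cancellation observation is the piece your sketch is missing.
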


\begin{proof}
    It follows directly from the definition of the mechanism that it is anonymous and neutral. Now, assume there is a truthfulness violation for some voter. By anonymity we can assume that this voter is voter $1$, i.e., there is a profile $P = (p_1, \dots, p_n)$ with aggregate $a=\V(P)$ and a profile $P^* = (p_1^*, p_2, \dots, p_n)$ that yields voter 1 a preferred result $a^* = \V(P)$. Let $p_1' = c_\tau(p_1)$. We claim that we can assume the following without loss of generality:
    \begin{enumerate}[label=(\roman*)]
        \item $p_{1,1} \ge p_{1,2} \ge p_{1,3}$
        \item $p_{1,1} > 0.8$
        \item $p_{i,j} \le 0.8$ for all $2 \le i \le n$ and $j \in [3]$ \label{nomapping}
        \item $p^*_{1,j} \le 0.8$ for all $j \in [3]$ \label{nomapping2}
    \end{enumerate}

    To see why, first, note that if $\max_{j \in [3]}p_{1,j} \leq 0.8$, then the truthfulness follows directly from the truthfulness of \textsc{GreedyMin} for the profile $P'$. By neutrality we can fix the order of the three alternatives with respect to voter $1$ arbitrarily. Moreover, for the first voter it is irrelevant whether the remaining voters vote like in $P$ and then get mapped to their corresponding votes in $P'$ or whether they directly vote like in $P'$, thus, we can assume \ref{nomapping}. Due to a similar argument, we can assume in \ref{nomapping2} that voter $1$ does not misreport a $0.8$-extreme vote for any alternative but instead reports a vote from the image of the mapping.

    Since $p_{1,1} > 0.8$, we have $p_{1,1}' = 0.8$ and $p_{1,2}' - p_{1,2} = p_{1,3}' - p_{1,3} > 0$. We also know that $p_{1,2}', p_{1,3}' < 0.2$.
    
    We first show that $a_1^* \le a_1$:
    
    \begin{CaseTree}
        \item $\min_{i \in [n]} p_{i,1} \le \frac{1}{3}$.
            Then on each coordinate there is a vote lower or equal than $\frac{1}{3}$ and thus by \cref{lem:diag_line_observations:smaller_than_min_voter}, $a_j \le \frac{1}{3}$ for all $j \in [m]$ and therefore $a = (\frac{1}{3}, \frac{1}{3}, \frac{1}{3})$. Also by \cref{lem:diag_line_observations:smaller_than_min_voter}, we get that $a_1^* \le \frac{1}{3} = a_1$. This is because $\argmin_{i \in [n]} p_{i,1} \neq 1$ (since $p_{1,1}>0.8$) and therefore $\min\{p^*_{1,1}, p_{2,1}, p_{3,1}\} \le \frac{1}{3}$.
        
        \item $\min_{i \in [n]} p_{i,1} > \frac{1}{3}$.
            Since the lower phantoms can only move up to $\frac{1}{3}$ before reaching normalization by \Cref{itemNormalization}, we have $a_1 = \min_{i \in [n]} p_{i,1}$. By \cref{lem:diag_line_observations:smaller_than_min_voter} we know that $a_1^* \le \max \{ \frac{1}{m}, \min\{p^*_{1,1}, p_{2,1}, p_{3,1}\} \} \le \max \{ \frac{1}{m}, \min_{i \in [n]} p_{i,1} \}  = \min_{i \in [n]} p_{i,1} = a_1$.\\
            
    \end{CaseTree}

    {

    Hence, we know that \begin{equation}
        a^*_1 \leq a_1 \leq p'_{1,1} \leq p_{1,1}. \label{firstproject}
    \end{equation}
     From \Cref{lem:diag_line_observations:bigger_than_min_coordinate}, we also know that
    \begin{equation}
        p_{1,3} < p_{1,3}' \leq a_3. \label{thirsproject}
    \end{equation}
    Moreover, we assume for contradiction that $\ellone{a^*}{p_1} - \ellone{a}{p_1} < 0$, which we claim to imply 
    \begin{equation}
        a_2 < p_{1,2} < p'_{1,2}. \label{secondproject}
    \end{equation}
    Namely, by \Cref{lem:twocoordinatesmove}, $a^*$ has to move ``towards'' voter $1$ on at least two coordinates. Moreover, since $a^*$ moves downwards on alternative $1$, i.e., $a^*_1 \leq a_1$, there has to be a beneficial upwards move on alternative $2$ or $3$. Since an upwards move on alternative $3$ is not beneficial for voter $1$ by \Cref{thirsproject}, there has to be a beneficial upwards movement on alternative $2$. Thus, $a_2 < p_2 < p'_2$.

    Now, consider 
    \begin{align*}
        &\frac{1}{2}\Big(\ellone{a^*}{p_1} - \ellone{a}{p_1} - \big( \ellone{a^*}{p'_1} - \ellone{a}{p'_1}\big)\Big) \\ 
        & = \sum_{j \in [3]} \Big(\min(a_j,p_{1,j}) - \min(a_j,p'_{1,j}) + \min(a^*_j,p'_{1,j}) - \min(a^*_j,p_{1,j}) \Big) \tag{\Cref{footnoteLemma}}\\ 
        & = \min(a^*_2,p'_{1,2}) - \min(a^*_2,p_{1,2}) + p_{1,3} - p'_{1,3} + \min(a^*_3,p'_{1,3}) - \min(a^*_3,p_{1,3}) \tag{\Cref{firstproject,secondproject,thirsproject}}\\ 
        \intertext{Case 1: $a_2^* \leq p'_{1,2}$. Then, $a^*_3\geq p'_{1,3}> p_{1,3}$ by normalization of $a^*$ and $p'_1$ and \Cref{firstproject}.}
        & = \min(a^*_2,p'_{1,2}) - \min(a_2^*,p_{1,2})) \geq 0
        \intertext{Case 2: $a_2^* > p'_{1,2}$.}
        & = p'_{1,2} - p_{1,2} + p_{1,3} - p'_{1,3} + \min(a^*_3,p'_{1,3}) - \min(a^*_3,p_{1,3}) \\ 
        & = \min(a^*_3,p'_{1,3}) - \min(a^*_3,p_{1,3}) \geq 0. 
    \end{align*}

    Since $\big(\ellone{a^*}{p'_1} - \ellone{a}{p'_1}\big) \geq 0$ this yields a contradiction to our assumption that $\big(\ellone{a^*}{p_1} - \ellone{a}{p_1}\big) < 0$.}
\end{proof}

\subsection{A Truthful and Unanimous Non-Phantom Mechanism} \label{app:unamvotecutgreedy}
The mechanism \textsc{VoteCutGreedyMax} is notably not unanimous since all voters reporting more than $0.8$ for some alternative will lead to this alternative receiving only $0.8$. In the following, we show how we can adapt the cutoff function to achieve unanimity while maintaining truthfulness. The key idea is to apply the cutoff function for some $0.8$-extreme voter only in situations where at least one other voters reports a low value (namely, below $0.7$) on that alternative. If the other voter votes above 0.8, we do not apply the cutoff. If the voter votes between 0.7 and 0.8, we linearly increase the factor of the cutoff from 0 to 1.

We define the new cutoff function $\Tilde{c}$ formally as follows. We assume from now on that $n=2$. Thus, we only have two voters and define $P = (v, w)$. 

Let $k \in \argmax_{j \in [m]}(v_j)$. Let $\gamma(v,w) = (v_1 - 0.8) \max(0, \min(\frac{0.8-w_1}{0.8-0.7}, 1))$. 
Then  
\[
    \Tilde{c}(v,w)_j = 
    \begin{cases} 
        v_j & \text{if } v_k \le 0.8, \\
        v_j-\gamma(v,w) & \text{if } v_k > 0.8 \text{ and } j = k, \\
        v_j + \frac{\gamma(v,w)}{2} & \text{if } v_k > 0.8 \text{ and } j \neq k. \\
    \end{cases}
\]

We define the mechanism \textsc{UnanimousVoteCutGreedyMin} as $\mathcal{U}(P) = \A^{\F}(P')$, where $\F$ is the phantom system of \textsc{GreedyMin} and $P'=(\Tilde{c}(v,w),\Tilde{c}(w,v))$.

It follows directly from the definition that $\mathcal{U}$ is unanimous. Namely, consider a profile $P=(v,v)$. Then, $P'=(\Tilde{c}(v,w), \Tilde{c}(w,v)) = (v,v) = P$, moreover, $\A^{\F}(v,v) = v$. 

Before showing that $\mathcal{U}$ is in fact truthful and not a moving-phantom mechanism, we will show several helpful Lemmas.

\begin{lemma} \label{lem:moving_other_voter_away_plus}
    For $v, w,w' \in \Delta^{(2)}$ with $w' = (w_1+\delta, w_2-\frac{\delta}{2}, w_3-\frac{\delta}{2})$ for some $\delta > 0$, if $w_1 \ge 0.8$ and $w_1 \ge v_1$ this implies $$\ellone{v}{\A^\F(v,w)} \le \ellone{v}{\A^\F(v,w')}.$$
\end{lemma}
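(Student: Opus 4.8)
Throughout, $\A^\F$ denotes \textsc{GreedyMin} and we are in the case $n=2$, $m=3$. The plan is to first describe the output of \textsc{GreedyMin} on a two-voter profile explicitly as a ``water-filling'' map, and then show that replacing $w$ by $w'$ can only move the aggregate away from $v$, via a one-parameter monotonicity argument.

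First I would derive the water-filling description. Given a two-voter profile $(x,y)$, set $q_j:=\min(x_j,y_j)$. Computing the medians of the phantom system $\{f_0,f_1,f_2\}$ (with $f_0(t)=\min(2t,1)$ and $f_1=f_2=\max(0,2t-1)$): in the first phase $t\in[0,\tfrac12]$ the median on coordinate $j$ is $\min(2t,q_j)$, whose sum at $t=\tfrac12$ is $\sum_j q_j\le 1$, so a time of normalization lies in the second phase $t\in[\tfrac12,1]$, where the median on coordinate $j$ is $\max(2t-1,q_j)$. Hence $\A^\F(x,y)=h(q)$ where $h(q)_j=\max(s(q),q_j)$ and $s(q)\ge 0$ is the smallest value with $\sum_j\max(s(q),q_j)=1$; in particular $h(q)_j\ge q_j$, and $s(\cdot)$ and $h(\cdot)$ are continuous and piecewise-linear.

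Now set $q_j=\min(v_j,w_j)$ and $q'_j=\min(v_j,w'_j)$. From $w_1\ge v_1$ and $w'_1=w_1+\delta\ge v_1$ we get $q'_1=q_1=v_1$; from $w'_j=w_j-\tfrac{\delta}{2}\le w_j$ for $j\in\{2,3\}$ we get $q'_j\le q_j$; and $q_j,q'_j\le v_j$ for all $j$ (the hypothesis $w_1\ge 0.8$ additionally gives $q_2+q_3\le 0.2$, which keeps everything strictly inside the simplex but is not otherwise used). Interpolating $q(\lambda)=(1-\lambda)q+\lambda q'$, $\lambda\in[0,1]$, we have $q(\lambda)_1=v_1$, each $q(\lambda)_j$ non-increasing in $\lambda$, and $q(\lambda)_j\le v_j$ throughout. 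Let $D(\lambda)=\ellone{v}{h(q(\lambda))}=2-2\sum_j\min\big(v_j,h(q(\lambda))_j\big)$, a continuous piecewise-linear function; it suffices to show $D'(\lambda)\ge 0$ wherever defined. At such $\lambda$, put $A=\{j:q(\lambda)_j>s(\lambda)\}$ and $B=\{j:q(\lambda)_j<s(\lambda)\}$. For $j\in A$ we have $h(q(\lambda))_j=q(\lambda)_j\le v_j$, so $\tfrac{d}{d\lambda}\min(v_j,h_j)=\dot q(\lambda)_j\le 0$; for $j\in B$ we have $h_j=s(\lambda)$, so $\tfrac{d}{d\lambda}\min(v_j,h_j)$ equals $\dot s(\lambda)$ if $s(\lambda)<v_j$ and $0$ otherwise; and differentiating $\sum_j\max(s(\lambda),q(\lambda)_j)=1$ gives $\dot s(\lambda)=-\tfrac1{|B|}\sum_{j\in A}\dot q(\lambda)_j\ge 0$ when $|B|\ge 1$. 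Therefore
\[
D'(\lambda)=-2\sum_{j\in A}\dot q(\lambda)_j\Big(1-\frac{|\{j\in B:\,s(\lambda)<v_j\}|}{|B|}\Big)\;\ge\;0,
\]
as $\sum_{j\in A}\dot q(\lambda)_j\le 0$ and $|\{j\in B:s(\lambda)<v_j\}|\le|B|$; and when $|B|=0$ all coordinates are above water, forcing $q(\lambda)=v$ (since $\sum_j q(\lambda)_j=1=\sum_j v_j$, $q(\lambda)\le v$ coordinatewise, $q(\lambda)_1=v_1$), so $D(\lambda)=0$ is a minimum there. Hence $D$ is non-decreasing and $D(0)\le D(1)$, which is exactly $\ellone{v}{\A^\F(v,w)}\le\ellone{v}{\A^\F(v,w')}$.

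The step I expect to be most delicate is the bookkeeping at the non-smooth parameters (coordinates crossing the water level) and the degenerate branch $|B|=0$. A more elementary, case-heavier alternative is to reach $q'$ from $q$ by lowering coordinates $2$ and $3$ one at a time, splitting each such lowering at the moment the coordinate crosses the water level: lowering a coordinate that is already at or below the water level does not change $h$ at all, while on the remaining portion $h(\cdot)_1$ is non-decreasing and stays $\ge v_1$ and the lowered coordinate $h(\cdot)_j$ is non-increasing and stays $\le v_j$, so \Cref{lem:twocoordinatesmove} applied to coordinates $1$ and $j$ gives the required monotonicity.
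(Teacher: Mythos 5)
Your proof is correct, and it takes a genuinely different route from the paper's. The paper proves this lemma by a hand-rolled case analysis on the structure of the \textsc{GreedyMin} aggregate: first splitting on $v_1\le\frac13$ versus $v_1>\frac13$, then (after fixing $a_1=a'_1=v_1$ and WLOG $v_2\ge v_3$) on $v_2\le w_2$ versus $v_2>w_2$, and finally on $w_2\gtrless\frac{1-v_1}{2}$; in each leaf it computes the aggregate explicitly, invoking \Cref{lem:diag_line_observations} and \Cref{lem:twocoordinatesmove} as needed. You instead extract a clean ``water-filling'' description of \textsc{GreedyMin} on two-voter profiles, $\A^\F(x,y)_j=\max(s,\min(x_j,y_j))$ with $s$ the water level, then linearly interpolate the pointwise minimum $q(\lambda)$ from $\min(v,w)$ to $\min(v,w')$ and verify that the piecewise-linear distance $D(\lambda)$ has nonnegative slope on every smooth piece (with $q(\lambda)_1=v_1$ fixed, $q(\lambda)$ coordinatewise non-increasing, and $q(\lambda)\le v$ throughout). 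I checked the slope computation, the treatment of coordinates below the water level, and the degenerate branch $|B|=0$; they all hold up, and the breakpoints are finitely many so the pointwise slope bound does give monotonicity. Two things worth flagging. First, as you note, your argument never uses $w_1\ge 0.8$ — only $w_1\ge v_1$ and the fact that $w,w'$ lie in the simplex — so it actually establishes a strictly stronger statement than the lemma; that is fine, but it is worth being aware that the paper states the extra hypothesis because its companion \Cref{lem:moving_other_voter_away_minus} does need it, and the two are stated symmetrically. Second, your sketched ``more elementary'' alternative (lowering coordinates $2$ and $3$ one at a time, splitting at the water-level crossing, and invoking \Cref{lem:twocoordinatesmove} on coordinates $1$ and $j$) is essentially the same ingredient the paper uses in its Case~2.2.1, so that variant lands quite close to the paper's proof; your primary interpolation argument is the one that genuinely differs, and it is arguably the more illuminating of the two, since it isolates the water-filling structure rather than enumerating its cases.
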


\begin{proof}
    Let $a = \A^\F(v,w)$ and $a' = \A^\F(v,w')$.
    
    \begin{CaseTree}
        \item $v_1 \le \frac{1}{3}$\\
        Then the minima on all coordinates are below $\frac{1}{3}$ for both profiles and thus by \Cref{lem:diag_line_observations:smaller_than_min_voter} $a = a' = (\frac{1}{3}, \frac{1}{3}, \frac{1}{3})$.
        \item $v_1 > \frac{1}{3}$\\
        Then by \Cref{lem:diag_line_observations:smaller_than_min_voter,lem:diag_line_observations:bigger_than_min_voter}, we get that $a_1 = v_1 = a_1'$. Without loss of generality, assume $v_2 \ge v_3$. Then $a_3, a_3' \ge v_3$ by \Cref{lem:diag_line_observations:bigger_than_min_coordinate}.
        \begin{CaseTree}
            \item $v_2 \le w_2$\\
            Then for the profile $(v,w)$ normalization is reached, when the lower phantoms reach position $v_3$, since the medians on the second and third coordinate evaluate to $\med(v_2, w_2, 1, v_3, v_3) = v_2$, $\med(v_3, w_3, 1, v_3, v_3) = v_3$ and $v_1 + v_2 + v_3 = 1$. Then $a = v$ and the $\ell_1$-distance to $a'$ cannot be smaller than 0. 
            
            \item $v_2 > w_2$
            \begin{CaseTree}
                \item $w_2 > \frac{1-v_1}{2}$\\
                Then, for the profile $(v,w)$ normalization is reached, when the lower phantoms reach position $1-v_1-w_2$. Hence, $a = (v_1, w_2, 1-v_1 - w_2)$. We claim that $a'_2 \leq w_2 = a_2$. To see why, note that $a'_2 = \max\{\frac{1-v_1}{2},w'_2\} \leq \max\{\frac{1-v_1}{2},w_2\} = w_2$. However, we know by \Cref{lem:twocoordinatesmove} and $a'_1=a_1$ that, in order for $\ellone{v}{a'} < \ellone{v}{a}$, $a'$ has to move towards $v$ (compared to $a$) for at least two alternatives. Thus, it has to hold that $a'_2 > a_2$, a contradiction. 
                \item $w_2 \leq \frac{1-v_1}{2}$\\
                Then, for the profile $(v,w)$ normalization is reached, when the lower phantoms reach position $\frac{1-v_1}{2}$ and $a = (v_1, \frac{1-v_1}{2}, \frac{1-v_1}{2})$. This is because $\min\{v_2,w_2\} \leq \frac{1-v_1}{2}$ and $\min\{v_3,w_3\} \leq \frac{1-v_1}{2}$. Note that these condition also holds for $w'$, and thus $a'=a$.
            \end{CaseTree}
        \end{CaseTree}
    \end{CaseTree}
    This concludes the proof. 
\end{proof}

\begin{lemma} \label{lem:moving_other_voter_away_minus}
    For $v, w,w' \in \Delta^{(2)}$ with $w' = (w_1-\delta, w_2+\frac{\delta}{2}, w_3+\frac{\delta}{2})$ for some $\delta \geq 0$, if $w_1 \ge 0.8$ and $w_1 \le v_1$, this implies $$\ellone{v}{\A^\F(v,w)} \le \ellone{v}{\A^\F(v,w')}.$$
\end{lemma}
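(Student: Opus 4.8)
The plan is to apply \Cref{lem:twocoordinatesmove} to the triple $(a,a',v)$, where $a=\A^\F(v,w)$ and $a'=\A^\F(v,w')$, by exhibiting two coordinates on which $a'$, $a$ and $v$ appear in monotone order. We may assume $\delta>0$, since otherwise $w'=w$. As $w_1\ge 0.8>\tfrac13$ and $w_1\le v_1$, coordinate $1$ is the unique largest coordinate of both $v$ and $w$, and $\min(v_1,w_1)=w_1$; hence by parts (ii)--(iii) of \Cref{lem:diag_line_observations} we get $a_1=w_1$, and similarly $w_1-\delta\le a_1'\le\max(\tfrac13,w_1-\delta)$, so in particular $a_1'\le w_1=a_1\le v_1$. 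Thus coordinate $1$ already has the required form. Since $a$ and $a'$ both sum to $1$ and $a_1'\le a_1$, we also obtain the mass inequality $a_2'+a_3'\ge a_2+a_3$. Finally, relabelling coordinates $2$ and $3$ if necessary, we may assume $v_2\ge v_3$, so that $3\in\argmin_j v_j$ and part (iv) of \Cref{lem:diag_line_observations} gives $a_3\ge v_3$ and $a_3'\ge v_3$.

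It remains to show that coordinate $2$ or coordinate $3$ is also in the required form, and I would argue by contradiction, assuming both fail. If $a_j=v_j$ for some $j\in\{2,3\}$ then that coordinate is trivially in the required form; hence $a_2\ne v_2$ and $a_3\ne v_3$, and together with $a_3\ge v_3$ this gives $a_3>v_3$, so the failure at coordinate $3$ forces $a_3'<a_3$. At coordinate $2$: if $a_2>v_2$ then its failure forces $a_2'<a_2$, and then $a_2'+a_3'<a_2+a_3$ contradicts the mass inequality; so $a_2<v_2$, and then its failure forces $a_2'>a_2$. By part (iii) of \Cref{lem:diag_line_observations}, $a_2\ge\min(v_2,w_2)$, so $a_2<v_2$ forces $w_2<v_2$ and $\min(v_2,w_2)=w_2$. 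Summarising, the only remaining scenario is $w_2<v_2$, $a_2<v_2$, $a_3>v_3$, $a_2'>a_2$, $a_3'<a_3$.

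To refute this scenario I would use the greedy description of \textsc{GreedyMin}. Since $w_1\ge 0.8$, coordinate $1$ has the strictly largest minimum and is processed first, receiving exactly $w_1$ and leaving a budget $B:=1-w_1$ to be split between coordinates $2$ and $3$, whose minima are $w_2$ (using $w_2<v_2$) and $\min(v_3,w_3)\le v_3$. Inspecting the two branches of the greedy step shows that $(a_2,a_3)$ equals either $(w_2,\,B-w_2)$ with $w_2\ge B/2$, or $(B/2,\,B/2)$ with $w_2\le B/2$; in both cases $a_3\le B/2$ and $B-w_2\ge a_3$. Now consider the profile $(v,w')$. If $w_1-\delta\le\tfrac13$, then all three minima of $(v,w')$ are at most $\tfrac13$, so $a'=(\tfrac13,\tfrac13,\tfrac13)$ and $a_3'=\tfrac13>B/2\ge a_3$, contradicting $a_3'<a_3$. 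Otherwise $a_1'=w_1-\delta$, and the budget left for coordinates $2,3$ is $B+\delta$; since $a_3'<a_3\le B/2\le (B+\delta)/2$, coordinate $3$ receives strictly less than half of $B+\delta$, so coordinate $2$ is the larger-minimum coordinate and receives $a_2'=\min(v_2,\,w_2+\delta/2)\le w_2+\delta/2$, whence $a_3'=(B+\delta)-a_2'\ge B-w_2+\delta/2>B-w_2\ge a_3$, again contradicting $a_3'<a_3$. This rules out the remaining scenario, so coordinate $2$ or $3$ is in the required form and \Cref{lem:twocoordinatesmove} yields $\ellone{v}{a}\le\ellone{v}{a'}$.

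The main obstacle is the last step: reading off the exact values of $a$ and $a'$ from the greedy procedure and matching them up, in particular handling uniformly the regime where $w_1-\delta$ drops below $\tfrac13$ (so coordinate $1$ is no longer dominant) and the tie-breaking between coordinates $2$ and $3$. The monotonicity facts of \Cref{lem:diag_line_observations} together with the identity $\sum_j a_j=\sum_j a_j'=1$ carry most of the bookkeeping and keep this case analysis short; the argument parallels the proof of \Cref{lem:moving_other_voter_away_plus}.
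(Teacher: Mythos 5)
Your proof is correct, but it takes a genuinely different route from the paper's. The paper first disposes of the case $w_1'\le\tfrac13$ with a direct distance bound (observing $\ellone{v}{a}\le|v_1-a_1|+0.4$ while $\ellone{v}{a'}\ge|v_1-a_1|+0.8-\tfrac13$), and then, for $w_1'>\tfrac13$, proves the stronger claim that $a_j'\ge a_j$ for \emph{both} $j\in\{2,3\}$ via a short two-case computation on whether $\min(v_2,w_2)$ and $\min(v_2,w_2+\tfrac\delta2)$ are each the larger of the two remaining minima, before invoking \Cref{lem:twocoordinatesmove}. You instead apply \Cref{lem:twocoordinatesmove} directly, argue by contradiction that both of coordinates $2,3$ fail, reduce this to the scenario $w_2<v_2$, $a_2<v_2$, $a_3>v_3$, $a_2'>a_2$, $a_3'<a_3$, and refute it via the greedy description of \textsc{GreedyMin}. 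Your approach folds the $w_1'\le\tfrac13$ case inside the contradiction rather than special-casing it first. Both arguments rest on the same structural facts about \textsc{GreedyMin}; the paper's version produces the cleaner intermediate claim $a_j'\ge a_j$ for $j\in\{2,3\}$, while yours isolates a single bad scenario and uses the greedy interpretation more explicitly.

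One small gap worth noting: your assertion that ``$(a_2,a_3)$ equals either $(w_2,B-w_2)$ with $w_2\ge B/2$, or $(B/2,B/2)$ with $w_2\le B/2$'' tacitly assumes coordinate $2$ is processed before coordinate $3$. There is a third greedy branch, $(a_2,a_3)=\bigl(B-\min(v_3,w_3),\,\min(v_3,w_3)\bigr)$, arising when $\min(v_3,w_3)>\max\bigl(w_2,B/2\bigr)$; in that branch $a_3=\min(v_3,w_3)\le v_3$, which does contradict the scenario's $a_3>v_3$, so the conclusion stands --- but this exclusion should be stated, since it is what makes your two-branch enumeration exhaustive in the bad scenario. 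Similarly, your phrase ``coordinate $2$ is the larger-minimum coordinate'' would be more precisely justified by noting $a_2'>(B+\delta)/2>a_3'\ge s'$, forcing $a_2'=\min(v_2,w_2+\tfrac\delta2)$, where $s'$ is the lower phantoms' position at normalization for $(v,w')$.
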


\begin{proof}
Let $a = \A^\F(v,w)$ and $a' = \A^\F(v,w')$. Note that $a_1 = w_1$. 

    First consider the case when $w'_1 \leq \frac{1}{3}$. Then, $a'_1 \leq \frac{1}{3}$ and 
    $$\ellone{v}{a'} \geq |v_1 - a'_1| \geq |v_1 - a_1| + 0.8 - \frac{1}{3} $$ 
    and $$\ellone{v}{a} = |v_1 - a_1| + |v_2 - a_2| + |v_3 - a_3| \leq |v_1 - a_1| + 0.4,$$ 
    where the last inequality follows from the fact that $v_2 + v_3 \leq 0.2$ and $a_2 + a_3 \leq 0.2$.

    Hence, we can assume that $w'_1 \geq \frac{1}{3}$ and therefore $a'_1 = w'_1$. We claim that $a'_j \geq a_j$ for $j \in \{2,3\}$. Then, since $v_j \leq a_j \leq a'_j$ for some $j \in \{2,3\}$ and $v_1 \geq a_1 > a'_1$, the lemma statement follows by \Cref{lem:twocoordinatesmove}. 

    To prove the claim, we can consider the following cases wlog.

    \begin{CaseTree}
        \item $\min(v_2,w_2) \geq \min(v_3,w_3)$ and $\min(v_2,w_2 + \frac{\delta}{2}) \leq \min(v_3,w_3+ \frac{\delta}{2})$

        This implies that $\min(v_3,w_3) \leq a_3 \leq \frac{1-w_1}{2}$ and $\min(v_2,w_2+\frac{\delta}{2}) \leq a'_2 \leq \frac{1-w_1}{2} +\frac{\delta}{2}$. Thus, $$a'_3 = \max\Big(\min(v_3,w_3 + \frac{\delta}{2}),\frac{1-w_1}{2} + \frac{\delta}{2}\Big) = \frac{1-w_1}{2} + \frac{\delta}{2} > \frac{1-w_1}{2} \geq a_3.$$

        Due to normalization of $a$ and $a_1=w_1$ this implies that $$a'_2 = \frac{1-w_1}{2} + \frac{\delta}{2} \geq \max\Big(\min(v_2,w_2 + \frac{\delta}{2}), \frac{1-w_1}{2} + \frac{\delta}{2}\Big) \geq \max\Big(\min(v_2,w_2), \frac{1-w_1}{2}\Big) =a_2.$$

        \item $\min(v_2,w_2) \geq \min(v_3,w_3)$ and $\min(v_2,w_2 + \frac{\delta}{2}) \geq \min(v_3,w_3+ \frac{\delta}{2})$

        Then, 
        \begin{align*}
        a_2 &= \max\big(\min(v_2,w_2),\frac{1-w_1}{2} \big) \leq  \max\big(\min(v_2,w_2 + \frac{\delta}{2}),\frac{1-w_1}{2} + \frac{\delta}{2}\big)
        = a'_2 \\ &\leq \max\big(\min(v_2,w_2),\frac{1-w_1}{2} \big) + \frac{\delta}{2} = a_2 + \frac{\delta}{2}.
        \end{align*}
        By normalization of $a$ and $a_1 = w_1$ this implies that $a'_3 \in [a_3 + \frac{\delta}{2},a_3 + \delta]$. 
    \end{CaseTree}
    As discussed above, we can apply \Cref{lem:twocoordinatesmove} to conclude the proof of the lemma. 
\end{proof}

For the rest of this section, we assume without loss of generality that the first voter's vote is ordered decreasingly, i.e. $v_1 \ge v_2 \ge v_3$.

\begin{proof}[Proof of \Cref{thm:existence_truthful_non_and_unanimous_phantoms}]

We start by proving that the mechanism $\mathcal{U}$ is not a moving-phantom mechanism, continuous, anonymous, and neutral. 

\textbf{Non-phantom:} In the following, we show that $\mathcal{U}$ is not a moving-phantom mechanism.
Consider the profiles $P_1 = ((0.84, 0.16, 0.0), (0.7, 0.3, 0.0))$ and $P_2 = ((0.82, 0.18, 0.0), (0.7, 0.3, 0.0))$. Applying the cutoff function $\Tilde{c}$ to the profiles results in the profiles $P_1' = ((0.80, 0.18, 0.02), (0.7, 0.3, 0.0))$ and $P_2' = ((0.80, 0.19, 0.1), (0.7, 0.3, 0.0))$. Running \textsc{GreedyMin} on these profiles, gives us the aggregate of \textsc{UnanimousVoteCutGreedyMin} as $\mathcal{U}(P_1) = (0.7, 0.18, 0.12)$ and $\mathcal{U}(P_1) = (0.7, 0.19, 0.11)$.

Now, suppose there is a moving-phantom mechanism $\A^\F$ with the phantom system $F_2 = \{f_0,f_1,f_2\}$ with $\A^\F(P_1) = \mathcal{U}(P_1)$ and $\A^\F(P_2) = \mathcal{U}(P_2)$. At the time of normalization for both profiles, one phantom must be at a position higher or equal 0.7 for the median to be 0.7. Since in $P_1$ no vote is at positions 0.18 on the second alternative or at position 0.12 on the third alternative, we must have $f_1(t_1) = 0.18$ and $f_2(t_1) = 0.12$ at the time of normalization $t_1$ for profile $P_1$. Similarly, for profile $P_2$, we must have $f_1(t_2) = 0.19$ and $f_2(t_2) = 0.11$ at the time of normalization $t_2$. This contradicts the monotonicity of $f_1$ and $f_2$.

\textbf{Anonymity:} Follows directly from the anonymity of the cutoff functiom $\Tilde{c}$ and the anonymity of \textsc{GreedyMin}. 

\textbf{Neutrality:} Follows directly from the neutrality of the cutoff function $\Tilde{c}$ and the neutrality of \textsc{GreedyMin}. 

\textbf{Continuity:} One can easily verify that $\Tilde{c}$ is a continuous function and \textsc{GreedyMin} is continuous du to being a moving-phantom mechanism. Since \textsc{UnanimousVoteCutGreedyMin} is a composition of these two functions, continuity follows.

\textbf{Truthfulness:} Now, we turn to proving the truthfulness of the mechanism. To this end, we follow a case distinction on the votes $v$ and $w$, where $v$ takes the role of the manipulating voter. That is, suppose there is a vote $v^\star$, such that $\ellone{v}{\mathcal{U}(v,w)} > \ellone{v}{\mathcal{U}(v^\star,w)}$. 

We define $a = \mathcal{U}(v,w)$ and $a^\star = \mathcal{U}(v^\star,w)$. Recall that $a = \mathcal{U}(v,w) =\mathcal{A}^{\F}(\Tilde{c}(v,w),\Tilde{c}(w,v))$ and similarly $a^* = \mathcal{U}(v^*,w) = \mathcal{A}^{\F}(\Tilde{c}(v^*,w),\Tilde{c}(w,v^*))$. 

    \begin{CaseTree}
        \item $\max(w) < 0.8$
        \begin{CaseTree}
            \item $v_1 < 0.8$\\
            Then, no cutoff is applied, that is, $\Tilde{c}(v,w)=v$ and $\Tilde{c}(w,v) = w$ since all reports are below $0.8$. Thus, $a = \mathcal{U}(v,w) = \mathcal{A}^{\F}(v,w)$ and $a^\star=\mathcal{U}(v^\star,w) = \mathcal{A}^{\F}(v^\star,w)$. Hence, if $\ellone{a^\star}{v} < \ellone{a}{v}$, this would violate the truthfulness of \textsc{GreedyMin}, which follows from \textsc{GreedyMin} being a moving-phantom mechanism. 
            \item $v_1 \ge 0.8$
            \begin{CaseTree}
                \item $\max(w) \le 0.7$\\
                Then, the cutoff function $\Tilde{c}$ and $c_{\tau}$ for $\tau$ being the constant function $0.8$ coincide for this instance. Namely, $\Tilde{c}(w,v)=c_{\tau}(w)=w$ and $\Tilde{c}(v,w) = c_{\tau}(v)$. Thus, $a = \mathcal{V}(v,w) = \mathcal{A^{\F}}(c_\tau(v),c_\tau(v))$ and $a^\star=\mathcal{V}(c_\tau(v^\star),c_\tau(w))$. In particular, if $\ellone{a^*}{v}<\ellone{a}{v}$ this would violate the truthfulness of \textsc{VoteCutGreedyMin}, established in \Cref{prop:VoteCutGreedyMin}.

                \item $\max(w) > 0.7$
                \begin{CaseTree}
                    \item $w_1 \le 0.7$ \label{referenceCase} \\ This implies that $\min(v_i,w_i)\leq \frac{1}{3}$ for all $i \in [3]$ and therefore by \Cref{lem:diag_line_observations} that $a=(\frac{1}{3},\frac{1}{3},\frac{1}{3})$. We claim that $\ellone{v}{a^\star}<\ellone{v}{a}$ implies that $a^\star_1>\frac{1}{3}$. Since $v_1 \geq 0.8 \geq a_1$ this implies $a_j > 0.2 \geq v_j$ for both $j \in \{2,3\}$. Now, by \Cref{lem:twocoordinatesmove} we know that two alternatives have to ``move towards'' $v$. If alternative $1$ is among them, then $a^\star_1 > a_1$ follows. Otherwise, $a^\star_2<a_2$ and $a^\star_3<a_3$, which by normalization implies $a^\star_1 > a_1$. 
                    However, since $\min\{v^\star_1,w_1\} \leq 0.3 < \frac{1}{3}$ it follows from \Cref{lem:diag_line_observations:smaller_than_min_voter} that $a^\star_1 \leq \frac{1}{3}$. 
                    \item $w_1 > 0.7$\\
                    It will be sufficient to show that $\ellone{v}{\mathcal{U}(v,w)} = \ellone{v}{\A^\F(v,w)}$ for all $v \in \Delta^{(2)}$. Once we have established this, we can show truthfulness of $\mathcal{U}$ by contradiction. If a successful manipulation $v^\star$ exists, then $\ellone{v}{\A^\F(\Tilde{c}(v^\star,w),w)} = \ellone{v}{\mathcal{U}(v^\star,w)} < \ellone{v}{\mathcal{U}(v,w)} = \ellone{v}{\A^\F(v,w)}$, contradicting truthfulness of \textsc{GreedyMin}. Note that the first equality holds because $\Tilde{c}(w)=w$, since $\max (w)<0.8$. The remainder of the proof of this case is devoted to proving the sufficient condition. Clearly it holds for $v_1<0.8$, since in that case $\mathcal{U}(v,w)=\A^\F(v,w)$, so we focus on $v_1 \ge 0.8>w_1$, we extensively use the fact that $v_2+v_3 \le \Tilde{c}(v_2)+\Tilde{c}(v_3) = 1-\Tilde{c}(v_1) \le 0.2 < 1-w_1$. We treat several subcases:

                    \begin{CaseTree}

                    \item Suppose that $\frac{1-w_1}{2} \ge \Tilde{c}(v,w)_2 \ge \Tilde{c}(v,w)_3$. It is therefore also the case that $\frac{1-w_1}{2} \ge v_2 \ge v_3$. In both cases, normalization is reached when phantoms $f_1$ and $f_2$ reach $\frac{1-w_1}{2}$, and therefore $\mathcal{U}(v,w) = \A^\F(v,w) = (w_1, \frac{1-w_1}{2}, \frac{1-w_1}{2})$. 

                    \item Suppose that $v_2 \ge \frac{1-w_1}{2}$ and $w_2 \ge \frac{1-w_1}{2}$. This implies that $\Tilde{c}(v,w)_2 \ge \frac{1-w_1}{2}$, and therefore $\Tilde{c}(v,w)_3$, $v_3$, and $w_3$ are all less than $\frac{1-w_1}{2}$. Then we have $\A^\F(v,w)=(w_1, \min (v_2, w_2), x)$, for some $x$ with $\frac{1-w_1}{2} > x > v_3$, with normalization being reached when $f_1=f_2=x$. (Note that if $x \le v_3$ then the sum of entries of $v$ is more than the sum of entries of the aggregate, violating normalization.) By the same logic, $\mathcal{U}(v,w) = (w_1, \min (\Tilde{c}(v,w)_2, w_2), y)$ for some $y$ with $\frac{1-w_1}{2} > y > \Tilde{c}(v,w)_3 \ge v_3$. Note that $y \le x$ since $\Tilde{c}(v,w)_2 \ge v_2$, and $y<x$ only if $\min (v_2, w_2)< \min (\Tilde{c}(v,w)_2, w_2)$, which requires $v_2 < w_2$. In this case, $\mathcal{U}(v,w)_3$ is $(x-y)$ \emph{closer} to $v_3$ than $\A^\F(v,w)_3$ is, but $\mathcal{U}(v,w)_2$ is $(x-y)$ \emph{further} from $v_2$ than $\A^\F(v,w)_2$ is. Therefore, $\ellone{v}{\mathcal{U}(v,w)} = \ellone{v}{\A^\F(v,w)}$.

                    \item Suppose that $v_2 \ge \frac{1-w_1}{2}$ and $w_2 \le \frac{1-w_1}{2}$ (which implies that $v_3 < \frac{1-w_1}{2}$ and $w_3 \ge \frac{1-w_1}{2}$). Then it is also the case that $\Tilde{c}(v,w)_2 > \frac{1-w_1}{2}$ and $\Tilde{c}(v,w)_3 < \frac{1-w_1}{2}$. In both cases, normalization is reached when phantoms $f_1$ and $f_2$ reach $\frac{1-w_1}{2}$, and therefore $\mathcal{U}(v,w) = \A^\F(v,w) = (w_1, \frac{1-w_1}{2}, \frac{1-w_1}{2})$. 

                    \item Suppose that $\Tilde{c}(v,w)_2 > \frac{1-w_1}{2} > v_2$. This implies that $\frac{1-w_1}{2} > \Tilde{c}(v,w)_3 > v_3$. There are two subcases:

                    If $w_2 < \frac{1-w_1}{2}$ then, again, normalization is reached when phantoms $f_1$ and $f_2$ reach $\frac{1-w_1}{2}$, and therefore $\mathcal{U}(v,w) = \A^\F(v,w) = (w_1, \frac{1-w_1}{2}, \frac{1-w_1}{2})$.

                    If $w_2 > \frac{1-w_1}{2}$ then it is still the case that $\A^\F(v,w)=(w_1, \frac{1-w_1}{2}, \frac{1-w_1}{2})$. However, $\mathcal{U}(v,w)=(w_1, \min (\Tilde{c}(v,w)_2, w_2), z)$ for some $z$ with $\frac{1-w_1}{2} > z > \Tilde{c}(v,w)_3 > v_3$. (If $z \le \Tilde{c}(v,w)_3$ then coordinates of $\Tilde{c}(v,w)$ sum to strictly more than those of the aggregate, violating normalization.) 
                    In this case, $\mathcal{U}(v,w)_3$ is $(\frac{1-w_1}{2}-z)$ \emph{closer} to $v_3$ than $\A^\F(v,w)_3$ is, but $\mathcal{U}(v,w)_2$ is $(\frac{1-w_1}{2}-z)$ \emph{further} from $v_2$ than $\A^\F(v,w)_2$ is. Therefore, $\ellone{v}{\mathcal{U}(v,w)} = \ellone{v}{\A^\F(v,w)}$.
                \end{CaseTree}
                \end{CaseTree}
            \end{CaseTree}
        \end{CaseTree}
        \item $\max(w) \ge 0.8$, but $w_1 < 0.8$
        \begin{CaseTree}
            \item $v_1 > 0.8$ \\
            These are almost the same case assumptions like \ref{referenceCase}, with the only difference that $\max(w)>0.8$ instead of $\max(w) \in [0.7,0.8]$. Since \ref{referenceCase} only uses the fact that $\max(w) \geq 0.7$, we can apply the exact same argumentation. 
            
            \item \label{referenceCase2} $v_1 \leq 0.8$\\ 
            Wlog assume that $\argmax_{j \in [3]}(w) =2$. By the $v_1 \geq v_2 \geq v_3$, we know that $v_2 \leq \frac{1}{3}$ and therefore $\Tilde{c}(w,v) = c_\tau(w)$ for $\tau$ being the constant $0.8$ and $\Tilde{c}(v,w) = v$. Now, consider the profile $P' = (v, \Tilde{c}(w,v^\star))$, where we update the vote $w$ with the cutoff function as if $v$ has moved to $v^\star$, but we do not actually move $v$ yet. Note that $v$, $\Tilde{c}(w,v)$ and $\Tilde{c}(w,v^\star)$ satisfy the conditions of the three vectors in \Cref{lem:moving_other_voter_away_plus}. Hence, $\ellone{v}{\A^\F(v,\Tilde{c}(w,v))} \leq \ellone{v}{\A^\F(v,\Tilde{c}(w,v^\star))}$.
            Moreover, we can apply the truthfulness of \textsc{GreedyMin} and obtain $\ellone{v}{\A^\F(v,\Tilde{c}(w,v^*))} \leq \ellone{v}{\A^\F(v^*,\Tilde{c}(w,v^*))}$, which proves this case. 
        \end{CaseTree}
        
        \item $w_1 \ge 0.8$
        \begin{CaseTree}
            \item $v_1 \le 0.7$\\
            This case works similar to \ref{referenceCase2}. Since $0.7 \geq v_1 \geq \frac{1}{3}$ we know that $\Tilde{c}(v,w) = v$ and $\Tilde{c}(w,v) = c_\tau(w)$ for $\tau$ being the constant $0.8$. Now, we can follow the exact same argumentation as in \ref{referenceCase2}.
            \item $v_1 > 0.7$
            \begin{CaseTree}
                \item $w_1 \ge 0.9$ \\
                We show that $\mathcal{U}(v^\star,w) = \A^\F(v^\star,w)$ for all $v^\star \in \Delta^{(2)}$ and apply the truthfulness of $\A^\F$.

                If $v^\star_1 \ge 0.8$, then no cutoff is applied and the claim follows. If $v^\star_1 \le 0.7$, then the cutoff is fully applied, i.e., $\Tilde{c}(w,v^\star) = c_{\tau}(w)$ for constant $\tau$ being $0.8$. Since $w_1 \ge 0.9$, then $\Tilde{c}(w,v^\star)_j \le 0.15$ for $j \in \{ 2,3 \}$. Normalization is reached after the lower phantoms cross $0.15$, and the aggregate is $(v^\star_1, \frac{1-v^\star_1}{2}, \frac{1-v^\star_1}{2})$ (if $v^\star_1 \ge \frac{1}{3}$) or $(\frac{1}{3},\frac{1}{3},\frac{1}{3})$ (if $v^\star_1 < \frac{1}{3}$). Note that the same argument applies to the instance where we do not cut off $w$. Thus, $\mathcal{U}(P) = \A^\F(P)$.

                If $0.7 < v^\star_1 <0.8$ then write $v^\star_1=0.8-x$ for $0<x<0.1$. We have $\Tilde{c}(w,v^\star)_j \le 1-w_1+\frac{x}{0.2}(w_1-0.8) =1-(1-5x)w_1-4x \le 1-(1-5x) \cdot 0.9-4x = 0.1+\frac{x}{2}$ for $j \in \{ 2,3 \}$. Therefore normalization is achieved when the lower phantoms reach $0.1+\frac{x}{2}$ and the aggregate is $(v^{\star}_1, 0.1+\frac{x}{2}, 0.1+\frac{x}{2})$. The same argument applies to the non-cut off version since $w_j < \Tilde{c}(w,v^\star)_j$ for $j \in \{ 2,3 \}$. 
                
                \item $w_1 < 0.9$
                \begin{CaseTree}
                    \item $w_1 < v_1$
                    Since $w_1 \ge 0.8$, there is no cutoff applied for the profile $(v,w)$.
                    \begin{CaseTree}
                        \item $v_1^\star > v_1$\\
                        Then, $\Tilde{c}(v,w) = v,\Tilde{c}(v^\star,w) =v^\star,\Tilde{c}(w,v) = w$, and $\Tilde{c}(w,v^\star) = w$. Thus, $\mathcal{U}(P) =\A^\F(P)$ and $\mathcal{U}(P^\star) = \A^\F(P^\star)$. Thus, $\ellone{v}{a} \leq \ellone{v}{a^\star}$ follows from the truthfulness of \textsc{GreedyMin}. 
                        \item $v_1^\star \le v_1$ 
                        Consider the profile $(v, \Tilde{c}(w,v^\star))$, where we update the vote $w$ with the cutoff function as if $v$ has moved to $v^\star$, but we do not actually move $v$ yet. Now, note that $v,w$, and $\Tilde{c}(w,v^\star)$ satisfy the conditions of the three vectors in \Cref{lem:moving_other_voter_away_minus} and hence $\ellone{v}{\A^F(v,w)} \leq \ellone{v}{\A^\F(v,\Tilde{c}(w,v^\star))}$. Now, by the truthfulness of \textsc{GreedyMin} we get $\ellone{v}{\A^\F(v,\Tilde{c}(w,v^\star))} \leq \ellone{v}{\A^\F(\Tilde{c}(v^\star,w),\Tilde{c}(w,v^\star))}$ which proves the claim. 
                    \end{CaseTree}
                    
                    \item $w_1 \ge v_1$
                    \begin{CaseTree}
                        \item $v_1^\star > v_1$\\
                        Note that $\Tilde{c}(v,w)=v$ and $\Tilde{c}(v^\star,w) = v^\star$. This case is again similar to \ref{referenceCase2}. That is, consider the profile $(v, \Tilde{c}(w,v^\star))$, where we update the vote $w$ with the cutoff function as if $v$ has moved to $v^\star$, but we do not actually move $v$ yet. Then, the vectors $v,\Tilde{c}(w,v)$, and $\Tilde{c}(w,v^\star)$ satisfy the conditions of \Cref{lem:moving_other_voter_away_plus} and hence $\ellone{v}{\A^\F(v,\Tilde{c}(w,v))} \leq \ellone{v}{\A^\F(v,\Tilde{c}(w,v^\star))}$. Moreover, by the truthfulness of \textsc{GreedyMin} we know that $\ellone{v}{\A^\F(v,\Tilde{c}(w,v^\star))} \leq \ellone{v}{\A^\F(v^\star,\Tilde{c}(w,v^\star))}$, which proves the claim. 

                        \item $v_1^\star \le v_1$\\
                        We summarize the case assumptions: $w_1 \in [0.8,0.9)$, $v_1 \in (0.7,w_1]$, and $v^*_1 < v_1$. By \Cref{lem:diag_line_observations:bigger_than_min_voter,lem:diag_line_observations:smaller_than_min_voter} it follows directly that $a_1=v_1$.

                        We claim that if $a^*_1 \neq v_1^*$ holds truthfulness follows directly. To see why, note by \Cref{lem:diag_line_observations:bigger_than_min_voter,lem:diag_line_observations:smaller_than_min_voter} $a^*_1 = v^*_1$ unless $v^*_1 < \frac{1}{3}$. On the other hand, if $v_1^* < \frac{1}{3}$, then the outcome is $a^* = (\frac{1}{3},\frac{1}{3},\frac{1}{3})$. In this case, $$\ellone{a^*}{v} \geq v_1 - \frac{1}{3} + \frac{2}{3} - (1-v_1) \geq 0.4 + \frac{1}{3}.$$
                        However, $\ellone{a}{v} \leq 2 \cdot 0.3$ since $a_1 = v_1 \geq 0.7$. 

                        \bigskip 

                        Hence, we can assume $a^*_1 = v_1^*$ from now on. We define $\Tilde{w} = \Tilde{c}(w,v)$ and note that $\Tilde{c}(v,w)=v$. We claim that $a_2 \geq a_3$. This is because $\min\{v_3,\Tilde{w}_3\} \leq \frac{1-v_1}{2}$ by the order of $v$, and hence $a_3 \leq \max \{\frac{1-v_1}{2},\min\{v_3,\Tilde{w}_3\}\} = \frac{1-v_1}{2}$, which, by normalization of $a$ and $a_1=v_1$ implies that $a_3 \leq a_2$. We claim that this implies that $v_2 \geq a_2$. If $a_2 = a_3$, this follows from $v_2 \geq v_3$ and the fact that $a_2 + a_3 = v_2 + v_3$. Otherwise, if $a_2 > a_3$, we know that by the definition of the phantom rule and $a_2 = \min\{v_2,\Tilde{w}_2\}$.

                        \smallskip 

                        If $a_2 = v_2$, then $\ellone{a}{v} = 0 \leq \ellone{a^*}{v}$. Hence, we can assume that $a_2 = \max\{\frac{1-v_1}{2}, \min\{\Tilde{w}_2,v_2\}\} < v_2$ and as a direct consequence $a_3 > v_3$. 

                        \smallskip

                        Now, if $a^*_2<a_2$, then the conditions of \Cref{lem:twocoordinatesmove} are satisfied for alternatives $1$ and $2$, and hence $\ellone{a}{v} \leq \ellone{a^*}{v}$.

                        \smallskip 
                        Hence, assume that $a^*_2\geq a_2$. In the remainder of the proof, we show that $a^*_2 \leq a_2 + (v_1-v_1^*)/2$. By normalization (i.e., since $a^*_1 = a_1 - (v_1 - v^*_1)$) this implies that $a_3^* > a_3 > v_3$. Hence, the conditions for \Cref{lem:twocoordinatesmove} are satisfied for alternatives $1$ and $3$, and therefore $\ellone{a^*}{v}\geq \ellone{a}{v}$.

                        We define $\Tilde{w}^\star = \Tilde{c}(w,v^\star)$. To prove the claim, first note that, by definition of the phantom rule, it holds that $a^*_2 \leq \max\{(1-v^*_1)/2, \min\{\Tilde{w}^*_2,v^*_2\}\}$.
                        In the following we upper bound both of the terms within the maximum expression. For the first term, note that $$(1-v^*_1)/2 = (1-v_1)/2 +  (v_1 - v^*_1)/2.$$ 
                        For the second term, note that  
                        \begin{align*}
                        \gamma(w,v^\star) - \gamma(w,v) &= (w_1 - 0.8) \min \{\frac{0.8 - v^*_1}{0.1},1\} - (w_1 - 0.8) \min \{\frac{0.8 - v_1}{0.1},1\} \\
                        & \leq (w_1 - 0.8)\frac{(v_1 - v^*_1)}{0.1} \leq v_1 - v^*_1
                        \end{align*}
                        and therefore 
                        $$\Tilde{w}^*_2 = w_2 + \gamma(w,v^\star)/2 \leq w_2 + \gamma(w,v)/2 + (v_1 - v^*_1)/2 = \Tilde{w}_2 + (v_1-v_1^*)/2.$$ 

                        Therefore, we get that 
                        \begin{align*}
                        a^\star_2 &\leq \max\{(1-v^\star_1)/2, \min \{\Tilde{w}_2^\star,v_2^\star\}\} \\ 
                        &\leq \max\{(1-v^\star_1)/2, \Tilde{w}_2^\star\} \\ 
                        & \leq \max\{(1-v_1)/2 + (v_1-v_1^*)/2, \Tilde{w}_2 + (v_1-v_1^*)/2\} \\ 
                        & = \max\{(1-v_1)/2 , \Tilde{w}_2 \} + (v_1-v_1^*)/2 \\ 
                        & = a_2 + (v_1-v_1^*)/2, 
                        \end{align*}
                        which proves the claim and therefore the case. 
                    \end{CaseTree}
                \end{CaseTree}
            \end{CaseTree}
        \end{CaseTree}
    \end{CaseTree}
    \vspace*{-.9cm}
\end{proof}

\section{Missing Proofs of \Cref{sec:lowerBound}}

As stated in \Cref{sec:lowerBound}, the profile that we use to show \Cref{thm:truthful_uniform} is 
$P = (\p_1, \dots, \p_n)$ with 
\[
    \p_i = 
    \begin{cases}
        (\frac{1}{m}, \dots, \frac{1}{m}) & \text{ if } i \le \lceil\frac{n}{2}\rceil\\    
        (1, 0, \dots, 0) & \text{ if } i > \lceil\frac{n}{2}\rceil    
    \end{cases}
\]
and the proof makes use of the symmetry of this and similar profiles. More precisely, we consider profiles in which each vote lies on the line segment from $(1, 0, \dots, 0)$ to $(0, \frac{1}{m-1}, \dots, \frac{1}{m-1})$. For better readability, for a value $\alpha \in [0,1]$ we write the vote $(\alpha, \frac{1-\alpha}{m-1}, \dots, \frac{1-\alpha}{m-1})$ as $\symvote{\alpha}$. Then, any vote of such a profile can be written as $\symvote{\alpha_i}$ for some value $\alpha_i \in [0,1]$. Observe that for any values $x,y \in [0,1]$ we have $\ellone{p(x)}{p(y)} = 2|x-y|$. To simplify the notation for repeated votes in a profile, we write $\multivoter{\p}{k}$ for a series of $k$ voters with the same vote $\p$. We can then rewrite the profile $P$ as $P = (\multivoter{\symvote{\frac{1}{m}}}{\lceil\frac{n}{2}\rceil}, \multivoter{\symvote{1}}{\lfloor\frac{n}{2}\rfloor})$. Finally, it will be convenient to denote all values in between two values $x,y \in [0,1]$ as $\llbracket x,y \rrbracket = [x,y] \cup [y,x]$ and similarly, $\llbracket x,y \rrparenthesis = [x,y) \cup (y,x]$.

In preparation for proving \cref{thm:truthful_uniform}, we first make general observations on how CTAN mechanisms behave on the profiles described above. \Cref{lem:truthfulness_half_space_line:symmetry} states that, by neutrality, the aggregate must also be of the form $\symvote{\gamma}$ for some $\gamma \in [0,1]$. \Cref{lem:truthfulness_half_space_line:const_aggregate} considers situations in which voter 1 manipulates their vote so that the misreport $p^*_1$ still lies on the line segment, i.e., $p^*_1=p(\alpha_1')$ for some $\alpha_1' \neq \alpha_1$. We prove that if the voter does not `cross' the aggregate, then the aggregate cannot change due to truthfulness. Finally, \Cref{lem:truthfulness_half_space_line:min_max} shows that if the convex hull of the votes includes the center point of the simplex, then the aggregate must lie inside that convex hull as well. 

\begin{restatable}{lemma}{lemHalfSpace} \label{lem:truthfulness_half_space_line}
    \crefalias{enumi}{lemmaenumi}%
    \setlist[enumerate,1]{
        label={\textit{(\roman*)}},
        ref={\thelemma(\roman*)}
    }
    For any truthful, neutral, and continuous mechanism $\A$ and any $\alpha_1, \dots, \alpha_n \in [0,1]$ it holds that
    \begin{enumerate}
        \item $\A(\symvote{\alpha_1}, \dots, \symvote{\alpha_n}) = \symvote{\gamma}$, for some $\gamma \in [0,1]$.\label{lem:truthfulness_half_space_line:symmetry}
    \end{enumerate}
In the following, let $\gamma \in [0,1]$ be such that $\A(\symvote{\alpha_1}, \dots, \symvote{\alpha_n}) = \symvote{\gamma}$.
    \begin{enumerate}[resume]
        \item For any $i \in [n]$ and $\alpha_i^\star \in [0,1]$ with $\gamma \notin \llbracket \alpha_i, \alpha_i^\star\rrparenthesis$ it holds that $$\A(\symvote{\alpha_1}, \dots, \symvote{\alpha_{i-1}}, \symvote{\alpha_i^\star}, \symvote{\alpha_{i+1}},  \dots, \symvote{\alpha_n}) = \symvote{\gamma}.\label{lem:truthfulness_half_space_line:const_aggregate}$$
        \item If $\min_{i \in [n]} \alpha_i \le \frac{1}{m} \le \max_{i \in [n]} \alpha_i$, then $\min_{i \in [n]} \alpha_i \le \gamma \le \max_{i \in [n]} \alpha_i$.\label{lem:truthfulness_half_space_line:min_max}
    \end{enumerate}
\end{restatable}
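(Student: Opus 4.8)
The plan is to prove the three parts in order, using (i) and (ii) as ingredients for the later parts. Part (i) is a pure symmetry argument: since $\symvote{\alpha_i}$ has all of its last $m-1$ coordinates equal to $\tfrac{1-\alpha_i}{m-1}$, the profile $(\symvote{\alpha_1},\dots,\symvote{\alpha_n})$ is fixed by every permutation $\sigma$ of the alternatives with $\sigma(1)=1$. Neutrality then forces the aggregate $a=(a_1,\dots,a_m)$ to equal $(a_{\sigma(1)},\dots,a_{\sigma(m)})$ for all such $\sigma$, so $a_2=\dots=a_m$; together with $\sum_j a_j=1$ this gives $a=\symvote{a_1}$, and we set $\gamma:=a_1$. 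Only neutrality is needed here.

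For part (ii), fix $i$ and the other reports, and use part (i) to define the response function $\gamma_i\colon[0,1]\to[0,1]$ by declaring $\symvote{\gamma_i(\beta)}$ to be the aggregate when voter $i$ reports $\symvote{\beta}$; it is continuous because $\A$ is, and its range is a closed interval $[r_1,r_2]$ (continuous image of $[0,1]$). Since $\ellone{\symvote{x}}{\symvote{y}}=2|x-y|$, truthfulness says exactly that for every $\beta$ the value $\beta$ minimizes $|\gamma_i(\beta')-\beta|$ over $\beta'$. The key claim is that any continuous $\gamma_i$ with this property is the clamp $\gamma_i(\beta)=\max(r_1,\min(\beta,r_2))$. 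To prove it, observe that if $\gamma_i(\beta_0)>\beta_0$ then, writing $c=\gamma_i(\beta_0)$, truthfulness at $\beta_0$ forces every value of $\gamma_i$ to lie outside the open interval $(2\beta_0-c,\,c)$; as the range is an interval containing $c$, it must lie in $[c,1]$, i.e.\ $c=r_1$. Symmetrically, $\gamma_i(\beta_0)<\beta_0$ implies $\gamma_i(\beta_0)=r_2$. These two facts immediately give $\gamma_i(\beta)=\beta$ for $\beta$ in the interior of $[r_1,r_2]$ (neither $\gamma_i(\beta)>\beta$ nor $\gamma_i(\beta)<\beta$ is possible there) and, with continuity, $\gamma_i(\beta)=r_1$ for $\beta\le r_1$ and $\gamma_i(\beta)=r_2$ for $\beta\ge r_2$, which is the clamp form. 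Finally, from the clamp form the hypothesis $\gamma\notin\llbracket\alpha_i,\alpha_i^\star\rrparenthesis$ --- recall this set contains $\alpha_i$ whenever $\alpha_i\neq\alpha_i^\star$ --- rules out $\alpha_i$ lying strictly inside $[r_1,r_2]$ unless $\alpha_i^\star=\alpha_i$, and in the case $\alpha_i\le r_1$ it forces $\alpha_i^\star\le r_1$ as well (the case $\alpha_i\ge r_2$ is symmetric), so $\gamma_i(\alpha_i^\star)=\gamma_i(\alpha_i)=\gamma$; a short case distinction closes this out.

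For part (iii), assume for contradiction that $\gamma\notin[m^-,m^+]$, where $m^-=\min_i\alpha_i\le\tfrac1m\le\max_i\alpha_i=m^+$. Change the reports to $\symvote{\tfrac1m}$ one voter at a time. At each step the current profile still consists of votes on the line and, by the induction hypothesis, has aggregate $\symvote{\gamma}$; the voter being changed reports $\symvote{\alpha_k}$ with $\alpha_k\in[m^-,m^+]$, and the target $\tfrac1m$ also lies in $[m^-,m^+]$, so $\llbracket\alpha_k,\tfrac1m\rrparenthesis\subseteq[m^-,m^+]$, which does not contain $\gamma$. Hence part (ii) applies and the aggregate is unchanged. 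After $n$ steps every vote equals the uniform distribution $\symvote{\tfrac1m}=(\tfrac1m,\dots,\tfrac1m)$, so by neutrality under the full symmetric group on $[m]$ the aggregate is also uniform, giving $\gamma=\tfrac1m\in[m^-,m^+]$ --- a contradiction. Therefore $m^-\le\gamma\le m^+$.

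The main obstacle is the characterization of the response function in part (ii): showing that truthfulness together with continuity pins $\gamma_i$ down to a clamp map, which is where the real work lies (and where connectedness of $[0,1]$ enters, via ``the continuous image is an interval''). Everything else --- the symmetry arguments in (i) and (iii), and the bookkeeping with the half-open intervals $\llbracket\cdot,\cdot\rrparenthesis$ --- is routine once this characterization is available.
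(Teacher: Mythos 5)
Your proof is correct, and parts \textit{(i)} and \textit{(iii)} track the paper's argument closely (neutrality-by-symmetry for \textit{(i)}; transform every vote to $p(1/m)$ step by step using \textit{(ii)} and invoke neutrality for \textit{(iii)}). Part \textit{(ii)} takes a genuinely different and, I would say, cleaner route. The paper fixes $\alpha_i<\gamma$, reduces to $\alpha_i^\star\in[\alpha_i,\gamma]$, and then does a three-way case split on where the perturbed aggregate $\gamma^\star$ lands, using truthfulness directly in the first two cases and the intermediate value theorem in the third ($\gamma^\star<\alpha_i$) to manufacture a fixed point and fall back to the second case. You instead prove a global structural fact: the one-dimensional response map $\beta\mapsto\gamma_i(\beta)$ is continuous, its range $[r_1,r_2]$ is an interval by connectedness, and truthfulness (restricted to on-line misreports) forces $\gamma_i(\beta_0)>\beta_0\Rightarrow\gamma_i(\beta_0)=r_1$ and symmetrically, which pins $\gamma_i$ down to the clamp $\beta\mapsto\max(r_1,\min(\beta,r_2))$; the conclusion of \textit{(ii)} then falls out of a short case check against the half-open interval $\llbracket\alpha_i,\alpha_i^\star\rrparenthesis$. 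Your version front-loads the continuity/connectedness argument into a reusable characterization and avoids the paper's somewhat delicate ``swap the roles of $\alpha_i$ and $\alpha_i^\star$'' WLOG step, at the cost of establishing more than is strictly needed; the paper's version is more direct but more ad hoc. Both are sound, and each uses truthfulness and continuity in an essential way. One small thing worth spelling out in your write-up: truthfulness constrains $\A$ against \emph{all} misreports in $\Delta^{(m-1)}$, but for the clamp argument you only invoke it against on-line misreports $p(\beta')$, which is a weaker (and therefore safe) necessary condition -- your argument never needs more.
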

\begin{proof}[Proofs]\leavevmode
    \begin{enumerate}[label=(\itshape\roman*)]
        \item The statement follows from neutrality. Since every vote of the form $p(\alpha_i)$ reports the same value for all alternatives in $\{2,\dots,m\}$, the aggregate of any neutral mechanism also has to assign one value to all alternatives in $\{2,\dots,m\}$. Any vote of this form can be expressed by $p(\gamma)$ with $\gamma \in [0,1]$.
        
        \item We prove the statement for the case when $\alpha_i < \gamma$. The other case can be shown analogously. We can additionally assume that $\alpha_i^\star\in [\alpha_i, \gamma]$, since otherwise we can swap the definitions of $\alpha_i$ and $\alpha_i^\star$. Let $P = (\symvote{\alpha_1}, \dots, \symvote{\alpha_n})$ be the original profile and $P^\star = (\symvote{\alpha_1}, \dots, \symvote{\alpha_{i-1}}, \symvote{\alpha_i^\star}, \symvote{\alpha_{i+1}},  \dots, \symvote{\alpha_n})$ the profile with the modified vote of voter $i$.
        
        Suppose $\A(P^\star) = \symvote{\gamma^\star} \neq \symvote{\gamma}$, where the existence of $\gamma^\star$ is guaranteed by \cref{lem:truthfulness_half_space_line:symmetry}.
        We distinguish three cases.
        \begin{CaseTree}
            \item $\gamma^\star>\gamma$. A voter with preference $p(\alpha_i^\star)$ prefers the outcome when misreporting $p(\alpha_i)$, contradicting the truthfulness of the mechanism:
            \begin{align*}
                \ellone{\symvote{\alpha_i^\star}}{\A(P)} 
                & = \ellone{\symvote{\alpha_i^\star}}{\symvote{\gamma}}
                = 2(\gamma  - \alpha_i^\star)  \\
                & < 2(\gamma^\star - \alpha_i^\star)
                = \ellone{\symvote{\alpha_i^\star}}{\symvote{\gamma^\star}}
                = \ellone{\symvote{\alpha_i^\star}}{\A(P^\star)}.
            \end{align*}
            \item $\alpha_i \leq \gamma^\star < \gamma$. A voter with preference $p(\alpha_i)$ prefers the outcome when misreporting $p(\alpha_i^\star)$, contradicting the truthfulness of the mechanism:
            \begin{align*}
                \ellone{\symvote{\alpha_i}}{\A(P^\star)}
                &= \ellone{\symvote{\alpha_i}}{\symvote{\gamma^\star}} 
                = 2(\gamma^\star - \alpha_i) \\ 
                &< 2(\gamma - \alpha_i) 
                = \ellone{\symvote{\alpha_i}}{\symvote{\gamma}} 
                = \ellone{\symvote{\alpha_i}}{\A(P)}
            \end{align*}
            \item $\gamma^\star < \alpha_i$. Then, we can use the continuity of the mechanism to derive a contradiction, as follows. Since $\gamma^\star<\alpha_i\leq \alpha_i^\star\leq \gamma$, by continuity there must be an $\alpha_i'$ with $\alpha_i \le \alpha_i' \le \alpha_i^\star$
            such that $\A\big(\symvote{\alpha_1}, \dots, \symvote{\alpha_{i-1}}, \symvote{\alpha_i'}, \symvote{\alpha_{i+1}},  \dots, \symvote{\alpha_n}\big) = p(\alpha_i')$. But then we can use the argument from Case 2, with $\alpha_i^\star = \alpha_i'$
            and $\gamma^\star=\alpha_i'$, to arrive at a contradiction.
        \end{CaseTree}

        \item Suppose $\gamma < \min_{i \in [n]} \alpha_i$ (the case $\gamma > \max_{i \in [n]} \alpha_i$ can be shown analogously). Then, for any $i \in [n]$, we can replace the vote $\symvote{\alpha_i}$ with $\symvote{\frac{1}{m}}$ and know by \Cref{lem:truthfulness_half_space_line:const_aggregate} that $\symvote{\gamma} = \A(\symvote{\alpha_1}, \dots, \symvote{\alpha_n}) = \A(\multivoter{\symvote{\frac{1}{m}}}{n})$. By neutrality, $\A(\multivoter{\symvote{\frac{1}{m}}}{n}) = \symvote{\frac{1}{m}}$, a contradiction.\qedhere
    \end{enumerate}
\end{proof}

We now give the proof of \Cref{thm:truthful_uniform}.

\maintheoremtwo*

\begin{proof}[Proof of \Cref{thm:truthful_uniform}]
    We prove the theorem using the profile $P = (\multivoter{\symvote{\frac{1}{m}}}{\lceil\frac{n}{2}\rceil}, \multivoter{\symvote{1}}{\lfloor\frac{n}{2}\rfloor})$ and showing that any CTAN mechanism $\A$ outputs $\A(P) = \symvote{\frac{1}{m}}$.

    Suppose there is a CTAN mechanism $\A$ that outputs $\A(P) \neq \symvote{\frac{1}{m}}$.
    By \Cref{lem:truthfulness_half_space_line:symmetry} we know that $\A(P)$ can be written as $\symvote{\beta}$ for some $\beta \in [0,1]$. Then, $\beta \neq \frac{1}{m}$ by assumption and together with Lemma \ref{lem:truthfulness_half_space_line:min_max} we know that $\frac{1}{m} < \beta \leq 1$.
    
    We introduce another profile $P' = (\multivoter{\symvote{\frac{1}{m}}}{\lfloor\frac{n}{2}\rfloor}, \multivoter{\symvote{0}}{\lceil\frac{n}{2}\rceil})$, for which we know by \Cref{lem:truthfulness_half_space_line:symmetry} and \ref{lem:truthfulness_half_space_line:min_max} that $\A(P') = \symvote{\gamma}$ for some $0 \le \gamma \le \frac{1}{m}$. \Cref{fig:lower_bound_two_profiles} shows the profiles $P$ and $P'$ for the case $m=3$. We consider two cases and show a contradiction in each.
    
    \begin{figure*}[t!]
        \begin{subfigure}[t]{0.475\textwidth}
            \begin{tikzpicture}[scale=5.5]
                \plotSimplex
                \small
                \node[draw, circle, minimum size=20, fill=white, fill opacity=1] at (0.5,0.286) (callout){};
                \draw[color1, fill] (0.5,0.286) node[voter, fill, xshift=-2.5, yshift=3.75](center1){};
                \draw[color2] (0.5,0.286) node[voter, fill=white, xshift=2.5, yshift=-3.75](center2){};
                \node[color2, label, anchor=north west] at (center2) {$\multivoter{\symvote{\frac{1}{m}}}{\frac{n}{2}}$};    
                \node[color1, label, anchor=south east, xshift=1pt] at (center1) {$\multivoter{\symvote{\frac{1}{m}}}{\frac{n}{2}}$};    
                \draw[color1, fill] (0,0) node[voter, fill]{} node[label, anchor=south east]{$\multivoter{\symvote{1}}{\frac{n}{2}}$};
                \draw[color2] (0.75,0.433) node[voter, fill=white]{} node[label, anchor=south west]{$\multivoter{\symvote{0}}{\frac{n}{2}}$};    
            
                \draw[color1] (0.35*0.75,0.35*0.433) node[aggregate, fill]{} node[label, anchor=south east, xshift=-0.5pt]{$\A(P) = \symvote{\beta}$};    
                \draw[color2] (0.85*0.75,0.85*0.433) node[aggregate, fill=white]{} node[label, anchor=north west, xshift=-1pt]{$\A(P') = \symvote{\gamma}$};    
                 
            \end{tikzpicture}
            \caption{Profiles $P$ (orange, filled) and $P'$ (blue, outline) with their corresponding aggregates. \Cref{thm:truthful_uniform} shows that $\beta = \frac{1}{m}$.}
            \label{fig:lower_bound_two_profiles}
        \end{subfigure}%
        \hfill
        \begin{subfigure}[t]{0.475\textwidth}
            \begin{tikzpicture}[scale=5.5]
                \plotSimplex
                \small
                \node[draw, circle, minimum size=20, fill=white, fill opacity=1] at (0.5,0.286) (callout){};
                
                \draw[color1, fill] (callout) node[voter, fill, xshift=-4.1, yshift=2.9](center1){};
                \draw[color2] (callout) node[voter, fill=white, xshift=4.1, yshift=2.9](center2){};
                \node[color2, label, anchor=south west, xshift=-1pt] at (center2) {$\multivoter{\symvote{\frac{1}{m}}}{\frac{n}{2}}$};   
                \node[color1, label, anchor=south east, yshift=0pt, xshift=2pt] at (center1) {$\multivoter{\symvote{\frac{1}{m}}}{\frac{n}{2}}$};    
                \draw[color1, fill] (0.25*0.75,0.25*0.433) node[voter, fill]{} node[label, anchor=north east, xshift=-1pt, yshift=2pt]{$\multivoter{\symvote{\alpha}}{\frac{n}{2}}$};    
                \draw[color2] (0.5,0) node[voter, fill=white]{} node[label, anchor=south west]{$\multivoter{\hat{p}}{\frac{n}{2}}$};
    
                \draw[color1] (0.47*0.75,0.47*0.433) node[aggregate, fill]{} node[label, anchor=east, yshift=1.8pt, xshift=-2pt]{$\A(P^\star) = \symvote{\beta'}$};  
                \draw[color2] (callout) node[aggregate, fill=white, xshift=0, yshift=-5](center3){} node[label, anchor=north west, xshift=-2pt, yshift=-4pt]{$\A(\hat{P}) = \symvote{\frac{1}{m}}$};  
    
                \draw[dashed, ->, thin, shorten <= 5pt, shorten >= 5pt] (0.5,0) --(0.25*0.75,0.25*0.433);
            \end{tikzpicture}
            \caption{Profiles $P^\star = (\multivoter{\symvote{\frac{1}{m}}}{\frac{n}{2}}, \multivoter{\symvote{\alpha}}{\frac{n}{2}})$ (orange, filled) and $\hat{P} = (\multivoter{\symvote{\frac{1}{m}}}{\frac{n}{2}}, \multivoter{\hat{p}}{\frac{n}{2}})$ (blue, outline) with their corresponding aggregates. The parametric profile $\hat{P}_k$ transforms $\hat{P}_0 = \hat{P}$ into $\hat{P}_{\frac{n}{2}} = P^\star$ by moving a voter from, from $\hat{p}$ to $\symvote{\alpha}$ whenever $k$ increases.}
            \label{fig:lower_bound_parametric_profile}
        \end{subfigure}
        \caption{Visualization of the profiles used in the proof of \Cref{thm:truthful_uniform} for $m = 3$ and even $n$. Circles denote voter groups and triangles denote aggregates. All voters and aggregates in the white circle are positioned at $p(\frac{1}{m})$. \cref{fig:lower_bound_two_profiles} shows the profiles from Case 1 of the proof and \cref{fig:lower_bound_parametric_profile} the profiles of Case 2.}
    \end{figure*}
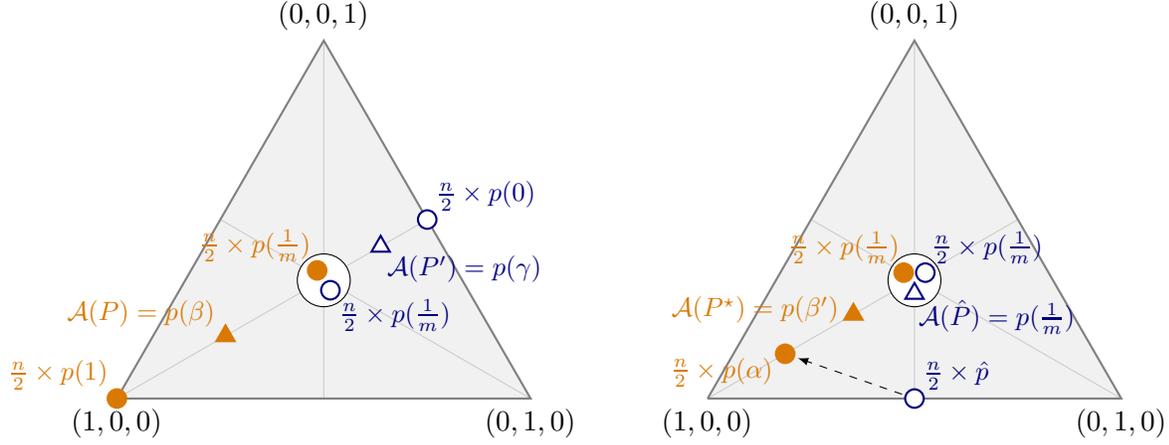

    \begin{CaseTree}
        \item It holds that $\gamma < \frac{1}{m}$.\\
        We make use of Lemma \ref{lem:truthfulness_half_space_line:const_aggregate} repeatedly to transform both $P$ and $P'$ into the same profile, without changing the output of the mechanism, which will give us a contradiction.
        
        Since $\A(P) = \A(\multivoter{\symvote{\frac{1}{m}}}{\lceil\frac{n}{2}\rceil}, \multivoter{\symvote{1}}{\lfloor\frac{n}{2}\rfloor}) = \symvote{\beta}$ and $\beta \notin \llbracket \frac{1}{m}, \gamma \rrparenthesis$, we can apply Lemma \ref{lem:truthfulness_half_space_line:const_aggregate} repeatedly to move all votes from $\symvote{\frac{1}{m}}$ to $\symvote{\gamma}$ without changing the aggregate, i.e. $\A(\multivoter{\symvote{\gamma}}{\lceil\frac{n}{2}\rceil}, \multivoter{\symvote{1}}{\lfloor\frac{n}{2}\rfloor}) = \symvote{\beta}$.
        We apply Lemma \ref{lem:truthfulness_half_space_line:const_aggregate} again to move all votes from $\symvote{1}$ to $\symvote{\beta}$ and get that $\A(\multivoter{\symvote{\gamma}}{\lceil\frac{n}{2}\rceil}, \multivoter{\symvote{\beta}}{\lfloor\frac{n}{2}\rfloor}) = \symvote{\beta}$ (since $\beta \notin \llbracket 1, \beta \rrparenthesis$).
    
        Similarly, for $P'$, we start from $\A(P') = \A(\multivoter{\symvote{\frac{1}{m}}}{\lfloor\frac{n}{2}\rfloor}, \multivoter{\symvote{0}}{\lceil\frac{n}{2}\rceil}) = \symvote{\gamma}$. We use Lemma \ref{lem:truthfulness_half_space_line:const_aggregate} to show that $\A(\multivoter{\symvote{\beta}}{\lfloor\frac{n}{2}\rfloor}, \multivoter{\symvote{0}}{\lceil\frac{n}{2}\rceil}) = \symvote{\gamma}$ (since $\gamma \notin \llbracket \frac{1}{m}, \beta \rrparenthesis$). Finally, knowing that $\gamma \notin \llbracket 0, \gamma \rrparenthesis$, we apply \Cref{lem:truthfulness_half_space_line:const_aggregate} again to get $$ \A(\multivoter{\symvote{\beta}}{\lfloor\tfrac{n}{2}\rfloor}, \multivoter{\symvote{\gamma}}{\lceil\tfrac{n}{2}\rceil}) \,\, =  \,\, \symvote{\gamma} \,\, \neq \,\, \symvote{\beta} \,\, = \,\, \A(\multivoter{\symvote{\gamma}}{\lceil\tfrac{n}{2}\rceil}, \multivoter{\symvote{\beta}}{\lfloor\tfrac{n}{2}\rfloor}),$$ contradicting anonymity. 

        \item It holds that $\gamma = \frac{1}{m}$.\\
        We now look at the profile $P''=(\multivoter{\symvote{\frac{1}{m}}}{\lceil\tfrac{n}{2}\rceil}, \multivoter{\symvote{0}}{\lfloor\tfrac{n}{2}\rfloor})$. If $n$ is even, this profile equals $P'$ and thus $A(P'') = A(P') = \symvote{\frac{1}{m}}$. If $n$ is odd, we can move one voter from $\symvote{0}$ to $\symvote{\frac{1}{m}}$ to construct $P''$ from $P'$ and get $A(P'') = A(P') = \symvote{\frac{1}{m}}$ by \cref{lem:truthfulness_half_space_line:const_aggregate}.

        By using continuity and the intermediate value theorem one can show that there exists some value $\alpha \in [\frac{1}{m},1]$ with the property that, for $P^\star = (\multivoter{\symvote{\frac{1}{m}}}{\lceil\tfrac{n}{2}\rceil}, \multivoter{\symvote{\alpha}}{\lfloor\tfrac{n}{2}\rfloor})$, it holds that $\A(P^\star) = \symvote{\beta'}$ with $\frac{1}{m} < \beta' < \frac{1}{m-1}$.\footnote{To see why, consider the function $f: [\frac{1}{m},1] \rightarrow [\frac{1}{m},1]$ defined as $f(x) = p^{-1}(\A(\multivoter{\symvote{\frac{1}{m}}}{\lceil\tfrac{n}{2}\rceil}, \multivoter{\symvote{x}}{\lfloor\tfrac{n}{2}\rfloor}))$, where the fact that $f$ is well-defined follows from \Cref{lem:truthfulness_half_space_line}. Moreover, $f(\frac{1}{m}) = \frac{1}{m}$ and $f(1) = \beta > \frac{1}{m}$ and the continuity of $\A$ implies that $f$ is continuous. Hence, by the intermediate value theorem we get that there exists $\alpha \in [\frac{1}{m},1]$ with $f(\alpha) \in [\frac{1}{m},\frac{1}{(m-1)}]$. Thus, $\alpha$ and $\beta' = f(\alpha)$ satisfy the statement of the claim.}

        We define the profile $\hat{P} = (\multivoter{\symvote{\frac{1}{m}}}{\lceil\tfrac{n}{2}\rceil}, \multivoter{\hat{p}}{\lfloor\tfrac{n}{2}\rfloor})$, where $\hat{p} = (\frac{1}{m-1}, \dots, \frac{1}{m-1}, 0)$. Note that this is, up to the order of the alternatives, the same profile as $P''$ and thus, by neutrality we know that $\A(\hat{P}) = \A(P'') = \symvote{\frac{1}{m}}$. We transform $\hat{P}$ into $P^\star$ in $\big\lfloor\frac{n}{2}\big\rfloor$ steps, while arguing how the output of $\A$ can change in each step. The profiles $P^\star$ and $\hat{P}$ are depicted in \Cref{fig:lower_bound_parametric_profile}.
    
        For $0 \le k \le \lfloor\tfrac{n}{2}\rfloor$, let $\hat{P}_k = (\multivoter{\symvote{\frac{1}{m}}}{\lceil\tfrac{n}{2}\rceil}, \multivoter{\hat{p}}{(\lfloor\tfrac{n}{2}\rfloor-k)}, \multivoter{\symvote{\alpha}}{k})$. Note that $\hat{P}_0 = \hat{P}$ and $\hat{P}_{\big\lfloor\tfrac{n}{2}\big\rfloor} = P^\star$. By truthfulness, we know for any $0 \le k \le \lfloor\tfrac{n}{2}\rfloor-1$ that $\ellone{\hat{p}}{\A(\hat{P}_k)} \le \ellone{\hat{p}}{\A(\hat{P}_{k+1})}$ and thus $\ellone{\hat{p}}{\A(\hat{P}_0)} \le \ellone{\hat{p}}{\A(\hat{P}_{\lfloor\tfrac{n}{2}\rfloor})}$. This is a contradiction, since we can compute the above $\ell_1$-distances as follows: \begin{align*}
            \ellone{\hat{p}}{\A(\hat{P}_{\big\lfloor\tfrac{n}{2}\big\rfloor})} 
            & = \ellone{\hat{p}}{\symvote{\beta'}} \\ 
            &= \left|\frac{1}{m-1} - \beta'\right| + \left|0 - \frac{1-\beta'}{m-1}\right| + (m-2) \left|\frac{1}{m-1} - \frac{1-\beta'}{m-1}\right| \\
            &= \frac{1}{m-1} - \beta' + \frac{1-\beta'}{m-1} + \frac{\beta' (m-2)}{m-1} = 2\frac{1 - \beta'}{m-1}\\
            &< 2\frac{1 - \frac{1}{m}}{m-1} = \frac{2}{m} = \left|0 - \frac{1}{m}\right| + (m-1) \left|\frac{1}{m-1} - \frac{1}{m}\right| \\
            &= \ellone{\hat{p}}{\symvote{\frac{1}{m}}} = \ellone{\hat{p}}{\A(\hat{P}_0)}.
        \end{align*}
    \end{CaseTree}   
    This concludes the prove of the second case. 

    We have shown that any CTAN mechanism outputs $\symvote{\frac{1}{m}}$ for the profile $P$. We now verify that this indeed yields the theorem statement. The \textsc{Mean} mechanism outputs $\Amean(P) = \symvote{\frac{m+1}{2m}}$ for even $n$ and $\Amean(P) = \symvote{\frac{m+1}{2m} - \frac{m-1}{2mn}}$ for odd $n$. This yields an $\ell_1$-distance of
    $$
    \ellone{\symvote{\frac{1}{m}}}{\mu(P)} =
    \begin{cases}
        2 \big|\frac{m+1}{2m} - \frac{1}{m}\big| = \frac{m-1}{m} & \text{ if $n$ is even}\\
        2 \big|\frac{m+1}{2m} - \frac{m-1}{2mn} - \frac{1}{m}\big| = \frac{m-1}{m}\cdot\frac{n-1}{n} & \text{ if $n$ is odd}
    \end{cases}
    $$
    and an $\ell_\infty$-distance of 
    $$
    \ellinf{\symvote{\frac{1}{m}}}{\mu(P)} =
    \begin{cases}
        \big|\frac{m+1}{2m} - \frac{1}{m}\big| = \frac{m-1}{2m} & \text{ if $n$ is even}\\
        \big|\frac{m+1}{2m} - \frac{m-1}{2mn} - \frac{1}{m}\big| = \frac{m-1}{2m}\cdot\frac{n-1}{n} & \text{ if $n$ is odd}.
    \end{cases}
    $$
    This concludes the proof. 
\end{proof}

\end{document}